 \theoremstyle{definition}
  \newtheorem{example}{\protect\examplename}
   \theoremstyle{plain}
  \newtheorem{assumption}{\protect\assumptionname}
\theoremstyle{plain}
\newtheorem{thm}{\protect\theoremname}
\theoremstyle{plain}
\newtheorem{cor}{\protect\corollaryname}
  \theoremstyle{plain}
  \newtheorem{rem}{\protect\remarkname}
    \theoremstyle{plain}
  \newtheorem{lem}{\protect\lemmaname}
    \theoremstyle{plain}
\definecolor{darkblue}{rgb}{0, 0, 0.5}
  \providecommand{\assumptionname}{Assumption}
  \providecommand{\examplename}{Example}
\providecommand{\theoremname}{Theorem}
\providecommand{\remarkname}{Remark}
\providecommand{\corollaryname}{Corollary}
  \providecommand{\lemmaname}{Lemma}
    \providecommand{\algorithmname}{Algorithm}
\newcommand{\indep}{\perp \!\!\! \perp}
\global\long\def\hmu{\hat{\mu}}
\global\long\def\hSigma{\hat{\Sigma}}
\global\long\def\hbeta{\hat{\beta}}
\global\long\def\RR{\mathbb{R}}
\global\long\def\TT{\mathbb{T}}
\global\long\def\tX{\tilde{X}}
\global\long\def\tZ{\tilde{Z}}
\global\long\def\tY{\tilde{Y}}
\global\long\def\htau{\hat{\tau}}
\global\long\def\hw{\hat{w}}
\global\long\def\tX{\tilde{X}}
\global\long\def\betamix{\beta_{{\rm mix}}}
\global\long\def\hw{\hat{w}}
\global\long\def\bbeta{\bar{\beta}}
\global\long\def\RR{\mathbb{R}}
\global\long\def\oneb{\mathbf{1}}%
\global\long\def\Mcal{\mathcal{M}}%
\global\long\def\bxi{\bar{\xi}}%
\global\long\def\Tcal{\mathcal{T}}%
\global\long\def\tw{\tilde{w}}%
\global\long\def\bw{\bar{w}}%
\begin{document}

\title{Debiasing and $t$-tests for synthetic control inference on average causal effects\footnote{This paper was previously circulated as ``Inference on average treatment effects in aggregate panel data settings'', ``Practical and robust $t$-test based inference for synthetic control and related methods'', and ``A $t$-test for synthetic controls''.  We are grateful to the Editor (Azeem Shaikh), anonymous referees, Bruno Ferman, Jerry Hausman, Guido Imbens, Anna Mikusheva, Yixiao Sun, Stefan Wager, and many seminar and conference participants for valuable comments.
We are grateful to Kathleen Li for sharing the code to implement their subsampling approach \citep{li2020statistical} and answering our questions regarding some implementation details.
K.W.\ is also affiliated with CESifo. V.C.\ gratefully acknowledges funding by the National Science Foundation. The usual disclaimer applies.}}

\author{Victor Chernozhukov\thanks{Department of Economics and Center for Statistics and Data Science, Massachusetts Institute of Technology. Email: \url{vchern@mit.edu}} \quad \quad Kaspar W\"{u}thrich\thanks{Department of Economics, University of Michigan, 238 Lorch Hall, 611 Tappan Ave, Ann Arbor, MI 48109-1220, USA. Email: \url{kasparwu@umich.edu}} \quad \quad Yinchu Zhu\thanks{Brandeis University; 415 South Street, Waltham, MA 02453, USA. Email: \url{yinchuzhu@brandeis.edu}}}

\date{First version on arXiv: December 27, 2018\quad This version: \today}

\maketitle

\begin{abstract}
We propose a practical and robust method for making inferences on average treatment effects estimated by synthetic controls. We develop a $K$-fold cross-fitting procedure for bias correction. To avoid the difficult estimation of the long-run variance, inference is based on a self-normalized $t$-statistic, which has an asymptotically pivotal $t$-distribution. Our $t$-test is easy to implement, provably robust against misspecification, and valid with stationary and non-stationary data. It demonstrates an excellent small sample performance in application-based simulations and performs well relative to other methods. We illustrate the usefulness of the $t$-test by revisiting the effect of carbon taxes on emissions. 

\medskip

\noindent \textbf{Keywords:} bias correction, cross-fitting, self-normalization, average treatment effect

\end{abstract}

\maketitle

\newpage

\section{Introduction}

We propose a simple and robust $t$-test for making inferences on average effects in aggregate panel data settings.  The $t$-test is based on a novel debiased synthetic control (SC) estimator.\footnote{The original SC method was introduced by \citet{abadie2003economic} and further developed by \citet{abadie10sc,abadie15sc}. See \citet{abadie2021using} for a review.}  
The $t$-test is easy to use. For example, one can obtain $(1-\alpha)$-confidence intervals as 
\begin{equation}
\text{\texttt{debiased SC estimate}}~\pm ~t(1-\alpha/2) \times \text{\texttt{standard error}},\label{eq:ci_intro}
\end{equation}
where $t(1-\alpha/2)$ is the $(1-\alpha/2)$ quantile of a $t$-distribution. Implementing the $t$-test only requires computing SC weights, which can be obtained from existing SC software packages.\footnote{The $t$-test is implemented in the \texttt{R}-package \texttt{scinference}  (\url{https://github.com/kwuthrich/scinference}).}

We consider a setting with one treated unit (labeled $i=0$), which is untreated for the first $T_0$ periods and treated for the remaining $T_1$ periods, and $N$ untreated control units (labeled $i=1,\dots,N$). Let $Y_{0t}(1)$ and $Y_{0t}(0)$ denote the (random) potential outcomes of the treated unit with and without the treatment. Our object of interest is the average treatment effect on the treated unit (ATT) in the post treatment period,
\[
\tau = \frac{1}{T_1}\sum_{t=T_0+1}^{T_0+T_1}\left(Y_{0t}(1)-Y_{0t}(0)\right).
\]
The ATT provides an interpretable and easy-to-communicate summary measure of the effect of the treatment. Average treatment effects on the treated units are popular target parameters in many causal inference problems because of their direct policy relevance. Like much of the SC literature,  we focus on settings with only one treated unit and many treated periods. In such settings, the average effect for the treated unit over time captures the impact of the treatment rather than the average effect across many units. We also show how to modify the proposed $t$-test to make inferences on the expected effect for the treated unit, $E\left(Y_{0t}(1)-Y_{0t}(0)\right)$, when researchers are willing to restrict treatment effect heterogeneity by imposing stationarity and weak dependence of the treatment effect sequence.

Many existing inference methods for SC with a single treated unit focus on sharp null hypotheses, such as the null hypothesis of no effect whatsoever, and per-period effects \citep[e.g.,][]{abadie10sc,firpo18synthetic,cattaneo2021prediction,chernozhukov2021exact}. As pointed out by \citet[][p.81]{imbens2015causal}, sharp null hypotheses are ``a very useful starting point, prior to any more sophisticated analysis,'' but rejecting sharp null hypotheses is ``not sufficient to inform policy decisions.'' Per-period effects provide useful information about the effect dynamics. But while we can make inferences about per-period effects, they cannot be consistently estimated when there is only one treated unit, and the resulting confidence intervals can be wide and uninformative. Moreover, reporting inferences on many per-period effects might not be an effective way to communicate the overall impact of the treatment when there are many treated periods, and often an interpretable one-number summary is called for. The ATT provides such a summary, is a natural inferential target in empirical work, and admits confidence intervals that take the ``standard'' form \eqref{eq:ci_intro}. By contrast, the existing methods for inference on sharp nulls and per-period effects are typically based on permutation or bootstrap approaches.

The main inferential challenge in SC applications is that $Y_{0t}(0)$ is unobserved for $t>T_0$. SC approximates this unobserved counterfactual using a weighted average of control outcomes, $\sum_{i=1}^{N}\hat{w}_iY_{it}$. The weights $(\hat{w}_1,\dots,\hat{w}_{N})$ are estimated based on the pre-treatment data and constrained to be positive and add up to one. 

The natural SC estimator of the ATT is
\begin{equation}
\htau^{\text{SC}} = \frac{1}{T_1}\sum_{t=T_0+1}^{T_0+T_1}\left(Y_{0t}-\sum_{i=1}^{N}\hat{w}_iY_{it}\right).\label{eq:naive_estimator}
\end{equation}
In many SC applications, $T_0$ is smaller than or comparable to $N$, and the data exhibit persistence and dynamics. In such settings, there are two major challenges when using $\htau^{\text{SC}} $ to make inferences about $\tau$. First, $\htau^{\text{SC}} $ is biased due the error from estimating the high-dimensional (relative to $T_0$) weights, even under correct specification\footnote{In our context, ``correct specification'' refers to settings where the population SC estimator is unbiased.}, and the bias can be substantial under misspecification (Figure \ref{fig:bias}).\footnote{The constraints imposed by SC amount to  $\ell_1$-regularization in estimating the weights. This essentially shrinks unbiased estimates and therefore induces bias. In general, regularization aims to strike a balance in the bias-variance tradeoff; see textbooks such as \citet{hastie2009elements}. Section 2.2 of the seminal paper by  \citet{tibshirani96regression} offers  a detailed explanation in the case of $\ell_1$-regularization. A more modern explanation of the regularization bias can be found in Section 1 of \citet{chernozhukov2018double}.} This bias of $\htau^{\text{SC}}$ precludes the application of standard inference procedures.

Second, even in the ideal case where the true SC weights are known, estimating standard errors is difficult since they depend on the long-run variance (LRV). It is well-known that classical LRV estimators \citep[e.g.,][]{newey1987simple,Andrews1991} are not accurate enough for reliable inference in small samples. Figure \ref{fig:undercoverage} shows that using the popular Newey-West standard errors yields substantial undercoverage, especially when $T_0$ is small.\footnote{As pointed out in \citet{muller2007theory}, any consistent estimator of LRV might suffer from robustness issues.} 

We develop an inference method that addresses both challenges and is motivated by an asymptotic framework where $T_0,T_1,N\rightarrow \infty$. To remove the bias of $\htau^{\text{SC}}$, we propose a $K$-fold cross-fitting procedure. To avoid the difficult LRV estimation, we propose a self-normalized $t$-statistic with an asymptotically pivotal $t$-distribution with $K-1$ degrees of freedom, which makes our method easy to implement. Self-normalization further induces theoretical higher-order improvements and yields excellent small sample properties in our simulations. Such higher-order improvements and excellent small sample properties are important for a method relying on asymptotic approximations since $T_0$ and $T_1$ are often not very large in SC applications. 

Figure \ref{fig:bias} shows that the distribution of the debiased estimator is centered at the true value ($\tau=0$) and well-approximated by a normal distribution, even under misspecification. Figure \ref{fig:undercoverage} shows that our $t$-test exhibits an excellent coverage accuracy, despite the small sample sizes. 

\begin{figure}[ht]
\caption{Bias of synthetic control and impact of debiasing}
	\label{fig:bias}
	\begin{center}
		\includegraphics[width=0.4\textwidth,trim = {0 1cm 0 2cm}]{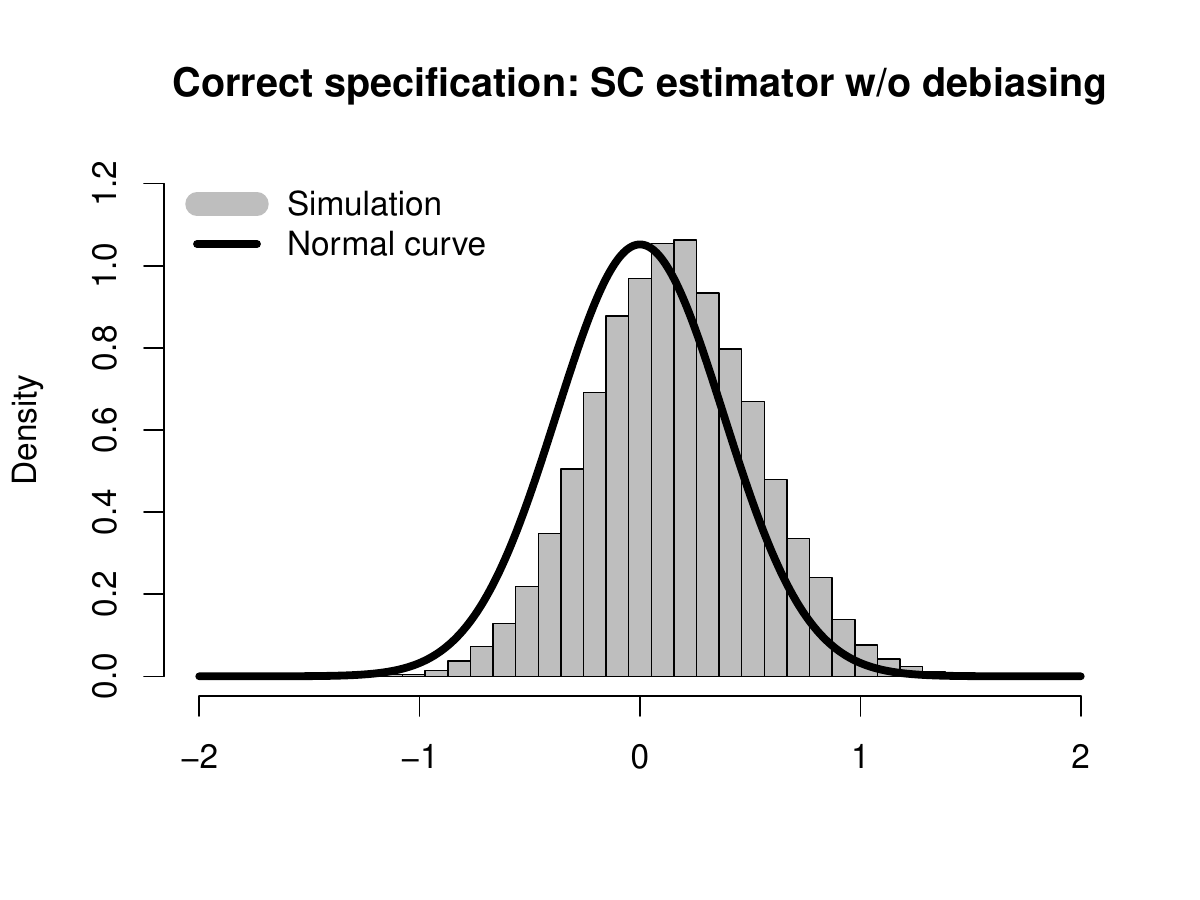}
		\includegraphics[width=0.4\textwidth,trim = {0 1cm 0 2cm}]{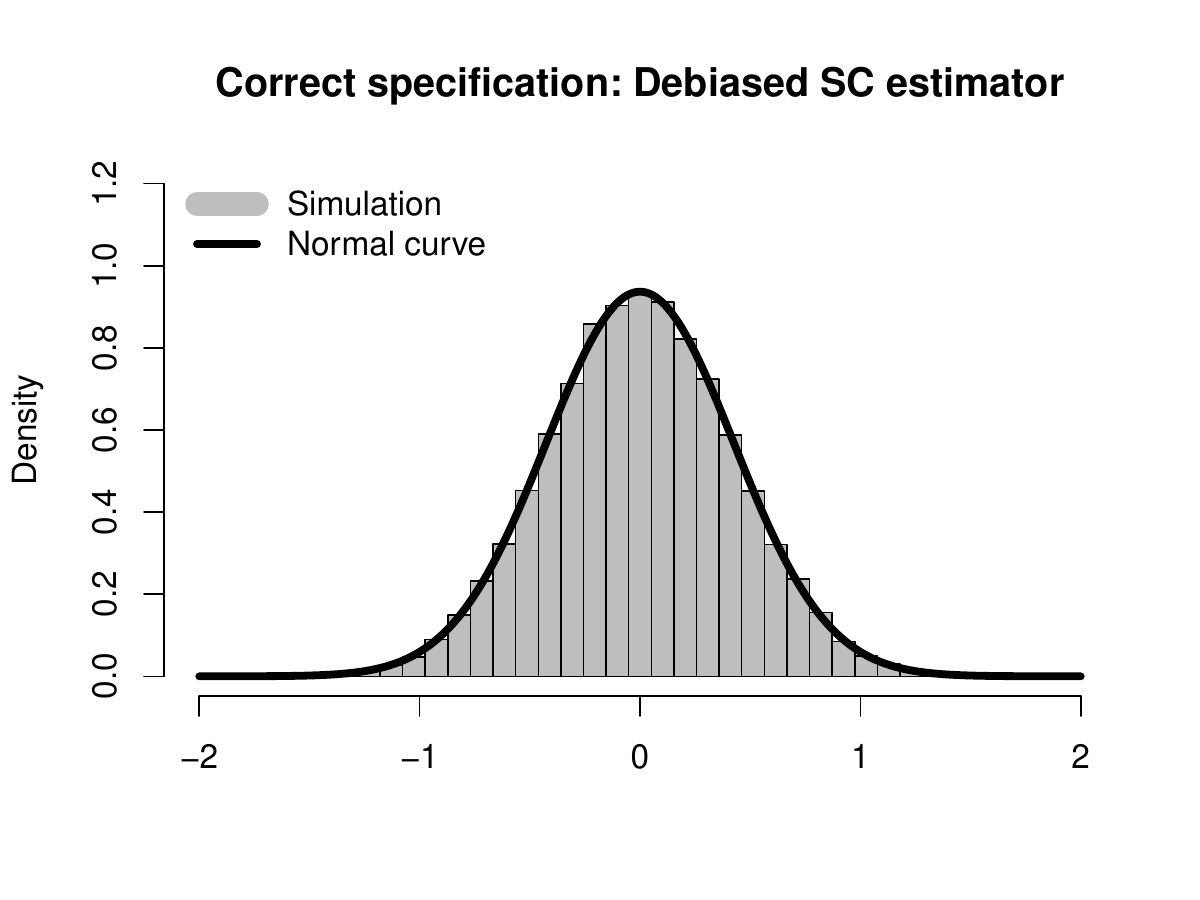}
		\includegraphics[width=0.4\textwidth,trim = {0 1.75cm 0 1cm}]{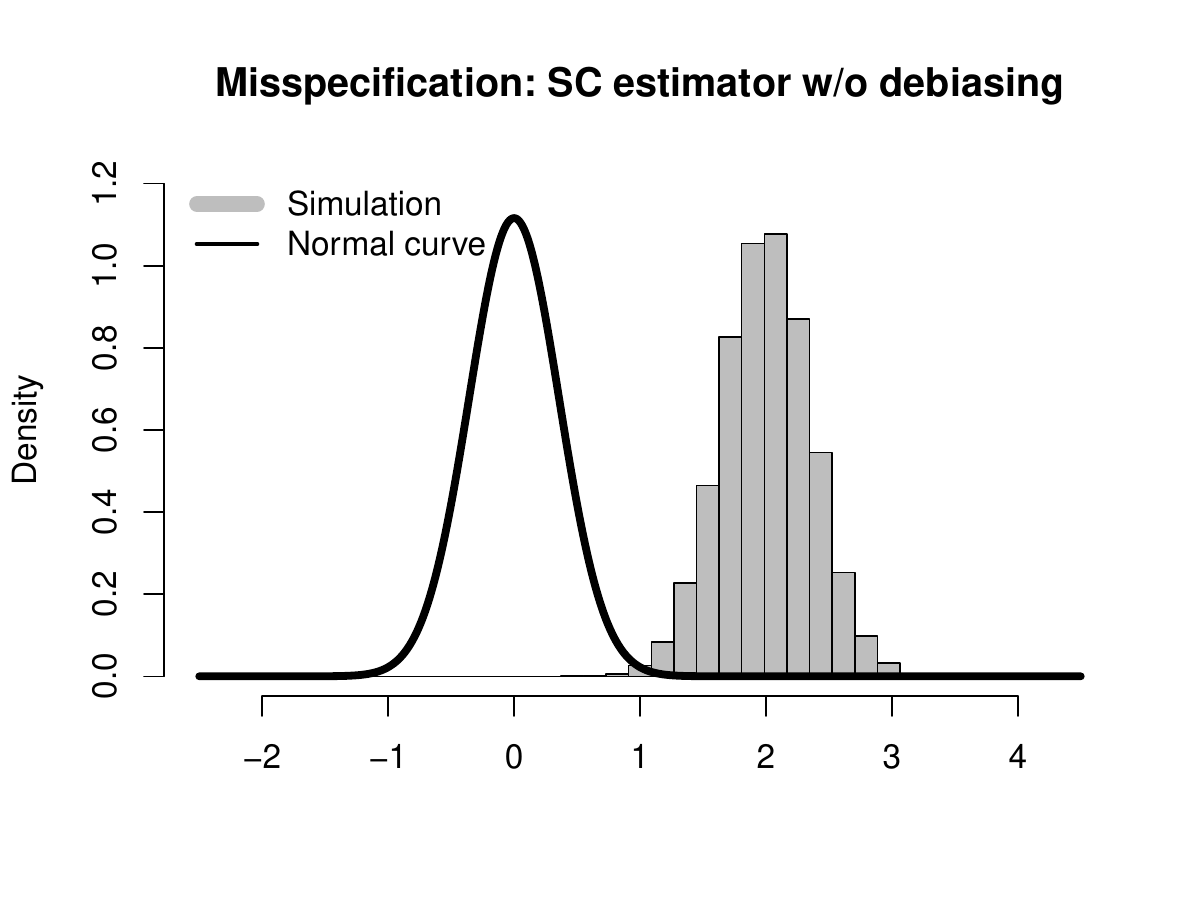}
		\includegraphics[width=0.4\textwidth,trim = {0 1.75cm 0 1cm}]{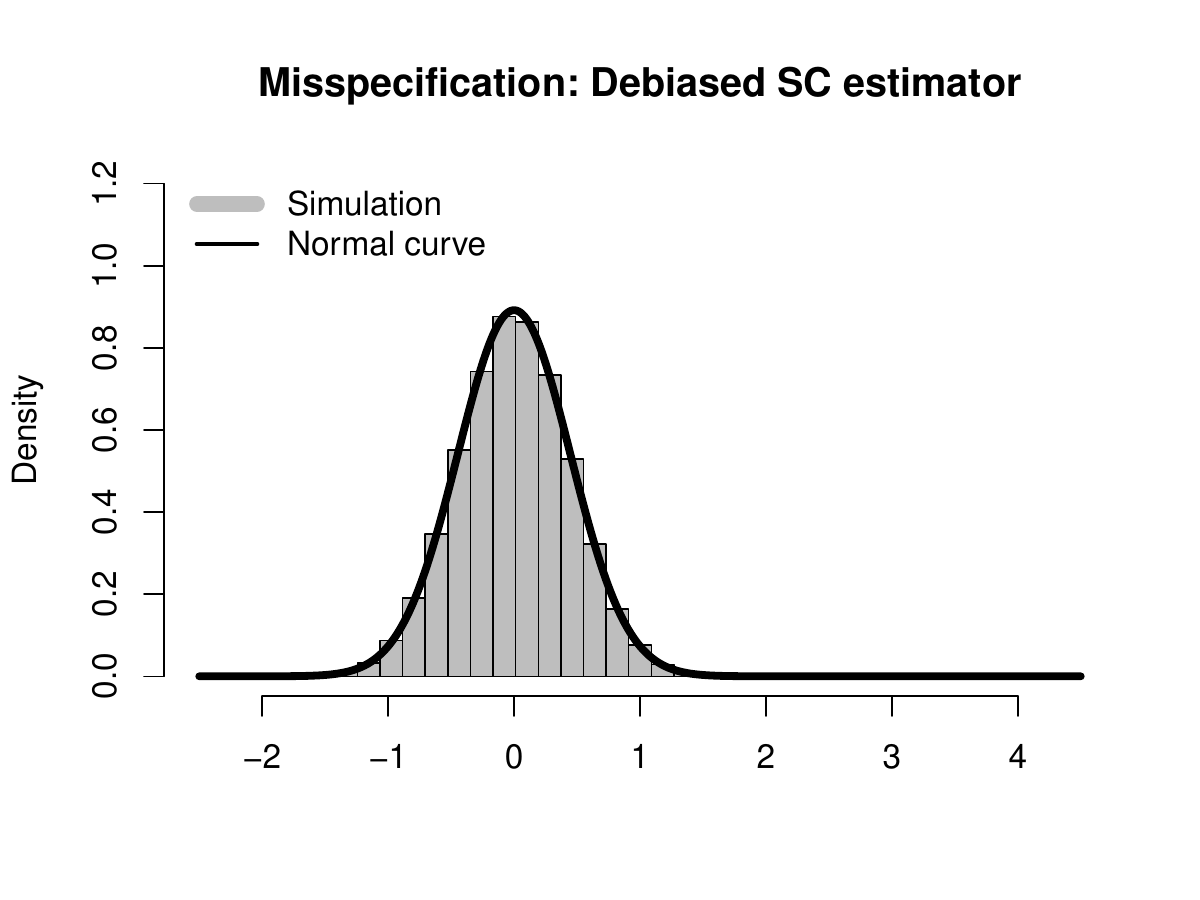}
	\end{center}
\scriptsize{\textit{Notes:} Simulations with 50,000 repetitions. $Y_{0t}(0)=\mu+\sum_{i=1}^Nw_iY_{it}(0)+u_t$, $Y_{it}(0)=2\cdot 1\{i\le 3\} + v_{it},v_{it}\overset{iid}\sim N(0,1)$, $\{u_t\}$ is a Gaussian AR(1) process with coefficient $0.31$, and $(T_0,T_1,N)=(30,16,14)$ as in the empirical application. Correct specification: $(\mu,w)=\left(0,(1/3,1/3,1/3,0\dots,0)'\right)$. Misspecification: $(\mu,w)=\left(2,(1/3,1/3,1/3,0\dots,0)'\right)$. $Y_{0t}(1)-Y_{0t}(0)=0$ for $t>T_0$. Debiasing based on $K=3$.}	
\end{figure}

\begin{figure}
\caption{Undercoverage with Newey-West standard errors}
	\label{fig:undercoverage}
	\begin{center}
  		\includegraphics[width=0.4\textwidth,trim = {0 1cm 0 2cm}]{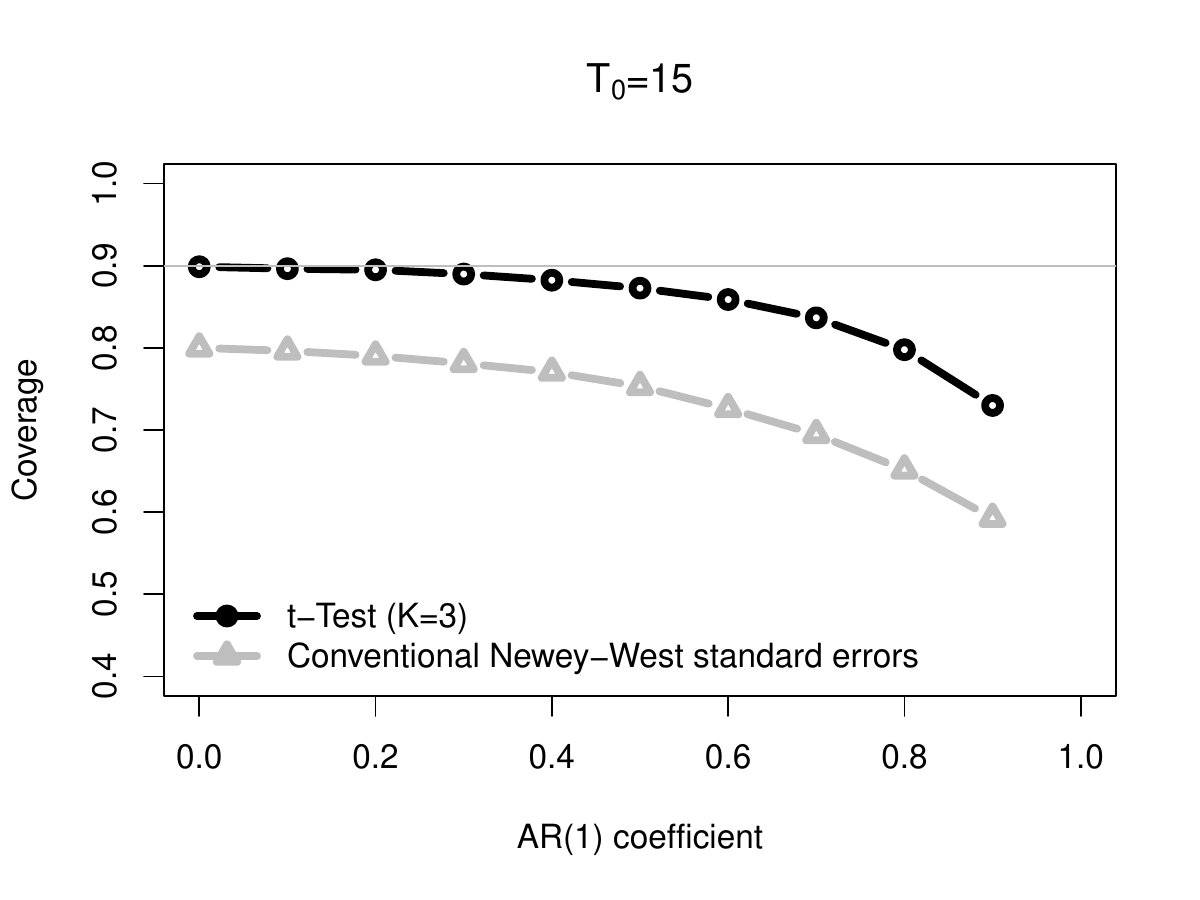}
      	\includegraphics[width=0.4\textwidth,trim = {0 1cm 0 2cm}]{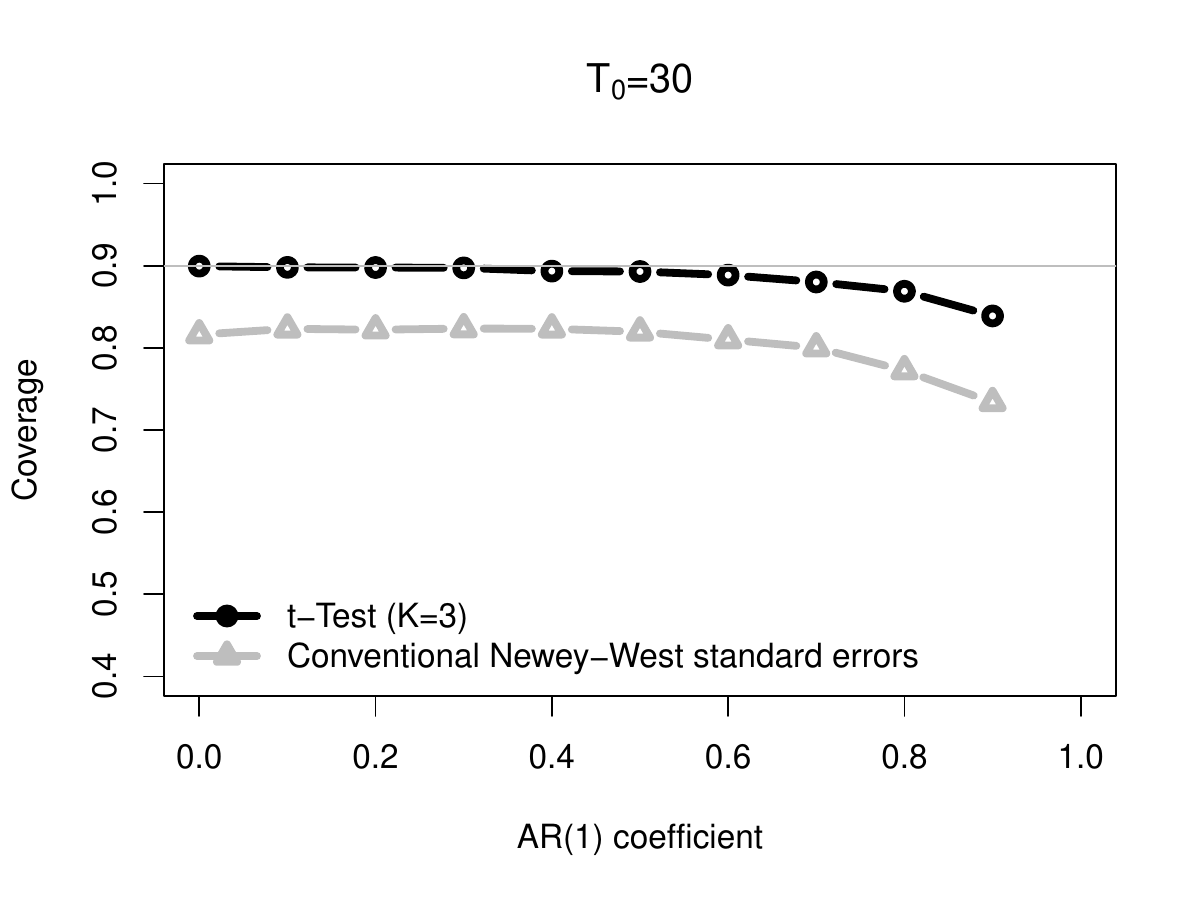}
	\end{center}
\scriptsize{\textit{Notes:} Simulations with 50,000 repetitions. Nominal coverage: 90\%. $Y_{0t}(0)=\sum_{i=1}^Nw_iY_{it}(0)+u_t$, $\{u_t\}$ is a Gaussian AR(1) process, $Y_{0t}(1)-Y_{0t}(0)=0$ for $t>T_0$, and $(T_1,N)=(16,14)$ as in the empirical application. We focus on the Oracle case where $w$ is known to abstract from the estimation error in $\hat{w}$. The LRV is estimated based on the pre-treatment data using the \texttt{NeweyWest} command (\texttt{R}-package \texttt{sandwich} \citep{sandwich2004,sandwich2020}) with pre-whitening.}	
\end{figure}

The $t$-test is valid with stationary and non-stationary data. As a result, researchers do not need to pre-test for nonstationarity and choose the inference method according to the pre-test. This is important since nonstationarity is a common feature of SC applications. Existing methods typically rely on assumptions on the type of nonstationarity (e.g., unit root or deterministic trend) \citep[e.g.,][]{li2020statistical}. However, it is well-known that even the slightest misspecification in modeling nonstationarity (e.g., unit root vs.\ near unit root) yields invalid inferences. Therefore, we take a different approach and do not make assumptions about the exact nature of the nonstationarity. Instead, we restrict the heterogeneity in the nonstationarity across units. That is, we require the units to be sufficiently similar, which is a crucial contextual requirement in SC studies \citep{abadie2021using}. Avoiding assumptions about the exact type of nonstationarity is important because one of the main reasons for using SC is that the precise nature of the counterfactual trend is unknown. 

With stationary data, the $t$-test is valid under arbitrary misspecification. With non-stationary data, we establish the validity of the $t$-test under two different settings. First, the $t$-test is valid under arbitrary misspecification when all units share a common nonstationarity. Second, the $t$-test remains valid when units deviate from the common nonstationarity under restrictions on the magnitude and heterogeneity of the deviations. The second result covers many relevant types of nonstationarity, such as heterogeneous deterministic time trends and certain forms of cointegration, but requires SC to be correctly specified. 

A by-product of our method is improved asymptotic efficiency over difference-in-differences (DID). We show that the asymptotic variance of the debiased SC estimator is no larger than that of DID, irrespective of whether the SC model is correctly specified or not. Moreover, the $t$-test is valid when the common trends assumption underlying DID is violated and thus also more robust than DID.

The proposed debiasing strategy and $t$-test are generic and can be applied in conjunction with many other (than SC) estimators of the ATT. Our focus on SC estimators is motivated by their popularity and tractability. However, the required conditions, including $\ell_2$-consistency of the weights $(\hat{w}_1,\dots,\hat{w}_{N})$, can be verified for a broad class of estimators  available in the literature. See Remark \ref{rem:generic_Result} for further discussions.

We illustrate the usefulness of our method by revisiting \citet{andersson2019carbon}'s analysis of the effect of a carbon tax on emissions between 1990 and 2005 in Sweden. In this application, the ATT captures the average effect of the carbon tax on emissions  --- a one-number summary of the overall effect that \citet[][p.14]{andersson2019carbon} explicitly mentions. The $t$-test provides robust confidence intervals for this parameter. The estimated ATT is negative and significant. Our findings complement and corroborate the inference results in \citet{andersson2019carbon}, which are based on the permutation test of \citet{abadie10sc}.

The $t$-test demonstrates an excellent small sample performance in  simulations calibrated based on the empirical application and performs well relative to existing alternatives such as DID, subsampling \citep{li2020statistical}, and synthetic DID (SDID) \citep{arkhangelsky2021synthetic}. 

Finally, we provide recommendations for practice. We discuss when to use the $t$-test and what to consider when implementing it in applications.

\subsection{Related literature}
\label{sec:literature}
We contribute to the literature by proposing methods for making inferences on the ATT and expected treatment effects based on SC in settings with one or few treated units. The closest papers to ours are \citet{li2020statistical} and \citet{arkhangelsky2021synthetic}. Here we compare and contrast the proposed $t$-test with their contributions. See Section \ref{sec:simulations} for a simulation comparison and Section \ref{sec:guide} for guidance on which method to choose in applications.

\citet{li2020statistical} proposes a subsampling approach to inference in SC settings with a small number of controls ($N$ is assumed to be fixed).\footnote{Here we describe \citet{li2020statistical}'s method in the context of the SC estimator $\htau^{\text{SC}}$. We note that their theory also covers more general SC estimators, such as SC estimators without the adding-up constraint.} They derive the asymptotic distribution of the SC estimator of $E(\alpha_t)$, where $\alpha_t=Y_{i0}(1)-Y_{i0}(0)$, using a projection framework with stationary, trend-stationary, and unit root data. For inference, they propose a subsampling procedure. They note that 
$\sqrt{T_1}(\htau^{\text{SC}} -E(\alpha_t))$ can be decomposed into two parts, $A_1$ and $A_2$. Since only $A_1$ depends on the estimated SC weights, subsampling is only required for this part, and $A_2$ can be approximated using the bootstrap. \citet{li2020statistical} establishes the validity of this procedure for the case where the SC prediction errors and $\{\alpha_t-E(\alpha_t)\}$ are serially uncorrelated. When the data are trend-stationary, they recommend detrending before applying the inference method.

Our $t$-test differs from \citet{li2020statistical} in several important aspects. We allow for a large number of control units ($N$ can grow with $(T_0,T_1)$) and  explicitly correct for the bias of SC using a cross-fitting approach. This leads to a self-normalized $t$-statistic and an inference procedure with higher-order improvements that does not require pre-processing the data to make them stationary. The $t$-test also avoids subsampling, which may not perform well in small samples.

\citet{arkhangelsky2021synthetic} propose synthetic DID (SDID), which combines ideas from DID and SC. In our notation, their estimator of the ATT, $\htau^{\text{SDID}}$, is given by 
\begin{equation}
(\htau^{\text{SDID}},\hat\mu,\hat\alpha,\hat\beta)=\underset{\tau,\mu,\alpha,\beta}{\arg\min}\sum_{i=0}^N\sum_{t=1}^{T_0+T_1}(Y_{it}-\mu-\alpha_i-\beta_t-D_{it}\tau)^2\hat{w}^{\text{SDID}}_i\hat\lambda^{\text{SDID}}_t,\label{eq:sdid}
\end{equation}
where $D_{it}$ is the treatment indicator, $\hat{w}^{\text{SDID}}$ are unit weights that balance the pre-treatment trends of treated and control units, and $\hat{\lambda}^{\text{SDID}}$ are time weights that balance the pre- and post-treatment periods. SDID thus adds time weights and unit fixed effect to the SC method. Both of these additions help reduce the bias of standard SC and improve robustness. \citet{arkhangelsky2021synthetic} establish consistency and asymptotic normality of SDID under a latent factor model for the potential outcomes. To make inferences with one treated unit, they build on ideas in \citet{conley2011inference} to propose a placebo variance estimator that relies on homoskedasticity across units.\footnote{An appealing feature of SDID is that it naturally accommodates multiple treated units. For such settings, \citet{arkhangelsky2021synthetic} propose a bootstrap and a jackknife method for inference.}

Our $t$-test differs from SDID with respect to the underlying model, the debiasing strategy, and the inference method. The $t$-test is developed based on a linear prediction model, which can be motivated by but does not rely on factor models. We use a cross-fitting approach that directly removes the bias if the bias is stable over time. To make inferences, we exploit stationarity over time and build on the cross-fitting structure to construct a self-normalized $t$-statistic that avoids estimating the LRV, is theoretically valid across a variety of settings, and enjoys higher-order improvements. 

Throughout the paper, we focus on SC estimators of the counterfactual. However, our method only requires $\ell_2$-consistency of the estimator of the weights, which can be established for many penalized regression estimators. Therefore, our paper further contributes to the literature on inference methods based on standard and penalized regression estimators motivated by asymptotics where $T_0,T_1\rightarrow \infty$. Building on the framework of \citet{hsiao2012panel}, \citet{li2017estimation} propose a least squares method for making inferences on the expected treatment effect based on estimators of the LRV when the data are stationary. \citet{carvalho2018arco} propose a Lasso-based method for making inferences on the ATT based on LRV estimators under sparsity when the data are stationary.\footnote{In Section 4.1, they consider an extension to trending regressors but do not provide inference methods for high-dimensional settings.} \citet{masini2020counterfactual} study the asymptotic distribution of counterfactual estimators based on least squares when the data are non-stationary. They show that the limiting distribution is generally nonstandard and depends on $T_0/T_1$. They propose a subsampling method for inference when $T_0\approx T_1$. 
Compared to this strand of the literature, our $t$-test is generic in that it accommodates a wide range of penalized and unpenalized regression estimators, is valid under general forms of nonstationarity, does not rely on estimating the LRV, and enjoys higher-order improvements. Moreover, the debiased ATT estimator is asymptotically normal across all settings we consider, which, together with the self-normalized $t$-statistic, makes the proposed method easy to implement.

Focusing on average effects over time and expected effects, we complement the existing procedures for testing sharp null hypotheses about the treatment effect trajectory $\{Y_{0t}(1)-Y_{0t}(0)\}_{t=T_0+1}^{T_0+T_1}$ and making inferences on per-period effects \citep[e.g.,][]{abadie10sc,firpo18synthetic,cattaneo2021prediction,benmichael2021augmented,chernozhukov2021exact,masini2021jasa,shaikh2021randomization}.\footnote{The time series permutation approach of \citet{chernozhukov2021exact} can be extended to test hypotheses about the ATT by collapsing the data across time, provided that $T_0\gg T_1$. More recently, \citet{cattaneo2023uncertainty} extended the method in \citet{cattaneo2021prediction} to accommodate more general treatment effects, including ATT over time and across units in staggered adoption designs.} Within that strand of literature, our paper is most closely related to  \citet{benmichael2021augmented}, given our focus on bias correction. \citet{benmichael2021augmented} propose a debiasing approach based on an outcome model and suggest using a conformal prediction approach for making inferences on per-period effects, building on \citet{chernozhukov2021exact}. Finally, it is worthwhile noting that the $t$-test relies on a sampling-based framework in which the potential outcomes are viewed as random, whereas some of the approaches for testing sharp null hypotheses \citep[e.g.,][]{abadie10sc,firpo18synthetic} are motivated from a design-based perspective.

\subsection{Notation} Let $\mathbf{1}_p$ denote a $p\times 1$ dimensional vector of ones. For $q \geq  1$, we denote the $\ell_{q}$-norm of a vector as $\|\cdot\|_{q}$.  For a matrix $A$, we denote $\|A\|_{\infty}=\|{\rm vec}(A)\|_{\infty}$, where ${\rm vec}(A)$ is the column-wise vectorization of $A$.  We write $a\lesssim b$ to
denote $a \leq  cb$ for some constant $c > 0$ that does not depend on the sample size. We
write $a \asymp b$ to denote $a\lesssim b$ and $b\lesssim a$. For a set $A$, $|A|$ is the cardinality of $A$. 

\section{Debiasing and $t$-tests for SC inference on the ATT}
\label{sec:methodology}

\subsection{Setup and object of interest}

We consider a synthetic control setup with one treated unit, $N$ control units, and $T$ periods \citep[e.g.,][]{abadie10sc,DI16,kellogg2021combining}. The treated unit is untreated for the first $T_0$ periods and treated for the remaining $T-T_0=T_1$ periods. The control units remain untreated throughout. The potential outcomes with and without the treatment are $Y_{it}(1)$ and $Y_{it}(0)$, respectively. In the main text, we assume that the treatment status is fixed and label the treated unit as $i=0$ and the control units as $i=1,\dots,N$. In Appendix \ref{app:random_treatment}, we provide sufficient conditions under which the $t$-test remains valid when the treatment status is random and selection is based on past outcomes or latent variables in factor models.  Observed outcomes are given by $Y_{it}=Y_{it}(0)+\alpha_{it}\mathbf{1}\{i=0,t>T_0\}$, where  $\alpha_{it}:=Y_{it}(1)-Y_{it}(0)$ is the treatment effect for unit $i$ in period $t$. 

The existing inference approaches for SC and related methods with one (or few) treated units differ regarding the identification assumptions they rely on and what is assumed to be random and fixed, respectively. We consider a setting where the potential outcomes are random and the treatment effect sequence can be either random or fixed.\footnote{Similar settings are considered by \citet{li2020statistical}, \citet{arkhangelsky2021synthetic}, \citet{benmichael2021augmented}, \citet{cattaneo2021prediction}, \citet{chernozhukov2021exact}, \citet{ferman2021properties}, \citet{ferman2021synthetic}, and \citet{cattaneo2023uncertainty}. An alternative and complementary strand of the literature focuses on design-based inference, treating the potential outcomes as fixed and leveraging assumptions on the assignment process \citep[e.g.,][]{abadie10sc,firpo18synthetic}.}

Our main goals are to estimate and construct confidence intervals for the ATT,  
\begin{equation}
\tau=\frac{1}{T_1}\sum_{t=T_0+1}^T\alpha_{0t}\label{eq:att}.
\end{equation}
When $\{\alpha_{0t}\}$ and thus $\tau$ are random, these confidence intervals are not confidence intervals in the traditional sense, but should be interpreted as prediction intervals \citep{cattaneo2021prediction,chernozhukov2021exact,cattaneo2023uncertainty}. See Sections \ref{sec:implementation} and \ref{sec: asymptotic properties} for further discussions. To simplify the exposition, we sometimes suppress the subscript ``$0$'' and write $\alpha_{t}=\alpha_{0t}$ whenever there is no ambiguity.

The ATT is an interpretable and easy-to-communicate summary measure of the impact of the treatment when treatment effects are heterogeneous over time. For example, in our reanalysis of \citet{andersson2019carbon}, the ATT captures the average per-period effect of a carbon tax on emissions in Sweden between 1990, when the tax was introduced, and 2005, when the EU started its emissions trading system. \citet[][p.14]{andersson2019carbon} explicitly mentions the ATT when discussing the empirical results. Estimates of the ATT are commonly reported in empirical applications \citep[e.g.,][]{abadie10sc,bohn2014did,abadie15sc,pinotti2015economic,andersson2019carbon,jones2022labor}.  
In settings with only one treated unit, the ATT in \eqref{eq:att} is equivalent, for example, to the treatment effects considered by \citet{carvalho2018arco}, \citet{arkhangelsky2021synthetic}, and \citet{cattaneo2023uncertainty}. 

We consider inference on the ATT in an asymptotic framework where $T_1\rightarrow \infty$. This framework and the assumptions required for our theoretical results suggest that the $t$-test is suitable for applications with enough post-treatment periods and without structural breaks. When there are not enough post-treatment periods (e.g., due to data limitations or control units also getting treated in staggered adoption designs) or when there are structural breaks shortly after $T_0$, the $t$-test should not be used. See Section \ref{sec:guide} for a detailed discussion on when to use the $t$-test and what to consider when using it.

\begin{rem}
The proposed method can be applied to make inferences on the ATT over subperiods of the post-treatment period, $\mathcal{T}^s=\{T_0+1+r,\dots,T-s\}$, where $r,s\ge 0$,
$
|\mathcal{T}^s|^{-1}\sum_{t\in \mathcal{T}^s}\alpha_{t}.
$
Our asymptotic theory requires that $|\mathcal{T}^s|\rightarrow \infty$, so that $|\mathcal{T}^s|$ needs to be large enough for the asymptotic approximations to be accurate. \qed
\end{rem}

\begin{rem}
In settings where researchers are willing to restrict treatment effect heterogeneity over time and assume that the treatment effect sequence $\{\alpha_{t}\}$ is stationary and weakly dependent, the expected treatment effect, $E(\alpha_{t})$, is a natural alternative to the ATT \citep[e.g.,][]{li2017estimation,li2020statistical}. We propose a modification of our method for making inferences on $E(\alpha_{t})$ in Appendix \ref{app: inference on expected effect}. Remark \ref{rem:att_vs_expected_effect} discusses the choice between the ATT and the expected treatment effect. \qed
\end{rem}

\subsection{Debiasing SC and implementing the $t$-test}
\label{sec:implementation}
To remove the bias of SC, we employ a $K$-fold cross-fitting procedure, where $K$ is fixed. We discuss the choice of $K$ in Section \ref{sec: choosing K}. We choose $K$ consecutive blocks from the pre-treatment period: $H_{1}\bigcup H_{2}\bigcup\cdots\bigcup H_{K}\subseteq\{1,\dots,T_{0}\}$. Define $r=\min\{\left\lfloor T_{0}/K\right\rfloor,T_1 \} $ and let $H_{k}=\{(k-1)r+1,\dots,kr\}$ for $1\leq k\le K$.\footnote{Here, we choose to use the first $K$ blocks. Other choices such as the last $K$ blocks are also valid.} For simplicity, we assume that $T_{0}/K$ is an integer. For $k=1,\dots,K$, compute
\begin{equation}
\htau_{k}=\frac{1}{T_{1}}\sum_{t=T_{0}+1}^{T}\left(Y_{0t}-\sum_{i=1}^N\hw_{i,(k)}Y_{it}\right)-\frac{1}{|H_{k}|}\sum_{t\in H_{k}}\left(Y_{0t}-\sum_{i=1}^N\hw_{i,(k)}Y_{it}\right), \label{eq:tau_hat}
\end{equation}
where $\hw_{(k)}=(\hw_{1,(k)},\dots, \hw_{N,(k)})'$ is obtained by applying SC to the data in $H_{(-k)}:=\{1,\dots,T_0\}\setminus H_k$. This construction ensures that under weak dependence, $\hat{w}_{(k)}$ is approximately independent of the data in $H_{k}\bigcup\{T_{0}+1,\dots,T\} $, which allows us to establish the theoretical validity of our procedure under weak conditions. 

The basic idea behind the construction of estimator \eqref{eq:tau_hat} is as follows. The first component, $T_1^{-1}\sum_{t=T_{0}+1}^{T}\left(Y_{0t}-\sum_{i=1}^N\hw_{i,(k)}Y_{it}\right)$, corresponds to the natural SC estimator $\htau^{\text{SC}}$ in \eqref{eq:naive_estimator} with $\hw$ replaced by $\hw_{(k)}$. This estimator is biased (see Figure \ref{fig:bias}). The second component, $|H_{k}|^{-1}\sum_{t\in H_{k}}\left(Y_{0t}-\sum_{i=1}^N\hw_{i,(k)}Y_{it}\right)$, is an estimator of the bias of this estimator in the pre-treatment period. Under the assumptions specified below, the bias is the same in the pre-treatment and the post-treatment period. As a result, subtracting the second component removes the bias.

For concreteness, we consider the following canonical SC estimator \citep[e.g.,][]{DI16} in the main text:
\begin{equation}
\hw_{(k)}\in \underset{w\in \mathcal{W}^{SC}}{\arg\min}\sum_{t\in H_{(-k)}}\left(Y_{0t}-\sum_{i=1}^Nw_iY_{it}\right)^{2},\label{eq:sc_estimator}
\end{equation}
where $\mathcal{W}^{SC}:=\left\{w: ~w_{i}\ge 0,~ \sum_{i=1}^N w_{i}=1 \right\}$.\footnote{The argmin may not be unique if $T_0 - r < N$. In this case, we can take $\hw_{(k)}$ to be any element in the argmin set.} We study the theoretical properties of the classical SC estimator of \citet{abadie10sc} in Appendix \ref{sec:constistency_classical_SC}.
The $t$-test only requires an $\ell_2$-consistent estimator of the weights. Therefore, it also works in conjunction with many other SC estimators and penalized regression estimators; see Remark \ref{rem:generic_Result}.

The final estimator of the ATT is simply the average of $\htau_{1},\dots,\htau_K$,
\begin{equation}\label{eq: estimator ATE}
\hat{\tau}=\frac{1}{K}\sum_{k=1}^{K}\htau_{k}.
\end{equation}
To avoid the difficult estimation of the LRV, we construct a scale-free test statistic. The idea is to form a ratio in which the numerator and the denominator are both scaled by the long-run standard deviation. Specifically, we construct a quantity based on $\htau_{1},\dots,\htau_K$:
\begin{eqnarray}\label{eq: tstat}
\mathbb{T}_K = \frac{\sqrt{K}\left(\htau-\tau \right)}{\hat\sigma_{\htau}},
\end{eqnarray}
where 
\[
\hat\sigma_{\htau}=\sqrt{1+\frac{Kr}{T_1}}\sqrt{\frac{1}{K-1}\sum_{k=1}^K \left(\htau_k -\htau\right)^2}.
\]
In Sections \ref{sec: theory stationary} and \ref{sec:nonstationarity} we show that $\mathbb{T}_K$ has an asymptotic $t$-distribution with $K-1$ degrees of freedom as $T_0,T_1\rightarrow \infty$. Note that $\hat\sigma_{\htau}/\sqrt{K}$ is different from standard errors based on asymptotic normality and consistent estimators of the variance. In our framework $\hat\sigma_{\htau}$ is a random variable since $K$ is fixed, as in the classical statistical analysis of $t$-tests based on a fixed number of Gaussian variables.

The $t$-statistic \eqref{eq: tstat} differs from the standard $t$-statistic by the factor $\sqrt{1+Kr/T_1}$ in the denominator. This rescaling is necessary in our context because the $\htau_{1},\dots,\htau_K$ are not asymptotically independent since they share a common component coming from the average over the post-treatment period. The common component cancels out in the denominator but not in the numerator of $\mathbb{T}_K$. Therefore, to account for this common component and show that the test statistic $\mathbb{T}_K$ has an asymptotic $t$-distribution, we need to divide the numerator by $\sqrt{1+Kr/T_1}$. We refer to the proof of Theorem \ref{thm: t statistic} for a formal discussion.

The asymptotic $t$-distribution of $\mathbb{T}_K$ suggests the following $(1-\alpha)$ confidence interval for $\tau$:
\begin{equation}
\mathcal{I}_{K}(1-\alpha)=\left [ \htau -t_{K-1}(1-\alpha/2)\frac{\hat\sigma_{\htau}}{\sqrt{K}},~ \htau +t_{K-1}(1-\alpha/2) \frac{\hat\sigma_{\htau}}{\sqrt{K}}\right], \label{eq:ci}
\end{equation}
where $t_{K-1}(1-\alpha/2)$ is the $(1-\alpha/2)$-quantile of a student $t$-distribution with $K-1$ degrees of freedom. The interpretation of $\mathcal{I}_{K}(1-\alpha)$ depends on whether $\tau$ is fixed or random. If $\tau$ is fixed, $\mathcal{I}_{K}(1-\alpha)$ can be interpreted as a conventional confidence interval; if $\tau$ is random, $\mathcal{I}_{K}(1-\alpha)$ can be interpreted as a prediction interval \citep[as in][]{cattaneo2021prediction,chernozhukov2021exact,cattaneo2023uncertainty}. 
For simplicity, we will refer to $\mathcal{I}_K(1-\alpha)$ as a confidence interval throughout.

\begin{rem}\label{rem:IM}
The construction of $\mathbb{T}_K$ is related to \citet[][]{ibragimovmueller2010} with two important differences. First, $\{\hat\tau_k\}$ naturally arise from our cross-fitting procedure for bias correction. Second, the $\hat\tau_k$s share a common component and, consequently, are not asymptotically independent (Theorem \ref{thm: asy distr}). Nevertheless, we are able to show that, after scaling the denominator by $\sqrt{1+K r/T_1}$, $\mathbb{T}_K$ has an asymptotic $t$-distribution under the null hypothesis (Theorem \ref{thm: t statistic}). Our construction of $\mathbb{T}_K$ is also related to the \citet{fama1973risk} variance estimator \citep[e.g.,][]{cattaneo2020characteristic}. \qed
\end{rem}

\section{Theoretical properties with stationary data}
\label{sec: theory stationary}
Here, we show that our $t$-test is valid with stationary data, robust to misspecification, and more efficient than DID. We consider inference in a repeated sampling framework under asymptotics where $T_0,T_1\rightarrow \infty$, and our results also allow for $N\rightarrow \infty$.
\subsection{Asymptotic theory}
\label{sec: asymptotic properties}
Following the SC literature, we predict $Y_{0t}(0)$ for $t>T_0$ using a linear combination of $\left(Y_{1t}(0),\dots,Y_{Nt}(0)\right)$. Defining $Y_{t}(0):=Y_{0t}(0)$, $Y_{t}:=Y_{0t}$, and $X_t:=\left(Y_{1t}(0),\dots,Y_{Nt}(0)\right)'$, the linear prediction model can be written as
\begin{equation}
Y_{t}(0) = X_t'w_\ast + u_{t}, \quad 1\le t\le T,  \label{eq:prediction_model}
\end{equation}
where the pseudo-true SC weights are defined as $w_{*}:=\arg\min_{w\in \mathcal{W}^{SC}}\ E(Y_{t}(0)-X_{t}'w)^{2}$ and the pseudo-true residuals or prediction errors are $u_{t}:=Y_t(0)-X_t'w_{*}$.\footnote{If the best linear predictor is well-defined and satisfies the SC constraints so that $w_\ast=[E(X_t X_t')]^{-1}E(X_t Y_t(0))\in \mathcal{W}^{SC}$, then $E(X_t u_t)= 0$. However, in practice, $\mathcal{W}^{SC}$ could be ``too small'' such that $[E(X_t X_t')]^{-1}E(X_t Y_t(0))\notin \mathcal{W}^{SC}$ and $E(X_t u_{t})\ne 0$.} We interpret model \eqref{eq:prediction_model} as a statistical or predictive model and not as a structural model.\footnote{See, for example, \citet{cattaneo2021prediction} and \citet{chernozhukov2021exact} for similar interpretations.} A major advantage of this interpretation is that  it allows SC to be misspecified and the pseudo-true weights $w_\ast$ to be different from the true (infeasible) SC weights.\footnote{Starting with the seminal paper by \citet{abadie10sc}, the true SC weights have often been defined via a factor model for the potential outcomes.}  Allowing for misspecification is important in practice. For example, under a linear factor model for the potential outcomes, SC does not recover the true weights and is biased in general \citep[e.g.,][]{ferman2021synthetic}. More generally, the true model might be nonlinear. The proposed inference method is valid in both of these cases and also accommodates many other forms of misspecification. For our main analysis, we assume that the predictive relationship is stable over time. In Section \ref{sec: time-varying weights}, we show that the $t$-test can accommodate certain forms of time-varying weights.

Throughout this section, we maintain the following standard stationarity assumption. We establish the properties of our method with non-stationary data in Section \ref{sec:nonstationarity} and discuss its robustness to non-stationary prediction errors in Appendix \ref{app:robustness}.
\begin{assumption}
\label{assu: stationary Y0 X} $\{(Y_{t}(0),X_t)\}_{t=1}^{T} $ is covariance-stationary.
\end{assumption}

By simple algebra, Assumption \ref{assu: stationary Y0 X} gives us the following observation. 

\begin{lem}
\label{lem: algebra} Let Assumption \ref{assu: stationary Y0 X} hold.
Then, for $1\leq k\le K$,
\[
\hat{\tau}_{k}-\tau=\left(\frac{1}{T_{1}}\sum_{t=T_{0}+1}^{T}u_{t}-\frac{1}{|H_{k}|}\sum_{t\in H_{k}}u_{t}\right)+\left(\frac{1}{|H_{k}|}\sum_{t\in H_{k}}\tX_{t}-\frac{1}{T_{1}}\sum_{t=T_{0}+1}^{T}\tX_{t}\right)'\Delta_{(k)},
\]
where $\tilde{X}_{t}=X_{t}-E(X_{t})$ 
and $\Delta_{(k)}=\hw_{(k)}-w_\ast$. 
\end{lem}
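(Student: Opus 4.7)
The plan is to prove this by direct algebraic substitution, using the model for $Y_t$ and the definitions. The only place where Assumption \ref{assu: stationary X} enters is to allow replacing $X_t$ by $\tilde X_t = X_t - E(X_t)$ in the second term, so the computation is essentially mechanical.

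First I would split the time indices into pre- and post-treatment regions and write out what $Y_t - X_t'\hat w_{(k)}$ equals in each. For $t \in H_k \subseteq \{1,\dots,T_0\}$, the treated unit is untreated, so $Y_t = Y_t^N = X_t'w + u_t$, and hence $Y_t - X_t'\hat w_{(k)} = u_t - X_t'\Delta_{(k)}$. For $t > T_0$, $Y_t = Y_t^I = Y_t^N + \alpha_t = X_t'w + u_t + \alpha_t$, so $Y_t - X_t'\hat w_{(k)} = \alpha_t + u_t - X_t'\Delta_{(k)}$.

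Next I would plug these two expansions into the definition \eqref{eq:tau_hat} of $\hat\tau_k$. The $\alpha_t$ term inside the post-treatment average gives exactly $\tau = T_1^{-1}\sum_{t=T_0+1}^T \alpha_t$, which I move to the left-hand side to form $\hat\tau_k - \tau$. Collecting the $u_t$ contributions produces the first bracket in the claim, while collecting the $X_t'\Delta_{(k)}$ contributions yields
\begin{equation*}
\left(\frac{1}{|H_k|}\sum_{t\in H_k} X_t - \frac{1}{T_1}\sum_{t=T_0+1}^{T} X_t\right)'\Delta_{(k)}.
\end{equation*}

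Finally, I invoke Assumption \ref{assu: stationary X}: since $E(X_t)$ is the same constant vector $\bar\mu_X$ for every $t$, both empirical averages of $X_t$ above have the same expectation, so subtracting $\bar\mu_X$ inside each average leaves the difference unchanged. This converts $X_t$ into $\tilde X_t = X_t - E(X_t)$ in both sums, matching the stated form. There is no real obstacle here beyond bookkeeping the pre/post-treatment split carefully; note in particular that we do not need to use stationarity of $\{u_t\}$ for this identity (it is covered by the mean-zero assumption in Assumption \ref{ass:model}), so the covariance-stationarity part of Assumption \ref{assu: stationary X} is reserved for later asymptotic arguments.
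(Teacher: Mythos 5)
Your proof is correct and follows essentially the same route as the paper's: both expand $Y_t - X_t'\hat w_{(k)}$ separately on $H_k$ and on the post-treatment block, cancel the $\alpha_t$ average against $\tau$, and use the constancy of $E(X_t)$ (from Assumption \ref{assu: stationary X}) to replace $X_t$ by $\tilde X_t$ in the difference of averages — the paper merely performs the centering at the start ($X_t = \mu + \tilde X_t$, with the $\mu'\Delta_{(k)}$ terms cancelling) rather than at the end. Your observation that covariance-stationarity of $\{u_t\}$ is not needed for this identity is also accurate.
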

To establish the asymptotic properties of our method, we need to show that the second term in Lemma \ref{lem: algebra} is negligible. For that, we impose additional assumptions.

We start by imposing $\ell_{2}$-consistency of the SC estimator $\hat{w}_{(k)}$.
\begin{assumption}\label{ass:ell_2_consistency} $\max_{1\leq k\leq K}\|\hw_{(k)}-w_\ast\|_{2}=o_{P}(1)$. 
\end{assumption}
We state Assumption \ref{ass:ell_2_consistency} in terms of the pseudo-true SC weights $w_\ast:=\arg\min_{w\in \mathcal{W}^{SC}}\ E(Y_{t}(0)-X_{t}'w)^{2}$. The results in Theorems \ref{thm: asy distr} and \ref{thm: t statistic} below continue to hold as long as the estimators $\hw_{(k)}$ converge to any time-invariant vector of weights. 

The following lemma verifies Assumption \ref{ass:ell_2_consistency} for the SC estimator $\hw_{(k)}$. It allows $N$ to be large relative to $T_0$. 
\begin{lem}
	\label{lem: consistency misspecification}  Suppose Assumption \ref{assu: stationary Y0 X} and the following conditions hold.
\begin{enumerate} \setlength\itemsep{0pt}
\item $\max_{1\le k \le K}\|\hmu_{(-k)}-\mu\|_{\infty}=o_P(1)$, where $\mu=EX_{t}Y_{t}(0)$ and $\hmu_{(-k)}=|H_{(-k)}|^{-1}\sum_{t\in H_{(-k)}}X_{t}Y_{t}(0)$. 
\item $\max_{1\le k \le K}\|\hSigma_{(-k)}-\Sigma\|_{\infty}=o_{P}(1)$ and $\lambda_{\min}(\Sigma)\geq c$, where $\Sigma=EX_{t}X_{t}'$ and  $\hSigma_{(-k)}=|H_{(-k)}|^{-1}\sum_{t\in H_{(-k)}}X_{t}X_{t}'$.
\end{enumerate}	
Then, we have that $\max_{1\le k \le K}\|\hw_{(k)}-w_{*}\|_{2}=o_P(1)$. In particular,
	\[
	\max_{1\le k \le K}\|\hw_{(k)}-w_{*}\|_{2}^{2}\leq\frac{6\|\hSigma_{(-k)}-\Sigma\|_{\infty}+2\|\hmu_{(-k)}-\mu\|_{\infty}}{c}.
	\]
\end{lem}
The first condition in Lemma \ref{lem: consistency misspecification} holds under weak serial dependence, mild conditions on the tail of the distribution of the variables, and conditions on $N$. For example, when the entries of $X_t$ and $Y_t(0)$ are sub-Gaussian, we can allow for $\log N=o(\sqrt{T_{0}})$; when the entries of $X_t$ and $Y_t(0)$ have bounded $q $th moment for $q>2$, then we can typically allow for $N=o(T_{0}^{q/4}) $. This feature is essential because $N$ and $T_0$ have a similar order of magnitude in many SC applications. The second condition requires the eigenvalues of $\Sigma_{(-k)} $ to be bounded away from zero to achieve identification of the pseudo-true SC weights. We emphasize that Lemma \ref{lem: consistency misspecification} does not impose any sparsity assumptions on the weights. 

We also impose weak dependence assumptions on the data. Define $\tilde{u}_t:=u_t-E(u_t)$ and $\sigma^{2}=\lim_{T\rightarrow\infty}E\left(T^{-1/2}\sum_{t=1}^{T}\tilde{u}_{t}\right)^{2}$.
\begin{assumption}
\label{assu: weak dependence}Suppose the following conditions hold.
\begin{enumerate}  \setlength\itemsep{0pt}
\item There exists a constant $\kappa_{1}>0$ such that for any $A\subseteq\{1,\dots,T\}$,
the largest eigenvalue of $E\left[|A|^{-1}\left(\sum_{t\in A}\tilde{X}_{t}\right)\left(\sum_{t\in A}\tilde{X}_{t}\right)'\right]$
is bounded above by $\kappa_{1}$. 
\item There exists a sequence $\rho_{T}>0$ such that $P(\max_{1\leq t\leq T}\|\tilde{X}_{t}\|_{\infty}\leq\rho_{T})\rightarrow1$. 
\item The data $\{(X_{t},\tilde{u}_{t})\}_{t=1}^{T}$ are $\beta$-mixing with coefficient
satisfying $\betamix(\gamma_{T})\rightarrow0$ for some sequence $\gamma_{T}$
satisfying $0<\gamma_{T}<r/2$ and $\rho_{T}\gamma_{T}=o_P(\min\{\sqrt{T_{0}},\sqrt{T_{1}}\})$, where $r=\min\{\left\lfloor T_{0}/K\right\rfloor,T_1 \}$.
\item $\{\tilde{u}_{t}\}_{t=1}^{T}$ satisfies $\max_{1\leq t\leq T}E|\tilde{u}_{t}|^q=O(1) $ and $\betamix(i)\lesssim i^{-\eta} $ for some constants $q> 2$ and $\eta>q/(q-2)$ and $\sigma^2>0$.
\end{enumerate}
\end{assumption}

The weak dependence is stated in terms of $\beta$-mixing, which holds for a large class of stochastic processes. 
Stationarity and $\beta$-mixing conditions are commonly imposed for time series data. These conditions are satisfied in various Markov chains and hidden Markov models, including ARMA, GARCH, and many stochastic volatility models \citep[e.g.,][]{carrasco2002mixing,meyn2012markov}. Note that the $\beta$-mixing and moment conditions in Assumption \ref{assu: weak dependence} rule out unit-root or near-unit-root processes.

The following theorem establishes the asymptotic distribution of the component estimators. Let $g_{c_0,K}=K \oneb\{ c_0<1\}+(K/c_0) \oneb\{1\leq c_0\leq K \} +  \oneb\{ c_0>K\}  $.

\begin{thm}
\label{thm: asy distr}  Let Assumptions  \ref{assu: stationary Y0 X}, \ref{ass:ell_2_consistency}, and \ref{assu: weak dependence} hold. Suppose that $T_0,T_1\rightarrow \infty$ and that $T_{0}/T_{1}\rightarrow c_{0}$ for some $c_{0}\in[0,\infty]$. Then, 
\[
\sqrt{\min\{T_0,T_1\} }\begin{pmatrix}\htau_{1}-\tau\\
\vdots\\
\htau_{K}-\tau
\end{pmatrix}\overset{d}{\rightarrow}\begin{pmatrix}\sqrt{\min\{c_{0},1\} }\xi_{0}-\sqrt{g_{c_0,K} }\xi_{1}\\
\vdots\\
\sqrt{\min\{c_{0},1\} }\xi_{0}-\sqrt{g_{c_0,K}}\xi_{K}
\end{pmatrix}\sigma,
\]
where   $\xi_{0},\dots,\xi_{K}$ are independent $N(0,1)$ random
variables.
\end{thm}

The proof of Theorem \ref{thm: asy distr} proceeds in two steps. First, observe that $\hw_{(k)}-w_\ast$ is approximately independent of the data in $H_{k}\bigcup\{T_{0}+1,\dots,T\} $ under weak dependence of the data (Assumption \ref{assu: weak dependence}). Consequently, $\ell_{2}$-consistency of $\hat{w}_{(k)}$ can be used to bound the second term in Lemma \ref{lem: algebra}, so that
	\begin{equation}
\hat{\tau}_{k}-\tau-\left(\frac{1}{T_{1}}\sum_{t=T_{0}+1}^{T}u_{t}-\frac{1}{|H_{k}|}\sum_{t\in H_{k}}u_{t}\right)=o_{P}\left(\frac{1}{\sqrt{\min\{T_0,T_1\}}} \right).\label{eq:proof_explanation_step1}
\end{equation}
Second, under stationarity, we can replace $u_t$ in  \eqref{eq:proof_explanation_step1} by $\tilde{u}_t$, which is mean-zero  by construction. As a consequence, the desired result follows from a CLT. The common component $\xi_{0}$ corresponds to the post-treatment average, and $\xi_{1},\dots,\xi_{K}$ correspond the averages over the  blocks in the pre-treatment period.

Theorem \ref{thm: asy distr} imposes no restrictions on the relative magnitude of $T_0$ and $T_1$, accommodating a wide range of applications. This feature adds useful robustness since researchers do not have to choose between different asymptotic approximations and inference procedures depending on the relative magnitude of $T_0$ and $T_1$.

The next theorem establishes the asymptotic distribution of the ATT estimator and $\mathbb{T}_K$.
\begin{thm} \label{thm: t statistic}
Let the assumptions in Theorem \ref{thm: asy distr} hold. Then (i)
\[
\sqrt{\min\{T_0,T_1\}}(\htau-\tau) \overset{d}\rightarrow N(0,(\min\{c_{0},1\} + g_{c_0,K}K^{-1} )\sigma^2)
\]
and (ii) $\mathbb{T}_K\overset{d}\rightarrow t_{K-1}$, where the random variable $t_{K-1}$ has a standard $t$-distribution with $K-1$ degrees of freedom.
\end{thm}
Part (i) of Theorem \ref{thm: t statistic} is a direct consequence of Theorem \ref{thm: asy distr} and establishes the asymptotic normality of the ATT estimator. Making inferences directly based on this result requires estimating the LRV $\sigma^2$, which is difficult in small sample settings. We therefore use the self-normalized test statistic $\mathbb{T}_K$, which allows us to avoid estimating the LRV. Part (ii) demonstrates that $\mathbb{T}_K$ has an asymptotically pivotal student $t$-distribution with $K-1$ degrees of freedom. This result is useful from a practical perspective as one does not have to simulate non-standard critical values, nor rely on subsampling or permutation distributions. 

The result in Part (ii) is different from classical results on $t$-statistics because the component estimators $\htau_{1},\dots,\htau_K$ are not asymptotically independent due to presence of the common component $\xi_0$ in the asymptotic distribution in Theorem \ref{thm: asy distr}. However, the common component cancels out in the denominator of the $t$-statistic so that numerator and denominator are asymptotically independent and $\mathbb{T}_K$ has an asymptotic $t$-distribution after rescaling the denominator to account for the presence of $\xi_0$ in the numerator.

The following corollary of Theorem \ref{thm: t statistic} formally establishes the $(1-\alpha)$ coverage guarantee of the confidence interval $\mathcal{I}_K(1-\alpha)$ defined in \eqref{eq:ci}.
\begin{cor}\label{cor:CI_PI}Let the assumptions in Theorem \ref{thm: asy distr} hold. Then, $P(\tau \in \mathcal{I}_K(1-\alpha))\rightarrow 1-\alpha.$
\end{cor}
In light of Corollary \ref{cor:CI_PI}, we can interpret $\mathcal{I}_K(1-\alpha)$ as a confidence interval if $\tau$ is fixed and as a prediction interval if $\tau$ is random. Thus, our framework provides a unified framework, encompassing both leading cases discussed in the literature.

\begin{rem}\label{rem:generic_Result}
Theorem \ref{thm: asy distr} only requires $\ell_2$-consistency of $\hw_{(k)}$ (Assumption \ref{ass:ell_2_consistency}). As a result, the $t$-test is generic and can be used in conjunction with the many other $\ell_2$-consistent estimators available in the literature (e.g., Lasso). This is similar to the double machine learning (DML) literature \citep[][]{chernozhukov2018double}, where the first-order asymptotics of the final estimator depend on the influence function in the debiasing process, but not on the properties of the estimators of the nuisance parameters, beyond a rate requirement. We emphasize that unlike the DML literature, the $t$-test does not have any rate requirements. \qed
\end{rem}

\begin{rem}\label{rem:higher_order}
The alternative fixed-$b$ approach \citep[e.g.,][]{kiefer2000simple,kiefer2002heteroskedasticity,kiefer2002et,jansson2004error,kiefer2005new,sun2008optimal} does not naturally arise from the cross-fitting procedure and encounters extra technical difficulty due to the error from estimating the high-dimensional weights. The usual justification for the fixed-$b$ approach is its higher-order improvement in Gaussian location models. Our procedure achieves the same property in this setting; see Appendix \ref{sec: higher order}.\qed
\end{rem}

\begin{rem} The sign-based randomization test of \citet{canay2017randomization} is an alternative to the conventional $t$-test \citep[e.g.,][]{ibragimovmueller2010}, provided that $K$ is large enough. In our setting, the randomization test cannot be directly applied due to the dependence between the component estimators $\htau_1,\dots,\htau_K$. However, the structure of our problem makes it possible to decorrelate the component estimators without estimating the LRV, so that the randomization test can be applied after decorrelation. Specifically, Theorem \ref{thm: asy distr} shows that $\sqrt{\min\{T_0,T_1\} }(\htau_{1}-\tau,\dots,\htau_{K}-\tau)'\overset{d}\rightarrow N(0,\Sigma)$, where $\Sigma=\sigma^2\tilde{\Sigma}$ with $\tilde{\Sigma}_{kk}=\min\{c_0,1\}+g_{c_0,K}$ and $\tilde{\Sigma}_{kl}=\min\{c_0,1\}$ for $k\ne l$. Because $\tilde{\Sigma}$ does not depend on $\sigma$, and $\sigma$ enters as a multiplicative factor, one can decorrelate $\htau_1,\dots,\htau_K$ without estimating the LRV.\footnote{\citet{canay2017randomization} show that the sign-based randomization test outperforms the $t$-test when there is heterogeneity in the variances of the component estimators. However, in the presence of such heterogeneity, decorrelating $\htau_1,\dots,\htau_K$ would require estimating the LRV for each component estimator, which is difficult in small samples.}
\qed
\end{rem}

\subsection{Choosing $K$}
\label{sec: choosing K}
The test statistic $\TT_K$ and its limiting distribution depend on $K$. Choosing $K$ seems unavoidable; it is inherent to the cross-fitting procedure required for bias correction. The choice of $K$ is subject to a trade-off between the expected length of the confidence intervals and their finite sample coverage properties. 

Figure \ref{fig:cov_leng} illustrates this trade-off. It shows that choosing a larger $K$ will lead to shorter confidence intervals but may impact the coverage accuracy of our method. The reason for the loss in coverage accuracy is that we are constructing the bias correction term in \eqref{eq:tau_hat} by averaging over a shorter time period, which lowers the quality of the normal approximation in Theorem \ref{thm: asy distr} and leads to increased finite sample dependence between the blocks. See, for example, \citet[][Section 2.3]{ibragimovmueller2010} for a related discussion in the context of the standard $t$-test with dependent data.

\begin{figure}[h!]
\caption{Trade-off between coverage accuracy and length when choosing $K$}
\label{fig:cov_leng}
\begin{center}
    \begin{subfigure}[b]{0.495\textwidth}
         \centering
	\includegraphics[width=\textwidth,trim = {0 3cm 0 2.5cm}]{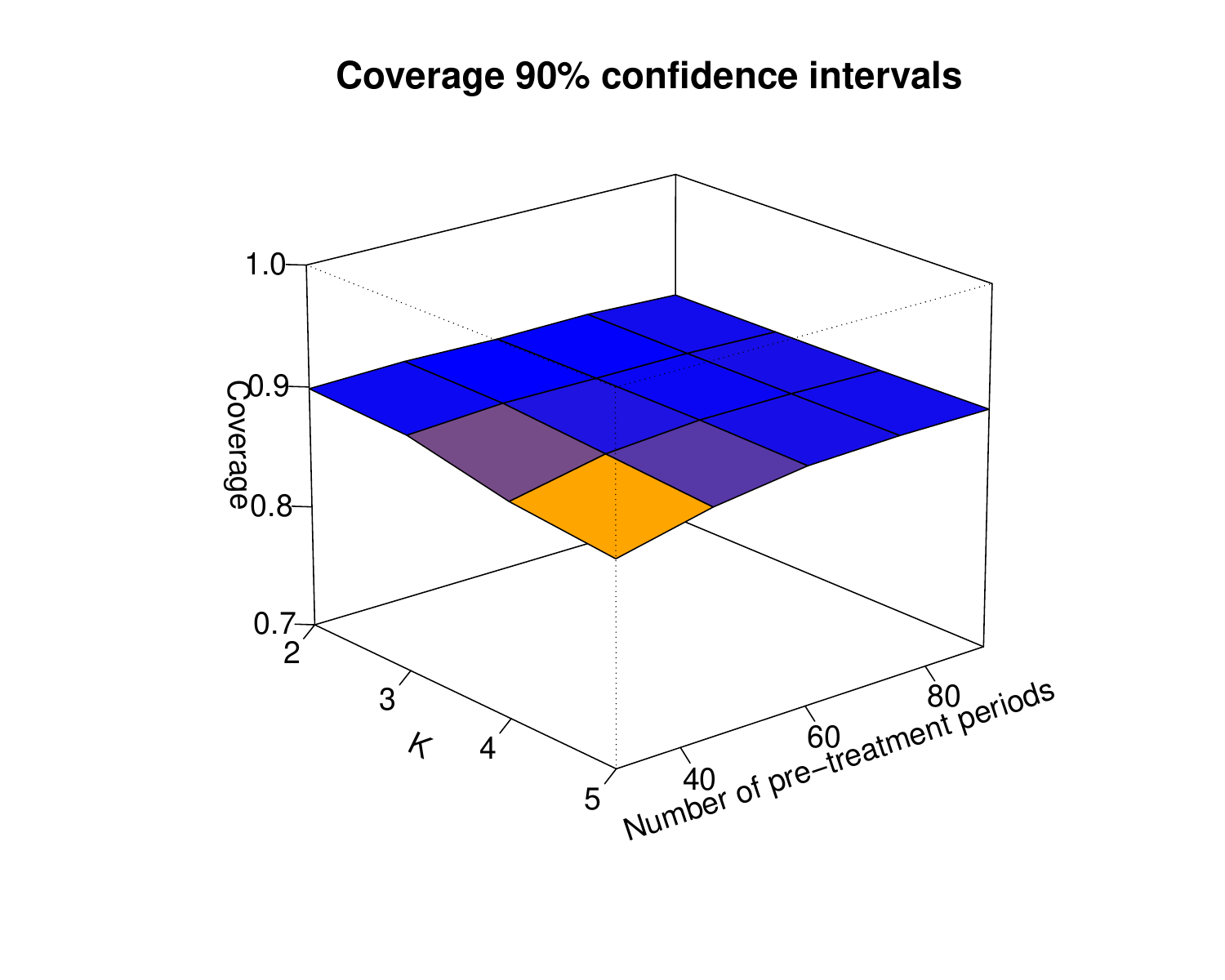}
    \end{subfigure}
    \begin{subfigure}[b]{0.495\textwidth}
         \centering
	\includegraphics[width=\textwidth,trim = {0 3cm 0 2.5cm}]{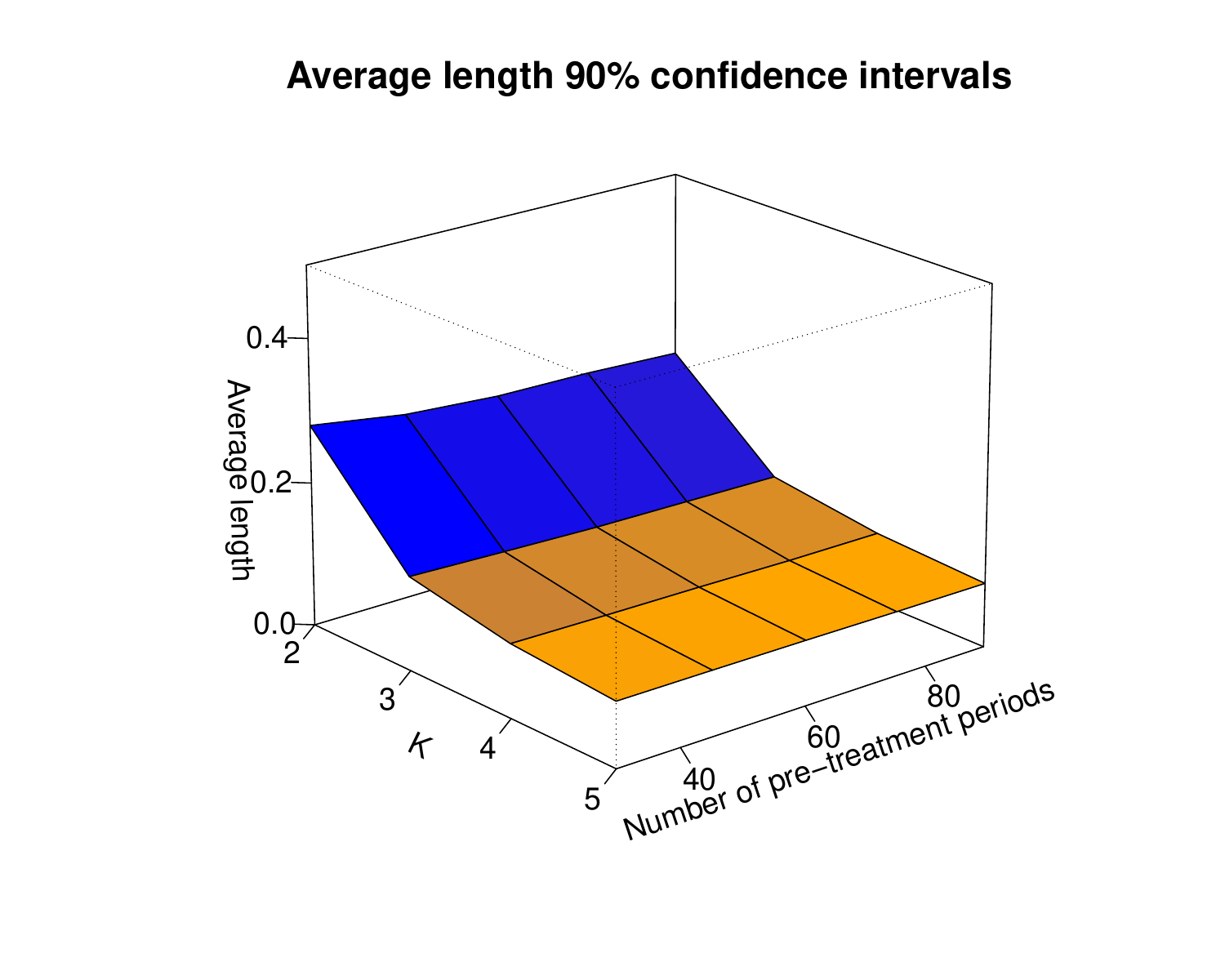}
    \end{subfigure}
\end{center}
    
\scriptsize{\textit{Notes:} Simulations with 20,000 repetitions based on DGP1 described in Section \ref{sec:simulations}. Because the trade-off is more relevant and pronounced when the persistence is higher, we show results for $\rho_u=0.6$ for the purpose of illustration ($\rho_u=0.31$ in Section \ref{sec:simulations}). Nominal coverage: $1-\alpha=0.9$. We set $(T_1,N)=(16,14)$ as in the empirical application and vary $T_0$ and $K$.}	
\end{figure}

To formalize the trade-off between coverage accuracy and length, it is helpful to analyze the asymptotic efficiency of the confidence intervals in \eqref{eq:ci} by comparing the expected asymptotic length for a fixed $K$ to the limiting case where $K\rightarrow \infty$. While our theory requires $K$ to be fixed, the case where $K\rightarrow \infty$ is a useful theoretical benchmark.

The length of the confidence interval \eqref{eq:ci} is $L=t_{K-1}(1-\alpha/2)\hat\sigma_{\htau}/\sqrt{K}$. In Appendix \ref{app: expected length}, we show that $\sqrt{\min\{T_0,T_1\}}L\overset{d}\rightarrow\mathcal{L}$, with 
\begin{equation}
E(\mathcal{L})=2 \sigma t_{K-1}(1-\alpha/2)\frac{1}{\sqrt{K}\sqrt{K-1}}\sqrt{1+\min \{c_0,K \}} \sqrt{g_{c_0,K}}\sqrt{2}\frac{\Gamma(K/2)}{\Gamma((K-1)/2)}, \label{eq: expected length}
\end{equation}
where $\Gamma(\cdot)$ denotes the Gamma function. Using Stirlings's approximation of the Gamma function, we obtain the following limiting length as $K\rightarrow \infty$,
\begin{equation}
E(\mathcal{L})\approx 2\sigma \Phi^{-1}(1-\alpha/2)\sqrt{\min\{c_{0}^{-1},1\}} \sqrt{1+c_0 },\label{eq: limiting length}
\end{equation}
where $\Phi^{-1}(\cdot)$ is the quantile function of the standard normal distribution.
See Appendix \ref{app: expected length} for detailed derivations. The relative asymptotic efficiency (RAE) can then be computed as the ratio of \eqref{eq: limiting length} and \eqref{eq: expected length},
\begin{equation}
\operatorname{RAE}=\frac{\Phi^{-1}(1-\alpha/2)\sqrt{\min\{c_{0}^{-1},1\}} \sqrt{1+c_0 }}{ t_{K-1}(1-\alpha/2)\frac{1}{\sqrt{K}\sqrt{K-1}}\sqrt{1+\min \{c_0,K \}} \sqrt{g_{c_0,K}}\sqrt{2}\frac{\Gamma(K/2)}{\Gamma((K-1)/2)}},\label{eq:rae}
\end{equation}
and is a function of $\alpha$, $c_0$, and $K$ only. Table \ref{tab: rae} shows the RAE for $c_0=T_0/T_1=30/16$, as in the empirical application in Section \ref{sec:application}. 
\begin{table}[h!] 
\footnotesize

\centering
\caption{Relative asymptotic efficiency}
\label{tab: rae} 
\begin{tabular}{lccccccccc}
\toprule
\midrule
$K$&2&3&4&5&6&7&8&9&10\\
\midrule
RAE&32.65 & 63.56 & 75.86 & 82.08 & 85.79 & 88.23 & 89.97 & 91.26 & 92.25 \\ 
\midrule
\bottomrule
\multicolumn{10}{p{11.7cm}}{\scriptsize{\it Notes:} RAE: relative asymptotic efficiency as defined in \eqref{eq:rae}. Results are shown for $\alpha=0.1$ and $c_0=30/16$, as in the empirical application.}
\end{tabular}
\end{table}
\normalsize

Table \ref{tab: rae} shows that choosing $K=3$ almost doubles the RAE relative to $K=2$. However, this choice can be conservative: increasing $K$ to $4$, $5$, or even $6$ improves RAE by 12, 19, and 22 percentage points, respectively. The RAE gains from increasing $K$ further are decreasing. Importantly from a practical perspective, one can achieve a very high RAE without choosing a large $K$: choosing $K= 9$ already yields confidence intervals with more than 90\% RAE.

The RAE formula captures one side of the trade-off. The other side of the trade-off, coverage accuracy, can be assessed using application-based simulations. The simulations reported in this paper (based on \citet{andersson2019carbon}) and in an earlier version (\citet{chernozhukov2022arxiv7}, based on \citet{abadie2003economic}) suggest that $K = 3$ works well when $T_0$ is small. When $T_0$ is moderate or large, the coverage accuracy remains excellent even for larger values of $K$, which justifies choosing larger values of $K$. 

An important determinant of the coverage accuracy is the degree of persistence in the prediction errors $\{u_t\}$: the more persistent the prediction errors, the lower the coverage accuracy. A simple approach for gauging the persistence in $\{u_t\}$ is to fit an AR(1) model to the SC residuals in the pre-treatment period, as we do in Section \ref{sec:simulations}. Assessing the persistence in this way is a helpful first step in evaluating coverage accuracy and an essential input for application-based simulations.

In summary, $K=3$ is a useful starting point in typical SC applications where $T_0$ is small. If the estimated persistence in $\{u_t\}$ is relatively low and simulations indicate a good coverage accuracy, choosing $K=4$ improves the efficiency of the $t$-test. If $T_0$ is moderate or large, $K$ can often be chosen to achieve 80\% RAE or even 90\% RAE without affecting the coverage properties too much.

\subsection{Debiased SC is more efficient than DID}
\label{app:efficiency}

Here we show that debiased SC is more efficient than DID.\footnote{The recent work by \citet{bottmer2021design} complements our results by studying the efficiency of SC methods under design-based uncertainty.} 
To illustrate, suppose that $T_0<T_1$. The DID estimator of the ATT  can be written as \citep[e.g.,][]{DI16},
\[
\hat\tau^{\text{DID}}=\frac{1}{T_{1}}\sum_{t=T_0+1}^{T}\left(Y_{t}-X_t'w_{\text{DID}} \right)-\frac{1}{T_{0}}\sum_{t=1}^{T_0}\left(Y_{t}-X_t'w_{\text{DID}}\right),
\]
where $w_{\text{DID}}:=\left(\frac{1}{N},\dots,\frac{1}{N}\right)'$. For simplicity, suppose that $E(Y_t(0))=0$ and $E(X_t)=0$ and assume that the data are iid. Then a CLT gives 
\[
\sqrt{T_0}(\hat\tau^{\text{DID}} -\tau)\overset{d}\rightarrow N(0,(c_0+1)\sigma_{\text{DID}}^2),\quad \sigma^2_{\text{DID}}=E\left(Y_{t}(0)-X_t'w_{\text{DID}}  \right)^2.
\]
For SC, the previous results imply
\[
\sqrt{T_0}(\hat\tau -\tau)\overset{d}\rightarrow N(0,(c_0+1)\sigma_{SC}^2),\quad \sigma^2_{SC}=E(Y_{t}(0)-X_t'w_{*})^2,
\]
where $w_{*}=\arg\min_{v\in \mathcal{W}^{SC}}\ E(Y_{t}(0)-X_{t}'w)^{2}$. To show that the SC estimator is more efficient than DID, it suffices to note that \[
\sigma_{*}^2=E(Y_{t}(0)-X_t'w_{*})^2=\min_{w\in \mathcal{W}^{SC}}E(Y_{t}(0)-X_t'w)^2\leq E\left(Y_{t}(0)-X_t'w_{\text{DID}}  \right)^2=\sigma_{\text{DID}}^2.
\]
The above inequality shows that the magnitude of the efficiency gain is determined by $\sigma_{\text{DID}}^2-\sigma_{*}^2$, which depends on the true data-generating process. We note that the results in this section are asymptotic. In small samples, the uncertainty from estimating the weights can mask the asymptotic efficiency improvements, especially when these efficiency improvements are relatively small, i.e., when $\sigma_{\text{DID}}^2-\sigma_{*}^2$ is small.

\section{Theoretical properties with non-stationary data}
\label{sec:nonstationarity}
In many applications the data are non-stationary. An important rationale for applying SC is that the precise structure of the nonstationarity is unknown such that one has to rely on controls to identify counterfactual trends. Therefore, unlike the existing literature \citep[e.g.,][]{li2020statistical}, we do not make assumptions on the exact form of nonstationarity and instead restrict the heterogeneity in the nonstationarity across units. In essence, we require the units to be sufficiently homogeneous, which is an important contextual requirement in SC studies \citep[][]{abadie2021using}.

\subsection{Unrestricted common nonstationarity}
\label{sec:common_trend}
We start by establishing the validity of our procedure under the following general class of nonstationary processes. As in Section \ref{sec: theory stationary}, we allow for arbitrary misspecification.

\begin{assumption} \label{ass:common_nonstationarity} For $1\le t\le T$, $Y_t(0)=V_t(0)+\theta_t$ and $X_{t}=Z_t+\mathbf{1}_N\theta_t$, where $\theta_t\in \mathbb{R}$ is an unrestricted stochastic process and $\{(V_t(0),Z_t)\}_{t=1}^T$ is covariance-stationary.
\end{assumption}
Assumption \ref{ass:common_nonstationarity} requires that the potential outcomes for the treated unit and all control units can be decomposed into a common non-stationary component $\theta_t$ and a stationary component. Importantly, $\{\theta_t\}$ can be an arbitrary stochastic process (e.g., a deterministic trend, random walk, or general ARIMA process) such that researchers are not required to impose further restrictions on the nonstationarity as long as it is shared among units. This feature adds very useful robustness. Even for low-dimensional problems it is well-known that the slightest misspecification in modeling nonstationarity (e.g., unit root vs. near unit root) can yield invalid inferences.\footnote{For example, there is a large literature on making inference on the scalar parameter $\beta$ in $Y_t=X_t \beta +U_t$, where $X_t=\rho X_{t-1}+Z_t $ and $\rho=1+c/T$. The key difficulty is that the asymptotic distribution of the usual $t$-statistic for $\beta$ depends on  $c$, but $c$ cannot be consistently estimated \citep[e.g.,][]{phillips2014confidence}. Hence, using a unit root process ($c=0$) for the dynamics of $X_t$ is not robust to barely detectable misspecification (e.g., local-to-unity $c<0$).} 

To derive the asymptotic properties of our method under Assumption \ref{ass:common_nonstationarity}, we exploit the specific structure of the SC estimator and the constraints on the weights.
First, the estimated and the pseudo-true SC weights satisfy $\mathbf{1}_N'\hw_{(k)}=\mathbf{1}_N'w_\ast=1$. Therefore, under Assumption \ref{ass:common_nonstationarity}, we have a version of Lemma \ref{lem: algebra} with $\tilde{X}_t$ replaced by $\tilde{Z}_t$, which is stationary.

Second, $\hw_{(k)}$ is $\ell_2$-consistent under Assumption \ref{ass:common_nonstationarity}. Since $\mathcal{W}^{SC}\subset \{w:\mathbf{1}_N'w=1\}$, the estimated and the pseudo-true SC weights can be expressed in terms of the stationary parts of the potential outcomes, $(V_t(0),Z_t)$:
\begin{eqnarray}
&&\hat{w}_{(k)}=\underset{w\in \mathcal{W}^{SC} }{\arg\min}\sum_{t\in H_{(-k)}}(Y_{t}(0)-X_{t}'w)^{2} =\underset{w\in \mathcal{W}^{SC} }{\arg\min}\sum_{t\in H_{(-k)}}(V_t(0)-Z_{t}'w)^{2}\label{eq:sc_estimator_common_nonstationarity}\\
&&w_{*}:=\underset{w\in \mathcal{W}^{SC} }{\arg\min}\ E(Y_{t}(0)-X_{t}'w)^{2}=\underset{w\in \mathcal{W}^{SC} }{\arg\min}\ E(V_{t}(0)-Z_{t}'w)^{2}\label{eq:w_ast_common_nonstationarity}
\end{eqnarray}
Therefore, if the conditions in Lemma \ref{lem: consistency misspecification} hold with $(Y_t(0),X_t)$ replaced by $(V_t(0),Z_t)$, consistency of $\hat{w}_{(k)}$ follows. The following lemma states the formal result.
\begin{lem} \label{lem:consistency_common_nonstationarity}Suppose that the assumptions of Lemma \ref{lem: consistency misspecification} hold with $(Y_t(0),X_t)$ replaced by $(V_t(0),Z_t)$. Then Assumption \ref{ass:ell_2_consistency} holds.
\end{lem}

Finally, by \eqref{eq:sc_estimator_common_nonstationarity}, $\hw_{(k)}$ is only a function of $\{Z_t\}_{t\in H_{(-k)}}$ under Assumption \ref{ass:common_nonstationarity}. Thus, under weak dependence of $\{Z_t\}$, $\hw_{(k)}-w$ is approximately independent of $\{Z_t\}_{t\in H_{k}\bigcup\{T_{0}+1,\dots,T\} }$, so that the asymptotic properties of our method can be established using similar arguments as in the proofs of Theorems \ref{thm: asy distr} and \ref{thm: t statistic}.

\begin{thm}\label{thm: common nonstationarity}
Let Assumptions \ref{ass:ell_2_consistency}, \ref{assu: weak dependence} with $X_t$ replaced by $Z_t$, and \ref{ass:common_nonstationarity} hold. Suppose that $T_{0}/T_{1}\rightarrow c_{0}$ for some $c_{0}\in[0,\infty]$. Then (i)
$\mathbb{T}_K\overset{d}\rightarrow t_{K-1}$ and (ii) $P(\tau \in \mathcal{I}_K(1-\alpha))\rightarrow 1-\alpha.$
\end{thm}

Theorem \ref{thm: common nonstationarity} formally guarantees the validity of the confidence interval $\mathcal{I}_K(1-\alpha)$ when the data exhibit a common unrestricted nonstationarity.

\subsection{Deviations from common nonstationarity}
\label{sec:sparse_deviations}

Here we study a more general setting that allows for deviations from the common nonstationarity. Unlike in the previous sections, we do not allow SC to be misspecified. Consider the following assumption.

\begin{assumption}
	\label{assu: non-stationary DGP }Assume that 	$X_{t}=\oneb_{N}\theta_{t}+Z_{t}+\beta\xi_{t}$, $1\leq t\leq T$,
	where $\theta_{t}$ is an unrestricted stochastic process, $E(Z_{t})=0$, $\xi_{t}$ is a non-stationary process, $\beta\in\RR^{N}$ is an arbitrary
	vector. Assume that $\{\theta_{t}\}_{t=1}^{T}$,
	$\{Z_{t}\}_{t=1}^{T}$, and $\{\xi_{t}\}_{t=1}^{T}$ are mutually independent.
	Moreover, $Y_{t}(0)=X_{t}'w+u_{t}$, $1\leq t\leq T$,
	where $E(u_{t})=0$, $w\in \mathcal{W}^{SC}$.
\end{assumption}
Assumption \ref{assu: non-stationary DGP } considers deviations
from Assumption \ref{ass:common_nonstationarity}. The deviations are driven by the non-stationary process $\xi_t$. Assumption \ref{assu: non-stationary DGP } does not impose any restrictions on the vector $\beta$. This allows us to accommodate many leading examples of non-stationary data. For example, the control units can have different deterministic trends. Moreover, Assumption \ref{assu: non-stationary DGP } allows for any cointegration system whose nonstationarity is driven by $\theta_t$ and $\xi_t$. 
The case with multiple $\xi_t$'s would be much more complicated since the analysis would need to take into account the correlation among them and whether they diverge at the same rate; we leave this extension for future research.

The following assumption restricts the relative magnitude of $T_0$, $T_1$, and $N$, as well as the magnitude of the deviation process $\xi_{t}$. 
\begin{assumption}
	\label{assu: non-stationary deviation}Assume $T_{1}\ll T_{0}/\log N$.
Suppose that 
\[
\frac{\sum_{t=T_{0}+1}^{T}\xi_{t}}{\sqrt{\sum_{t\in H_{(-k)}}\xi_{t}^{2}}}=O_{P}(T_{1}T^{-1/2})\qquad\text{and}\qquad\frac{\sum_{t\in H_{k}}\xi_{t}}{\sqrt{\sum_{t\in H_{(-k)}}\xi_{t}^{2}}}=O_{P}(T_{1}T^{-1/2}).
\]
\end{assumption}
With a common nonstationarity, no restrictions on the relative magnitude of $T_0$ and $T_1$ are required (Section \ref{sec:common_trend}). However, to accommodate deviations form a common nonstationarity, we require $T_0$ to be much larger than $T_1$.
The restriction on the magnitude of  $\xi_{t}$ can be verified for many non-stationary processes including unit roots and polynomial trends; see Appendix \ref{app:verification} for details.

Assumptions \ref{assu: non-stationary DGP } and \ref{assu: non-stationary deviation} allow for general non-stationary trends and deviations without imposing specific structures on the dynamics. This is important in practice because procedures that rely on specific assumptions on the nonstationarity are typically not robust against deviations from these assumptions, and even the slightest misspecification can yield invalid inferences, as discussed above.

Finally, we impose conditions on the stationary component $Z_{t}$.
\begin{assumption}
	\label{assu: nonstat reg}Let Assumption   \ref{assu: weak dependence}  hold with $X_{t}$ replaced
	by $Z_{t}$. In addition, suppose that the following conditions hold.
	\begin{enumerate} \setlength\itemsep{0pt}
	\item $\|\sum_{t\in H_{(-k)}}Z_{t}u_{t}\|_{\infty}=O_{P}(\sqrt{T_0\log N})$
	\item $\|\sum_{t\in H_{(-k)}}Z_{t}\xi_{t}\|_{\infty}/\sqrt{\sum_{t\in H_{(-k)}}\xi_{t}^{2}}=O_{P}(\sqrt{\log N})$
	\item $(\sum_{t\in H_{(-k)}}u_{t}\xi_{t})/\sqrt{\sum_{t\in H_{(-k)}}\xi_{t}^{2}}=O_{P}(1)$
	\item $\|\sum_{t\in H_{(-k)}}(Z_{t}Z_{t}'-EZ_{t}Z_{t}')\|_{\infty}=O_{P}(\sqrt{T_0\log N})$
	\end{enumerate}
\end{assumption}

The next theorem establishes the asymptotic validity of the proposed inference method when there are deviations from a common non-stationarity.
\begin{thm}
	\label{thm: nonstat deviation}Let  Assumptions \ref{assu: non-stationary DGP },
	 \ref{assu: non-stationary deviation}, and \ref{assu: nonstat reg}  hold and suppose that we use the SC estimator \eqref{eq:sc_estimator}. Then, for $1\leq k\le K$,
	\[
	\sqrt{T_{1}}(\htau_{k}-\tau)=\frac{1}{\sqrt{T_1}}\sum_{t=T_{0}+1}^{T}u_{t}-\frac{1}{\sqrt{|H_{k}|}}\sum_{t\in H_{k}}u_{t}+o_{P}(1).
	\]
	Moreover, $\mathbb{T}_K\overset{d}\rightarrow t_{K-1}$ and  $P(\tau \in \mathcal{I}_K(1-\alpha))\rightarrow 1-\alpha.$
\end{thm}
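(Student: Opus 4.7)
\textbf{Proof proposal for Theorem \ref{thm: nonstat deviation}.}

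My plan is to mirror the strategy of Lemma \ref{lem: algebra nonstationary} and Theorem \ref{thm: nonstationarity}, but now I must contend with one column of $X_t$ that carries an extra non-stationary term $\xi_t$. Let $j^\ast$ denote the index with $\beta_{j^\ast}=1$. Because $w,\hat{w}_{(k)}\in\mathcal{W}^{SC}$ both sum to one, $\mathbf{1}_N'(\hat{w}_{(k)}-w)=0$, so the common trend $\theta_t$ cancels. Because $\supp(\beta)\cap\supp(w)=\emptyset$, we have $\beta'w=0$, and hence
\[
X_t'(\hat{w}_{(k)}-w)=Z_t'(\hat{w}_{(k)}-w)+\xi_t\,\hat{w}_{(k),j^\ast}.
\]
Plugging this into $Y_t-X_t'\hat{w}_{(k)}=X_t'(w-\hat{w}_{(k)})+u_t+\alpha_t\mathbf{1}\{t>T_0\}$ gives the decomposition
\[
\hat{\tau}_k-\tau \;=\; \Bigl(\tfrac{1}{T_1}\!\!\sum_{t=T_0+1}^{T}\!\!u_t-\tfrac{1}{|H_k|}\sum_{t\in H_k}u_t\Bigr)\;+\;R_{1,k}\;+\;R_{2,k},
\]
with
\[
R_{1,k}=\Bigl(\tfrac{1}{|H_k|}\sum_{t\in H_k}\tZ_t-\tfrac{1}{T_1}\!\!\sum_{t=T_0+1}^{T}\!\!\tZ_t\Bigr)'\Delta_{(k)},\quad R_{2,k}=-\hat{w}_{(k),j^\ast}\Bigl(\tfrac{1}{T_1}\!\!\sum_{t=T_0+1}^{T}\!\!\xi_t-\tfrac{1}{|H_k|}\sum_{t\in H_k}\xi_t\Bigr),
\]
where $\Delta_{(k)}=\hat{w}_{(k)}-w$. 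The target expansion then follows if I can show $R_{1,k}+R_{2,k}=o_P(1/\sqrt{T_1})$, using that $T_1\ll T_0$ here so $\min\{T_0,T_1\}=T_1$.

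The key intermediate step is an oracle-type bound on the SC estimator that simultaneously controls $\|\Delta_{(k)}\|_2$ and $\hat{w}_{(k),j^\ast}$. Applying the basic inequality $Q_{(-k)}(\hat{w}_{(k)})\le Q_{(-k)}(w)$ and expanding using $X_t'(\hat{w}_{(k)}-w)=Z_t'\Delta_{(k)}+\xi_t\hat{w}_{(k),j^\ast}$ yields
\[
\sum_{t\in H_{(-k)}}\!\bigl(Z_t'\Delta_{(k)}+\xi_t\hat{w}_{(k),j^\ast}\bigr)^{2}\;\le\;2\Bigl|\sum_{t\in H_{(-k)}}\!u_tZ_t'\Delta_{(k)}\Bigr|+2|\hat{w}_{(k),j^\ast}|\,\Bigl|\sum_{t\in H_{(-k)}}\!u_t\xi_t\Bigr|.
\]
Bounding the right-hand side via H\"older ($\|\Delta_{(k)}\|_1\le 2$) together with Assumption \ref{assu: nonstat reg}(1)(3), and using Young's inequality on the cross term $2\hat{w}_{(k),j^\ast}\sum\xi_tZ_t'\Delta_{(k)}$ together with Assumption \ref{assu: nonstat reg}(2)(4), I will deduce that
\[
\sum_{t\in H_{(-k)}}\!(Z_t'\Delta_{(k)})^{2}+\tfrac14\hat{w}_{(k),j^\ast}^{2}\sum_{t\in H_{(-k)}}\!\xi_t^{2}\;=\;O_P(\sqrt{T_0\log N})+O_P(\log N).
\]
Combined with the restricted-eigenvalue-type control on $\Sigma_{(-k)}$ inherited from Assumption \ref{assu: consistent estim} transferred to $Z_t$, this delivers $\|\Delta_{(k)}\|_2^{2}=O_P(\sqrt{T_0\log N}/T_0)=o_P(1)$ and, crucially, $\hat{w}_{(k),j^\ast}^{2}=O_P\bigl(\sqrt{T_0\log N}/\sum_{t\in H_{(-k)}}\xi_t^{2}\bigr)$.

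With those bounds in hand, $R_{1,k}$ is handled exactly as in the proof of Theorem \ref{thm: nonstationarity}: by weak dependence and cross-fitting, the $\tZ$-averages over the disjoint blocks $H_k$ and $\{T_0+1,\dots,T\}$ are essentially independent of $\Delta_{(k)}$, so a H\"older bound combined with $\|\Delta_{(k)}\|_2=o_P(1)$ and the eigenvalue bound on $E[(\sum_{t\in A}\tZ_t)(\sum_{t\in A}\tZ_t)'/|A|]$ yields $R_{1,k}=o_P(1/\sqrt{T_1})$. For $R_{2,k}$ I substitute the bound on $\hat{w}_{(k),j^\ast}$ and invoke Assumption \ref{assu: non-stationary deviation}: the ratios $\sum_{t>T_0}\xi_t/\sqrt{\sum_{t\in H_{(-k)}}\xi_t^{2}}$ and $\sum_{t\in H_k}\xi_t/\sqrt{\sum_{t\in H_{(-k)}}\xi_t^{2}}$ are $O_P(T_1T^{-1/2})$, so after multiplying by $\hat{w}_{(k),j^\ast}=O_P((T_0\log N)^{1/4}/(\sum\xi_t^{2})^{1/4})$ and dividing by $T_1$ respectively $|H_k|$, the condition $T_1\ll T_0/\log N$ forces $\sqrt{T_1}R_{2,k}=o_P(1)$. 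The expansion in the theorem then follows.

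Given the expansion, the joint convergence of $\sqrt{T_1}(\hat{\tau}_k-\tau)_{k=1}^{K}$ to the same correlated Gaussian vector appearing in Theorem \ref{thm: asy distr} (with $c_0=\infty$, so $\min\{c_0,1\}=1$ and $g_{c_0,K}=1$) follows from a standard $\beta$-mixing CLT applied to $\{u_t\}$, exactly as in that proof, and the $t$-distribution conclusion for $\mathbb{T}_K$ is then identical to the derivation in Theorem \ref{thm: t statistic}: the $K+1$ independent $N(0,1)$ components $\xi_0,\xi_1,\dots,\xi_K$ combine so that $\sqrt{K}(\hat{\tau}-\tau_0)/\hat\sigma_{\htau}$, with the variance factor $\sqrt{1+Kr/T_1}$ correcting for the shared $\xi_0$, converges to $t_{K-1}$. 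The main obstacle is the oracle bound above: unlike in Lemma \ref{lem: L2 consistency nonstationary}, the residual $Y_t-X_t'w$ includes no $\xi_t$ contribution, but the SC program does not know which coordinate is the contaminated one, so I must simultaneously bound $\|\Delta_{(k)}\|_2$ and $\hat{w}_{(k),j^\ast}$ from a single basic-inequality chain, carefully handling the cross term via Young's inequality so that the large $\sum\xi_t^{2}$ (which grows faster than $T_0$ under non-stationarity) pins down $\hat{w}_{(k),j^\ast}$ to be small enough to kill $R_{2,k}$.
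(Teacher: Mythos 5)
Your decomposition of $\hat\tau_k-\tau$ into the $u$-term, $R_{1,k}$, and $R_{2,k}$, and your treatment of $R_{1,k}$ via coupling and the $\ell_2$-bound, match the paper exactly; the gap is in your bound on $\hat{w}_{(k),j^\ast}$. The basic inequality plus Young's inequality can only give
\[
\hat{w}_{(k),j^\ast}^{2}\sum_{t\in H_{(-k)}}\xi_{t}^{2}\;\lesssim\;\Bigl|\sum_{t\in H_{(-k)}}u_{t}Z_{t}'\Delta_{(k)}\Bigr|+\bigl(\text{terms linear in }\hat{w}_{(k),j^\ast}\bigr)\;=\;O_{P}\bigl(\sqrt{T_{0}\log N}\bigr),
\]
because the term $\sum_{t\in H_{(-k)}}u_{t}Z_{t}'\Delta_{(k)}=O_{P}(\sqrt{T_{0}\log N})$ enters additively and is not multiplied by $\hat{w}_{(k),j^\ast}$; no choice of Young weights removes it. This yields only $|\hat{w}_{(k),j^\ast}|=O_{P}\bigl((T_{0}\log N)^{1/4}/(\sum_{t\in H_{(-k)}}\xi_{t}^{2})^{1/2}\bigr)$ (your later line with exponent $1/4$ on $\sum\xi_t^2$ is inconsistent with your own display). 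Plugging this into $\sqrt{T_1}R_{2,k}$ together with $\sum_{t>T_0}\xi_t/\sqrt{\sum_{t\in H_{(-k)}}\xi_t^2}=O_P(T_1T^{-1/2})$ gives $\sqrt{T_1}R_{2,k}=O_P\bigl(T_1^{1/2}(\log N)^{1/4}T_0^{-1/4}\bigr)$, which vanishes only if $T_1^{2}\log N\ll T_0$ --- strictly stronger than the assumed $T_1\ll T_0/\log N$. So your step ``the condition $T_1\ll T_0/\log N$ forces $\sqrt{T_1}R_{2,k}=o_P(1)$'' fails.

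The paper's Lemma \ref{lem: non-stationary machinery lasso} obtains the sharper bound $|\beta'\Delta_{(k)}|=O_{P}\bigl(\sqrt{\log N/\sum_{t\in H_{(-k)}}\xi_{t}^{2}}+T_{0}^{3/4}(\log N)^{1/4}/\sum_{t\in H_{(-k)}}\xi_{t}^{2}\bigr)$, which is exactly what makes $T_1\ll T_0/\log N$ sufficient (and is why the second ratio condition in Assumption \ref{assu: non-stationary deviation}, which your argument never uses, is there). It does so not from the basic inequality but from the first-order conditions of the constrained program: first a contradiction argument (perturbing weight from the contaminated coordinate toward any other column) showing $\hat w_{(k)}$ is not supported solely on the contaminated coordinate, and then a directional first-order condition transferring mass between the contaminated coordinate and a coordinate in the support of $\hat{w}_{(k)}$, which yields an explicit ratio representation for $\beta'\Delta_{(k)}$ whose denominator is of order $\sum_{t\in H_{(-k)}}\xi_{t}^{2}$. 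That KKT-type argument is the technical heart of the theorem and is missing from your proposal; everything else you wrote tracks the paper.
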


To our knowledge, Theorem \ref{thm: nonstat deviation} guarantees the inference validity of SC under the most general available conditions without imposing a specific structure on the dynamics of the nonstationarity.

\begin{rem} DID is generally inconsistent under  deviations from a common nonstationarity. For example, suppose that there is a linear trend ($\xi_t=t$). The DID estimator would have a bias of $T_1(T_0+(T_1+1)/2)(w'\beta-N^{-1}\oneb_N' \beta)$. Thus, DID is inconsistent unless $w'\beta-N^{-1}\oneb_N' \beta=o(T^{-1}T_{1}^{-1}) $. \qed
\end{rem}

\section{Main extensions}
In this section, we present two main extensions. See Appendix \ref{app: additional extensions} for additional extensions.
\subsection{Time-varying predictive relationship} \label{sec: time-varying weights}

In the previous sections, we assumed that the predictive relationship between the treated and the control units is time-invariant, such that the SC weights do not change over time. Here, we discuss settings with time-varying weights.

\smallskip 

\noindent
\textbf{The $t$-test is valid with stationary weights.} The $t$-test accommodates certain forms of time-varying weights. Consider the following variant of model \eqref{eq:prediction_model} with time-varying weights,
\begin{equation}
Y_t(0)=X_t'w_t+u_t,\quad 1\le t\le T.\label{eq:prediction_model_time_varying_weights}
\end{equation}
Suppose that the time-varying weights satisfy the SC constraints, $w_t\in \mathcal{W}^{SC}$, $1\le t\le T$. 
In Section \ref{sec:common_trend}, we assume $Y_t(0)=V_t(0)+\theta_t$ and $X_t=\oneb_N \theta_t +Z_t$, where $\{(V_t(0),Z_t)\}_{t=1}^T$ is covariance-stationary. The model \eqref{eq:prediction_model_time_varying_weights} implies that $V_t(0)=u_t+X_t'w_t-\theta_t=u_t+Z_t'w_t$ since $w_t\in \mathcal{W}^{SC}$. Thus, as long as $\{(V_t(0),Z_t)\}_{t=1}^T$ is stationary, the results in Section \ref{sec:common_trend} apply and the $t$-test remains valid with time-varying weights. 

Stationarity of $\{(V_t(0),Z_t)\}_{t=1}^T$ is a high-level condition for the validity of the $t$-test that can be difficult to interpret in practice because of the composite nature of $V_t(0)$. A more primitive and interpretable sufficient condition for the validity of the $t$-test is stationarity of $\{(Z_t,w_t,u_t)\}_{t=1}^T$. When the weights are time-invariant ($w_t=w_\ast$ for all $t$), stationarity of $\{(Z_t,u_t)\}_{t=1}^T$ directly implies stationarity of $\{(V_t(0),Z_t)\}_{t=1}^T$, since $\{(V_t(0),Z_t)\}_{t=1}^T=\{((u_t+Z_t'w_\ast),Z_t)\}_{t=1}^T$. Therefore, the main additional requirement for the validity of the $t$-test in settings with time-varying predictive relationships is the stationarity of the weights $\{w_t\}_{t=1}^T$. 

To interpret the stationarity of the weights, it is helpful to relate our setting to time-varying coefficient models. Time-varying coefficient models typically have additional assumptions on the nature of the variation. For example, in latent large factor models, the factors or the factor loadings can be assumed to be nonparametric functions of observed variables \citep[e.g.,][]{connor2007semiparametric,connor2012efficient,fan2016projected,fan2021augmented}. We can consider a similar situation for the weights. In particular, it might be reasonable to assume that $w_t=f(Z_t,e_t)$ for some fixed unknown function $f(\cdot,\cdot)$, where $\{(Z_t,e_t,u_t)\}_{t=1}^{T}$ is stationary. Under this assumption, $\{w_t\}_{t=1}^T$ and $\{(V_t(0),Z_t)\}_{t=1}^T$ are stationary, and the theoretical results in Section \ref{sec:common_trend} apply. 

\smallskip 

\noindent
\textbf{Time-variability in the weights and the length of the confidence intervals.}
The variability in the weights over time affects the length of the confidence intervals obtained from the $t$-test. To see this, rewrite model \eqref{eq:prediction_model_time_varying_weights} in terms of the pseudo-true weights $w_\ast$ as 
\begin{equation*}
Y_t(0)=X_t'w_\ast+v_t,\quad v_t:=X_t'(w_t-w_\ast)+u_t.
\end{equation*}
This shows that the variability in the weights around $w_\ast$ and their persistence affect the LRV of the prediction errors and thus the length of the confidence intervals. 

In Appendix \ref{app: time-varying weights}, we provide simulation evidence illustrating the impact  of time-variability in $w_t$ on the length of the confidence intervals. We consider a scenario where $w_t\in \mathcal{W}^{SC}$ follows a Markov switching process. Consistent with our theory, more persistence (a larger probability of staying in the same regime) leads to a larger LRV and  wider confidence intervals. We also consider a scenario with misspecification ($w_t\notin \mathcal{W}^{SC}$) and show that the increase in length is even more pronounced in this case due to the variability in $w_t$ not being restricted by the SC constraints.

\smallskip 

\noindent
\textbf{Non-stationary weights.} So far, we have discussed the implications when $w_t$ is stationary. When $w_t$ does not vary in a stationary manner, one might question the suitability of SC methods in the application of interest. In this case, we might expect the fit of the SC model to change across time in a systemic pattern. One formal procedure to check this is the following placebo test. 
\begin{enumerate}\setlength\itemsep{0pt}
\item Split the pre-treatment period into a placebo pre-treatment period, $\{1,\dots,\widetilde{T}_0\}$, and a placebo post-treatment period $\{\widetilde{T}_0+1,\dots,T_0\}$.
\item Apply the $t$-test, treating $\{1,\dots,\widetilde{T}_0\}$ as the pre-treatment period and $\{\widetilde{T}_0+1,\dots,T_0\}$ as the post-treatment period. 
\item Reject if zero is not in the confidence interval.
\end{enumerate}
Maintaining the other assumptions underlying the $t$-test, a rejection  indicates that the weights are time-varying in ways that render the $t$-test invalid. More generally, this placebo test is an omnibus specification test for the joint validity of all the assumptions underlying the $t$-test. We caution that while non-rejections provide evidence in favor of the applicability of the $t$-test, such non-rejections do not imply that the $t$-test is valid.

In general, extrapolation is unavoidable in SC settings. Therefore, when the placebo test rejects the null of no effect, researchers need to make assumptions on how the weights change over time that allow for such extrapolation. For example, SDID accommodates a different type of time-varying weights than the $t$-test (via $\hat{w}^{\text{SDID}}_i\hat\lambda^{\text{SDID}}_t$ in \eqref{eq:sdid}). In applications where the weights are time-varying in ways that are neither covered above nor by SDID, one could impose explicit models on how the weights change over time. We leave the development of such methods for future research.

\subsection{Random treatment status}

In the previous sections, we considered a setting where the treatment status is fixed. In Appendix \ref{app:random_treatment}, which we summarize here, we establish the validity of the $t$-test when the treatment status is random. We consider a setting with one treated unit in which the identity of the treated unit is random. Treatment assignment is typically not unconditionally randomized in SC applications. Therefore, we allow the treatment assignment to depend on observed and latent variables.

First, we consider a setting where the treatment assignment depends on the outcomes in the pre-treatment period and potentially other observables. We show that the $t$-test is valid in such settings, provided that selection into treatment only depends on ``recent'' pre-treatment outcomes but not on ``ancient history.'' This assumption is reasonable in settings where treatment adoption is the result of recent developments and changes in outcomes.

Second, we consider a setting in which the potential outcomes are determined by a factor model and selection is based on the unit-specific time-invariant latent components in the factor structure and additional selection-specific idiosyncratic shocks. The $t$-test is valid in this setting because (we show) the assumptions in the previous sections hold conditional on the treatment assignment.

\section{Application-based simulations}
\label{sec:simulations}

It is important to understand the advantages and limitations of the $t$-test in practice. Here we present simulation evidence demonstrating when it works well and when it does not. We calibrate the simulations to the empirical application in Section \ref{sec:application}, where we revisit the analysis of the effect of carbon taxes on emissions in \citet{andersson2019carbon}. We show results for $(T_0,T_1,N)=(30,16,14)$ as in \citet{andersson2019carbon}. In Appendix \ref{app:additional_simulations}, we further investigate the performance of the $t$-test with $T_0=150$, because the theoretical results in Section \ref{sec:sparse_deviations} require $T_0$ to be much larger than $T_1$, and the performance with $T_1\in \{10,12,14,16\}$. All simulations were carried out in \texttt{R} \citep{R2023}.

We generate the treated outcome as
\[
Y_{t}(0)=\mu+X_t'w+u_t, ~u_{t}=\rho_uu_{t-1}+v_t, ~v_t\overset{iid}\sim N(0,\sigma^2_{v}),~ 1\le t\le T,
\]
and set $\alpha_t=\tau=0$. The parameters $(\mu,w)$ vary across DGPs, and $\sigma^2_{v}$ and $\rho_u$ are obtained by fitting an AR(1) model to the empirical SC residuals. The degree of persistence in $\{u_t\}$ is relatively low: $\rho_u=0.31$. To generate the control outcomes, we fit a factor model with four factors to the detrended data and let
$Y_{it}(0)=\theta_{it}+L_i'F_t+\eta_{it}$, 
where $\theta_{it}$ is a unit-specific non-stationary component that varies across DGPs, $F_{t}=(F_{1t},\dots, F_{4t})'$, $F_{st}\overset{iid}\sim N(0,\sigma^2_{F_s})$, $\eta_{it}=\rho_i\eta_{it-1}+\epsilon_{it}$, and $\epsilon_{it}\sim N(0,\sigma_{\epsilon_i}^2)$. 
$(L_1,\dots,L_N)$, $(\sigma^2_{F_1},\dots,\sigma^2_{F_4})$, $(\rho_1,\dots,\rho_N)$,  and $(\sigma^2_{\epsilon_1},\dots,\sigma^2_{\epsilon_N})$ are obtained from and estimated based on the factor model fitted to the data. 

We consider three stationary and six non-stationary DGPs; see Table \ref{tab:dgps}.
DGP1--DGP3 satisfy the assumptions in Section \ref{sec: theory stationary}. DGP4--DGP5 satisfy the conditions in Section \ref{sec:common_trend} since $\mathbf{1}_N'w^{\text{MIS}}=1$, and DGP6--DGP7 satisfy the assumptions in Section \ref{sec:sparse_deviations}. DGP8 and DGP9 allow for studying the performance of the $t$-test in settings not covered by our theory. While the nonstationarity in DGP8 satisfies Assumption \ref{assu: non-stationary DGP }, which allows for unit-specific trends, our theory does not allow for misspecification when there are deviations from a common nonstationarity. DGP9 captures a setting where the deviations are driven by multiple non-stationary processes, whereas Assumption \ref{assu: non-stationary DGP } requires $\{\xi_t\}$ to be a scalar-valued process.
\begin{table}[H]
\centering
\caption{DGPs}
\footnotesize
\begin{tabular}{l l l}
\toprule
\midrule
\multicolumn{3}{c}{Stationary DGPs}\\
\midrule
DGP1&$\mu=0$, $w=w^{\text{SC}}$, $\theta_{it}=0$& SC weights\\
DGP2&$\mu=0$, $w=w^{\text{DID}}$, $\theta_{it}=0$& DID weights\\
DGP3&$\mu=2$, $w=w^{\text{MIS}}$, $\theta_{it}=0$& Misspecified\\
\midrule
\multicolumn{3}{c}{Non-stationary DGPs}\\
\midrule
DGP4 & $\mu=2$, $w=w^{\text{MIS}}$, $\theta_{it}=t$ &Common linear trend \& misspecification\\ 
DGP5 & $\mu=2$, $w=w^{\text{MIS}}$, $\theta_{it}=\theta_{t}$, $\theta_t=\theta_{t-1}+\xi_t, \xi_t\overset{iid}\sim N(0,1)$&Common random walk \& misspecification\\ 
DGP6 & $\mu=0$, $w=w^{SC}$, $\theta_{it}=t+ \mathbf{1}\left\{i=1\right\} \cdot t$& Common trend \& deviation\\
DGP7 & $\mu=0$, $w=w^{SC}$, $\theta_{it}=\theta_{1t}+\mathbf{1}\left\{i=1\right\}\cdot \theta_{2t}$,& Common random walk \& deviation\\ 
&$\theta_{1t}=\theta_{1t-1}+\xi_{1t}, \xi_{1t}\overset{iid}\sim N(0,1)$,\\ 
&$\theta_{2t}=\theta_{2t-1}+\xi_{2t}, \xi_{2t}\overset{iid}\sim N(0,1)$&\\
DGP8&$\mu=0$, $w=w^{\text{MIS}}$, $\theta_{it}=i+i\cdot t$&Heterogeneous trends \& misspecification\\
DGP9&$\mu=0$, $w=w^{\text{SC}}$, $\theta_{it}=i+\theta_{it-1}+\xi_{it}, \xi_{it}\overset{iid}\sim N(0,1)$& Random walks with heterogeneous drifts\\
\midrule
\bottomrule
\multicolumn{3}{p{16cm}}{\scriptsize{\it Notes:} $w^{\text{SC}}$: SC estimate of $w$. $w^{\text{DID}}=(1/N,\dots,1/N)'$, $w^{\text{MIS}}=(-3,3,1,0,\dots,0)'$ (such that $\mathbf{1}_N'w^{\text{MIS}}=1$, which ensures that DGP4--DGP5 satisfy the assumptions in Section \ref{sec:common_trend}).}
\end{tabular}
\label{tab:dgps}
\end{table}

We compare the proposed $t$-test to four alternative methods: (i) DID with $K$-fold cross-fitting, (ii) subsampling inference based on $\htau^{\text{SC}}$ \citep{li2020statistical},\footnote{Our \texttt{R}-implementation is based on \texttt{Matlab} code obtained from Kathleen Li. We take the subsample size to be $(2/3)\cdot T_0$, corresponding to the middle choice of $m=60$ in Section 5 of \citet{li2020statistical}.} (iii) SDID \citep{arkhangelsky2021synthetic} implemented using the \texttt{R}-package \texttt{synthdid} \citep{hirshberg2021synthdid}, and (iv) $K$-fold cross-fitting based on the true $w$ (Oracle). In Appendix \ref{app:comparison_permutation}, we also compare the $t$-test to the widely-used permutation approach by \citet{abadie10sc}.

Table \ref{tab:empirical_mc} shows the bias, coverage, and average length for all methods and DGPs. The nominal coverage is $1-\alpha=0.9$. Under stationarity, our $t$-test exhibits an excellent performance. Under correct specification, bias, coverage, and average length of the confidence intervals are similar to the Oracle. Consistent with our theory, cross-fitting effectively removes the bias under misspecification. Misspecification does not affect coverage accuracy but leads to wider confidence intervals because it increases the variance of the prediction errors.

When there is a common nonstationarity, the $t$-test performs well and is fully robust to misspecification, consistent with our theory. When there are deviations from the common nonstationarity and $T_0=30$, it may exhibit some undercoverage, especially if there is also misspecification (for example, under DGP8, which is not covered by our theory). This is expected since our theoretical results under deviations from nonstationarity require correct specification and $T_0$ to be much larger than $T_1$. As shown in the Appendix \ref{app:large_T0}, when $T_0=150$, the $t$-test works well under all DGPs covered by our theory. The results for DGP8 and DGP9 even suggest that if $T_0$ is large enough, the $t$-test remains quite robust in settings not covered by our theory.

The $t$-test performs well compared to the alternative methods. It is more robust than DID, which can exhibit large biases under deviations from a common nonstationarity, and can yield substantially shorter confidence intervals than DID (DGP6--DGP9).\footnote{We note that our application-based DGPs are quite favorable to DID: even when the data are stationary and SC is correctly specified, the DID confidence intervals are not much wider than those from the $t$-test. This finding suggests that the asymptotic efficiency gains from using the $t$-test are limited under these specific DGPs. By contrast, in simulations based on \citet{abadie2003economic}, we found that the DID confidence intervals can be much wider than those from the $t$-test, even when DID is theoretically valid \citep[][Section 5]{chernozhukov2022arxiv7}.} Subsampling undercovers even under correct specification when the bias is negligible. When SC is misspecified, $\htau^{\text{SC}}$ is typically biased (see also Figure \ref{fig:bias}), which can result in zero coverage. The $t$-test demonstrates a better coverage accuracy overall, and cross-fitting effectively removes the bias due to misspecification. Finally, while SDID is nearly as effective at removing the bias as cross-fitting, it exhibits undercoverage due to the confidence intervals being too short or overcoverage due to the confidence intervals being too long under most DGPs. When interpreting the results of this simulation comparison, it is important to note that SDID with one treated unit relies on cross-sectional variation and homoskedasticity for inference, whereas the $t$-test relies on time series variation. These two sets of assumptions are non-nested. See Sections \ref{sec:literature} and \ref{sec: when} for further discussions.

\begin{table}[H]
\setlength{\tabcolsep}{4pt}
\linespread{1.05}
\scriptsize
\centering
\caption{Simulation results with $(T_0,T_1,N)=(30,16,14)$}
\begin{tabular}{lccccccccccccccc}

\toprule
\midrule

&\multicolumn{5}{c}{Bias$\times$10} & \multicolumn{5}{c}{Coverage} & \multicolumn{5}{c}{Average length CI} \\
\cmidrule(l{5pt}r{5pt}){2-6}\cmidrule(l{5pt}r{5pt}){7-11} \cmidrule(l{5pt}r{5pt}){12-16}
$K$&$t$-DISCo&DID&SuSa&SDID&Oracle&$t$-DISCo&DID&SuSa&SDID&Oracle&$t$-DISCo&DID&SuSa&SDID&Oracle\\
\midrule
&\multicolumn{15}{c}{DGP1 (stationary, SC weights)}\\
\cmidrule(l{5pt}r{5pt}){2-16}
3 & 0.00 & 0.01 & -0.00 & -0.01 & 0.00 & 0.91 & 0.90 & 0.77 & 0.92 & 0.89 & 0.09 & 0.11 & 0.03 & 0.10 & 0.07 \\ 
  4 & -0.00 & -0.01 & -0.00 & -0.01 & -0.00 & 0.90 & 0.89 & 0.77 & 0.92 & 0.89 & 0.07 & 0.09 & 0.03 & 0.10 & 0.06 \\
\midrule
&\multicolumn{15}{c}{DGP2 (stationary, DID weights)}\\
\cmidrule(l{5pt}r{5pt}){2-16}
3 & -0.00 & -0.00 & -0.00 & -0.01 & -0.00 & 0.90 & 0.89 & 0.76 & 0.93 & 0.89 & 0.09 & 0.07 & 0.04 & 0.10 & 0.07 \\ 
  4 & -0.00 & -0.00 & -0.00 & -0.01 & -0.00 & 0.91 & 0.89 & 0.76 & 0.93 & 0.89 & 0.07 & 0.06 & 0.04 & 0.10 & 0.06 \\ 
\midrule
&\multicolumn{15}{c}{DGP3 (stationary, misspecified)}\\
\cmidrule(l{5pt}r{5pt}){2-16}
 3 & -0.00 & -0.02 & 19.99 & -0.05 & 0.00 & 0.90 & 0.90 & 0.00 & 0.69 & 0.89 & 0.68 & 0.63 & 0.32 & 0.10 & 0.07 \\ 
  4 & 0.02 & 0.02 & 19.99 & -0.05 & 0.00 & 0.90 & 0.90 & 0.00 & 0.69 & 0.89 & 0.56 & 0.53 & 0.32 & 0.10 & 0.06 \\ 
\midrule
&\multicolumn{15}{c}{DGP4 (common linear trend, misspecified)}\\
\cmidrule(l{5pt}r{5pt}){2-16}
3 & -0.00 & -0.01 & 20.01 & -0.01 & -0.00 & 0.89 & 0.90 & 0.00 & 0.71 & 0.90 & 0.67 & 0.63 & 0.32 & 0.10 & 0.08 \\ 
  4 & 0.02 & -0.00 & 20.01 & -0.01 & -0.00 & 0.89 & 0.89 & 0.00 & 0.71 & 0.90 & 0.56 & 0.53 & 0.32 & 0.10 & 0.06 \\ 
\midrule  
&\multicolumn{15}{c}{DGP5 (common random walk, misspecified)}\\
\cmidrule(l{5pt}r{5pt}){2-16}
3 & 0.02 & 0.01 & 19.98 & 0.02 & 0.00 & 0.90 & 0.90 & 0.00 & 0.72 & 0.89 & 0.68 & 0.63 & 0.32 & 0.10 & 0.07 \\ 
  4 & 0.04 & 0.02 & 19.98 & 0.02 & -0.00 & 0.89 & 0.89 & 0.00 & 0.72 & 0.89 & 0.56 & 0.53 & 0.32 & 0.10 & 0.06 \\ 
\midrule
&\multicolumn{15}{c}{DGP6 (common linear trend \& deviation, SC weights)}\\
\cmidrule(l{5pt}r{5pt}){2-16}
3 & -0.07 & -16.43 & -0.09 & -0.03 & -0.00 & 0.81 & 1.00 & 0.74 & 1.00 & 0.89 & 0.08 & 4.08 & 0.04 & 18.83 & 0.07 \\ 
  4 & -0.05 & -15.71 & -0.09 & -0.03 & 0.00 & 0.78 & 0.00 & 0.74 & 1.00 & 0.88 & 0.06 & 2.52 & 0.04 & 18.83 & 0.06 \\ 
\midrule
&\multicolumn{15}{c}{DGP7 (common random walk \& deviation, SC weights)}\\
\cmidrule(l{5pt}r{5pt}){2-16}
3 & -0.00 & 0.05 & 0.00 & 0.04 & -0.00 & 0.88 & 0.77 & 0.75 & 0.98 & 0.90 & 0.10 & 0.79 & 0.04 & 2.61 & 0.07 \\ 
  4 & -0.00 & -0.01 & 0.00 & 0.04 & 0.00 & 0.86 & 0.64 & 0.75 & 0.98 & 0.89 & 0.07 & 0.54 & 0.04 & 2.61 & 0.06 \\
\midrule
&\multicolumn{15}{c}{DGP8 (heterogeneous trends, misspecified)}\\
\cmidrule(l{5pt}r{5pt}){2-16}
 3 & -0.02 & -345.03 & -0.00 & -0.74 & 0.00 & 0.59 & 1.00 & 0.80 & 1.00 & 0.89 & 0.33 & 85.77 & 0.42 & 10.59 & 0.07 \\ 
  4 & 0.04 & -329.97 & -0.00 & -0.74 & -0.00 & 0.64 & 0.00 & 0.80 & 1.00 & 0.89 & 0.35 & 52.90 & 0.42 & 10.59 & 0.06 \\ 
\midrule
&\multicolumn{15}{c}{DGP9 (random walks with heterogeneous drifts, SC weights)}\\
\cmidrule(l{5pt}r{5pt}){2-16}
 3 & -0.02 & -369.70 & 0.00 & 0.01 & -0.00 & 0.92 & 1.00 & 0.81 & 1.00 & 0.90 & 0.53 & 91.81 & 0.21 & 15.05 & 0.08 \\ 
  4 & 0.01 & -353.62 & 0.00 & 0.01 & 0.00 & 0.82 & 0.00 & 0.81 & 1.00 & 0.90 & 0.26 & 56.76 & 0.21 & 15.05 & 0.06 \\ 
\midrule
\bottomrule

\multicolumn{16}{p{16.2cm}}{\scriptsize{\it Notes:} Simulation design based on the empirical application as described in the main text. Simulations are based on 5,000 repetitions, except for SDID for which we use 200 repetitions due to the placebo procedure being computationally expensive. $t$-DISCo: Proposed debiased SC estimator and $t$-test.\tablefootnote{``$t$-DISCo'' is short for $t$-Statistic Debiased Inference with Synthetic Controls. We use ``$t$'' and capitalization to distinguish the abbreviation from ``disco,'' which has been used to refer to Distributional Synthetic Controls \citep[e.g.,][]{gunsilius2025disco}.} SuSa: Subsampling inference based on $\htau^{\text{SC}}$ \citep{li2020statistical}. CI: Confidence interval. Nominal coverage: $1-\alpha= 0.9$. }
\end{tabular}
\addtolength{\tabcolsep}{4pt}    
\label{tab:empirical_mc}
\end{table}
\normalsize
\linespread{1.25}

\section{Estimating the impact of carbon taxes on emissions}
\label{sec:application}
In this section we revisit the SC analysis of the causal effect of carbon taxes on CO2 emissions in \citet{andersson2019carbon}. \citeauthor{andersson2019carbon} exploits the introduction of a carbon tax on transport fuels during the early 1990s in Sweden, using $J=14$ OECD countries as control units.\footnote{The countries are: ``Australia, Belgium, Canada, Denmark, France, Greece, Iceland, Japan, New Zealand, Poland, Portugal, Spain, Switzerland, and the United States'' \citep[][p.9]{andersson2019carbon}.} The outcome variable of interest measures CO2 emissions from transport (in metric tons per capita). The data are annual panel data from 1960--2005.\footnote{The data are available in the replication package \citep{andersson2019replication}.} \citeauthor{andersson2019carbon} uses data up to 2005 because the EU emissions trading system started in that year. The pre-treatment period is 1960--1989, and the post-treatment period is 1990--2005, so that $(T_0,T_1)=(30,16)$.

Figure \ref{fig:data_raw} displays the raw data. It shows a drop right before 1990 and reduced growth in emissions afterward. We apply the $t$-test to investigate whether this drop and reduced growth are due to the carbon tax and compare the results to other methods. All computations were performed in \texttt{R} \citep{R2023}. 

\begin{figure}[H]
\caption{Raw data on emissions per capita \citep{andersson2019carbon,andersson2019replication}}
\begin{center}
\includegraphics[width=0.6\textwidth,trim = {0 2cm 0 3cm}]{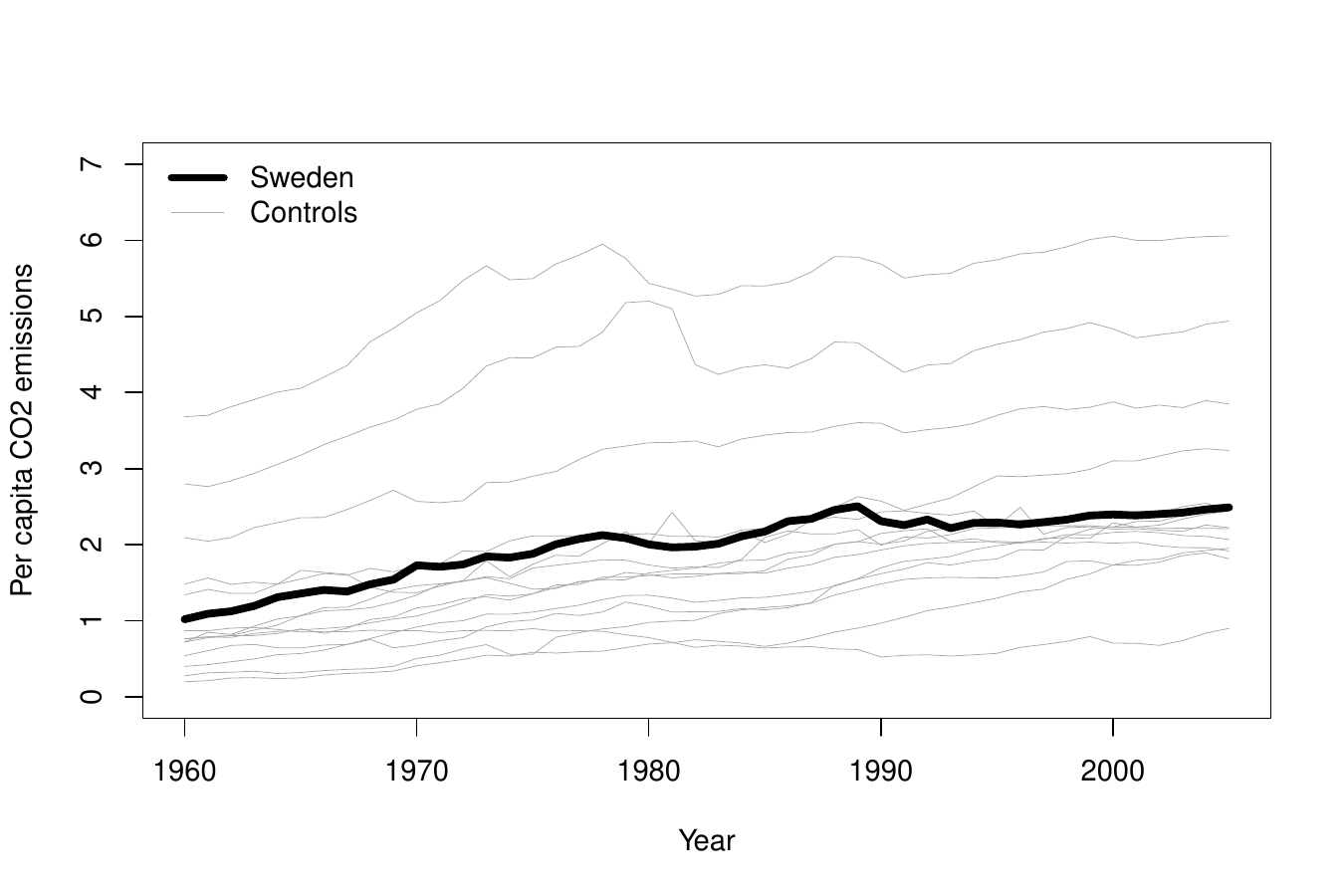}
\end{center}
\label{fig:data_raw}
\end{figure}

We start by assessing the validity of the $t$-test using the placebo test proposed in Section \ref{sec: time-varying weights}. 
Table \ref{tab:placebo} shows the placebo ATT estimates and the confidence intervals for $K=3$ and $\widetilde{T}_0\in \{18,21\}$ (to ensure a long-enough time period for estimating the weights and divisibility by $K=3$). The placebo estimates are smaller in magnitude than the actual effect estimates in Table \ref{tab:comparison} below, and the confidence intervals include zero. Thus, we do not reject the validity of the assumptions underlying the $t$-test.
\begin{table}[H]
\footnotesize
\centering
\caption{Results placebo test}
\label{tab:placebo}

\begin{tabular}{lccc}
\toprule
\midrule
$\widetilde{T}_0$ & ATT & \multicolumn{2}{c}{90\%-CI}\\
\midrule
18 & 0.01 & -0.18 & 0.19 \\ 
  21 & 0.10 & -0.21 & 0.41 \\ 
\midrule
\bottomrule
\multicolumn{4}{p{4.2cm}}{\scriptsize{\it Notes:} Table shows point estimates and 90\% confidence intervals for the placebo test described in Section \ref{sec: time-varying weights}.}
\end{tabular}
\end{table}
\normalsize

Panel (a) of Table \ref{tab:comparison} shows the estimated ATT and 90\% confidence intervals based on the $t$-test with $K=3$. $K=3$ is a useful benchmark in settings where $T_0$ is small or moderate (Section \ref{sec: choosing K}). The estimated ATT is negative and significant. Thus, our findings corroborate the results in \citet{andersson2019carbon} based on the permutation test of \citet{abadie10sc}.\footnote{Note that our SC specification differs from \citet{andersson2019carbon}. We use all past outcomes as predictors, whereas \citet{andersson2019carbon} uses a subset of past outcomes and additional predictors to compute the weights.
}

Panels (b)--(d) of Table \ref{tab:comparison} show the corresponding results based on DID with $K=3$, subsampling, and SDID. The point estimates are  similar across all methods, ranging from $-0.35$ for SDID to $-0.21$ for DID. Consistent with the $t$-test, the results based on DID and subsampling suggest that the carbon tax significantly decreased emissions. SDID yields wider confidence intervals that include zero. This finding is in line with the simulations in Section \ref{sec:simulations}, which show that SDID may yield wider confidence intervals than the $t$-test when there is heterogeneity in the nonstationarities across control units.

\begin{table}[H]
\footnotesize
\centering
\caption{Results $t$-test and comparison to other methods}
\label{tab:comparison}

\begin{tabular}{ccc ccc ccc ccc}
\toprule
\midrule
 \multicolumn{3}{c}{(a) $t$-DISCo ($K=3$)}&  \multicolumn{3}{c}{(b) DID ($K=3$)} & \multicolumn{3}{c}{(c) Subsampling} & \multicolumn{3}{c}{(d) Synthetic DID}\\
 \cmidrule(l{5pt}r{5pt}){1-3} \cmidrule(l{5pt}r{5pt}){4-6}  \cmidrule(l{5pt}r{5pt}){7-9} \cmidrule(l{5pt}r{5pt}){10-12} 
ATT & \multicolumn{2}{c}{90\%-CI} &ATT & \multicolumn{2}{c}{90\%-CI} &ATT & \multicolumn{2}{c}{90\%-CI}&ATT & \multicolumn{2}{c}{90\%-CI}  \\
\midrule
-0.27 & -0.41 & -0.14 & -0.21 & -0.36 & -0.07 & -0.28 & -0.37 & -0.23 & -0.35 & -0.84 & 0.14 \\ 
\midrule
\bottomrule
\multicolumn{12}{p{13.5cm}}{\scriptsize{\it Notes:} Panel (a): Debiased SC estimate and 90\% confidence intervals based on the $t$-test ($t$-DISCo). Panels (b)--(d): Results for DID, subsampling, and SDID, implemented as described in Section \ref{sec:simulations}. Note that the subsampling CI is not symmetric due to being based on the percentile method.}
\end{tabular}
\end{table}
\normalsize

\section{Recommendations for practice}
\label{sec:guide}

\subsection{When should you use the $t$-test?}
\label{sec: when}
Before deciding whether to use the proposed $t$-test, researchers need to decide whether to use SC in the first place. \citet{abadie2021using} provides a detailed discussion of the relevant practical considerations. The most popular alternative to SC is DID. While conventional SC and DID are non-nested \citep{DI16}, we show that debiased SC is more robust than DID. Therefore, we recommend it whenever the DID assumptions are questionable.

There are three important and interrelated considerations when choosing a suitable  method for making inferences on the ATT using the SC method. See Section \ref{sec:literature} for references to papers proposing inference methods for per-period effects and sharp null hypotheses.

\smallskip 

\noindent
\textbf{Number of treated units.} The $t$-test is designed for applications with one treated unit. When there are multiple treated units, the $t$-test can be applied separately for each unit. However, when the number of treated units is large, other approaches that directly target average effects across units, such as \citet{abadie2021penalized} and \citet{arkhangelsky2021synthetic}, might be preferable. For applications with multiple treated units and staggered treatment adoption, we recommend using methods that are specifically designed to accommodate staggered adoption, such as \citet{shaikh2021randomization}, \citet{benmichael2022synthetic}, or \citet{cattaneo2023uncertainty}.

\smallskip 

\noindent
\textbf{Number of periods.} Asymptotic normality results for the ATT with a single treated unit require the number of periods to tend to infinity. Existing inference methods based on asymptotic normality that rely on classical estimators of the LRV or subsampling can exhibit substantial size distortions in small samples. By contrast, the $t$-test, which relies on a self-normalized test statistic, exhibits higher-order improvements and performs well in simulations even when $T_0$ and $T_1$ are small. Therefore, we recommend using the $t$-test instead of methods that rely on estimating the LRV or subsampling when $T_0$ and $T_1$ are small or moderate, as is the case in many SC applications. 
When $T_1$ is too small to rely on asymptotics where $T_1\rightarrow \infty$, researchers can use inference procedures that are valid when $T_1$ is fixed, such as \citet{chernozhukov2021exact} or \citet{cattaneo2023uncertainty}.\footnote{See, for example, \citet{masini2020counterfactual} and \citet{masini2021jasa} for inference methods based on (penalized) regression approaches for estimating counterfactuals.} This is relevant, for example, when structural breaks in the post-treatment period invalidate the stationarity assumptions required for the $t$-test to be valid or when control units also get treated shortly after the treated unit (e.g., in staggered adoption designs).

\smallskip 

\noindent
\textbf{Which type of variation to exploit for inference.} In panel data settings, inference methods can exploit the time series and/or the cross-sectional dimension. The $t$-test exploits the time series dimension for inference. It relies on stationarity and weak dependence of the SC prediction errors for the treated unit over time and is therefore particularly well-suited for settings where the units are heterogeneous. This is often the case in SC applications based on aggregate units, such as states or countries. For example, in the empirical application in Section \ref{sec:application}, the treated unit is Sweden, and the control units are other OECD countries, including Iceland, Spain, and the United States.

The stationarity and weak dependence assumptions underlying the $t$-test are not innocuous and might be questionable when there are structural breaks in the data. If the units are homogeneous enough to justify homoskedasticity and weak dependence across units, we recommend methods that exploit cross-sectional variation, such as \citet{arkhangelsky2021synthetic}.

\subsection{Recommendations for implementing the $t$-test}

In Section \ref{sec: when}, we discuss when to use the $t$-test. Here we provide some recommendations for implementing the $t$-test in empirical applications.

First, with non-stationary data, the robustness of the $t$-test improves substantially as $T_0$ increases. Therefore, we recommend collecting enough pre-treatment data in such settings. When collecting additional pre-treatment data, researchers need to be careful about structural breaks in the data. The placebo test described in Section \ref{sec: time-varying weights} can be used to test for structural breaks.

Second, the $t$-test relies on the predictive relationship between the treated and the control units being sufficiently stable over time. In Section \ref{sec: time-varying weights}, we demonstrate that the $t$-test remains valid when the SC weights vary  over time in a stationary manner. For applications where researchers are concerned about the weights changing in a non-stationary manner, we recommend using the placebo test described in Section \ref{sec: time-varying weights} to assess the validity of the assumptions underlying the $t$-test. This test requires enough pre-treatment periods.

Finally, the choice of $K$ is subject to an inherent trade-off between the expected length of the confidence intervals and their coverage accuracy. For SC applications with small and moderate sample sizes, choosing $K=3$ provides a useful and robust benchmark: it ensures good coverage properties while providing a reasonable RAE. More generally, researchers can use the RAE formula \eqref{eq:rae} in conjunction with estimates of the degree of persistence in the prediction errors and application-based simulations to guide the choice of $K$. Section \ref{sec: choosing K} provides a detailed discussion on how to choose $K$.

\setlength{\bibsep}{1pt}
\bibliographystyle{apalike}
\bibliography{SC_biblio}

\newpage

\appendix

\setcounter{page}{1} 

\begin{center}
\LARGE{Appendix (for online publication)}
\end{center}

\startcontents[sections]
\printcontents[sections]{l}{1}{\setcounter{tocdepth}{2}}

\section{Random treatment status}
\label{app:random_treatment}

In the main text, we considered a setup where the treatment status is fixed. Here we discuss the properties of our method when the treatment status is random. In SC applications, the treatment assignment is usually not completely exogenous, and unconfoundedness conditions have been considered in the literature. The idea is that conditional on certain variables, the treatment assignment is independent (or mean-independent) of the potential outcomes. In \citet{kellogg2021combining}, the conditioning variables are allowed to include all the observed variables before the treatment; in works such as  \citet{athey2021matrix}, \citet{ferman2021properties}, and \citet{imbens2023identification} among  others, the conditioning variables are latent variables.\footnote{See \citet{arkhangelsky2023largesample} for some interesting recent work on combining both types of assumptions in SC settings with many treated units.} Here we discuss our results under these two approaches. We treat the treatment status as random and the treatment date as fixed. Let $D\in \{0,\dots,N\}$ denote a random variable indicating the identity of the treated unit. For $1\leq t\leq T$ and $i\in \{0,\dots,N\}$, we observe $Y_{it}=Y_{it}(1)\cdot \oneb\{D=i\} \cdot  \oneb\{t\geq T_0+1\} + Y_{it}(0)\cdot (1-\oneb\{D=i\} \cdot  \oneb\{t\geq T_0+1\}) $, where $Y_{it}(1)=Y_{it}(0)+\alpha_{it}$.

\subsection{Selection on observed variables}

We follow \citet{kellogg2021combining} and allow the treatment assignment $D$ to depend on observed variables, including the outcome variables in the pre-treatment period. Their paper establishes a decomposition of the bias of SC methods. We deviate from their paper in two ways. First, we allow the distribution of each unit to be different. Second, we impose  additional assumptions to derive inference results for the ATT.

\begin{assumption}
\label{assu: kellogg part 1}For $i\in\{0,\dots,N\}$, $Y_{it}(0)=\theta_{t}+V_{it}$,
where $\{V_{t}\}_{t=1}^{T}$ and $\{\theta_{t}\}_{t=1}^{T}$ are two
independent processes with $V_{t}=(V_{0t},\dots,V_{Nt})'\in\RR^{N+1}$.
Moreover, $\{V_{t}\}_{t=1}^{T}$ is covariance-stationary and satisfies
the following conditions:
\begin{enumerate}\setlength\itemsep{-1.5pt}
\item There exists a constant $\kappa_{1}>0$ such that all the eigenvalues
of $EV_{t}V_{t}'$ are in $[1/\kappa_{1},\kappa_{1}]$
\item There exists a sequence $\rho_{T}>0$ such that $P(\max_{1\leq t\leq T}\|V_{t}\|_{\infty}\leq\rho_{T})\rightarrow1$
\item $\{V_{t}\}_{t=1}^{T}$ is $\beta$-mixing with coefficient satisfying
$\beta_{{\rm mix}}(\gamma_{T})\rightarrow0$ for some $\gamma_{T}$
with $0<\gamma_{T}<r/2$ and $\rho_{T}\gamma_{T}=o(\min\{\sqrt{T_{0}},\sqrt{T_{1}}\})$
\item $\max_{1\leq k\leq K}\||H_{(-k)}|^{-1}\sum_{t\in H_{(-k)}}V_{t}V_{t}'-EV_{t}V_{t}'\|_{\infty}=o_{P}(1)$
\end{enumerate}
\end{assumption}
Assumption \ref{assu: kellogg part 1} allows for a common non-stationary trend $\theta_t$ and is similar to the conditions in Section \ref{sec:common_trend}.  
Our object of interest is 
\begin{equation}
\tau=T_{1}^{-1}\sum_{i=0}^{N}\sum_{t=T_{0}+1}^{T}\alpha_{it}\cdot\oneb\{D=i\}.\label{eq:random_att}
\end{equation}
Notice that only one unit is treated, and  $D \in \{0,\dots,N\}$ is allowed to be a random variable. Hence, this target is also random and our goal is to construct a prediction interval for $\tau$. We impose the following unconfoundedness condition.

\begin{assumption}
\label{assu: kellogg part 2}Let $Y_{t}(0)=(Y_{0t}(0),\dots,Y_{Nt}(0))'\in\RR^{N+1}$.
Conditional on $\{Y_{t}(0)\}_{T_{0}-B_{T}\leq t\leq T_{0}}$, $D$
is independent of $\{Y_{t}(0)\}_{1\leq t\leq T}$, where $B_{T}\rightarrow\infty$
satisfies $B_{T}\rho_{T}=o(\min\{\sqrt{T_{0}},\sqrt{T_{1}}\})$. 
\end{assumption}

Under Assumption \ref{assu: kellogg part 2}, the selection into treatment can only depend on the outcome variables $B_T$ periods before the treatment date but does not depend on ``ancient history''. This might be reasonable if the treatment is a result of contemporary political, social and/or economic developments, such as public sentiment on tobacco or changes in geopolitical status. One can include additional pre-determined variables to the conditioning variables, and the same analysis applies once we write the assumptions conditional on these extra variables. 

We define our procedure as follows. For $i\in\{0,\dots,N\}$ and $k\in\{1,\dots,K\}$, we define 
\[
\hat{w}_{(k)}^{(i)}\in\underset{w\in\mathcal{W}^{SC}}{\arg\min}\,\sum_{t\in H_{(-k)}}(Y_{it}-Y_{-i,t}'w)^{2},
\]
where $Y_{-i,t}\in\RR^{N}$ is the vector $Y_{t}\in\RR^{N+1}$ with
the $i$-th component removed. For $k\in\{1,\dots,K\}$, we define the
estimator 
\[
\hat{\tau}_{k}=\frac{1}{T_1}\sum_{i=0}^{N}\sum_{t=T_{0}+1}^{T}(Y_{it}-Y_{-i,t}'\hat{w}_{(k)}^{(i)})\mathbf{1}\{D=i\}-\frac{1}{|H_{k}|}\sum_{i=0}^{N}\sum_{t\in H_{k}}(Y_{it}-Y_{-i,t}'\hat{w}_{(k)}^{(i)})\mathbf{1}\{D=i\}
\]

In practice, we only need to compute the SC weights for the treated unit. Hence,  the same construction as in Section \ref{sec:implementation} applies:
\[
\mathbb{T}_{K}=\frac{\sqrt{K}(\hat{\tau}-\tau)}{\hat{\sigma}_{\hat{\tau}}},
\]
where $\hat{\tau}=K^{-1}\sum_{k=1}^{K}\hat{\tau}_{k}$ and $\hat{\sigma}_{\hat{\tau}}=\sqrt{1+Kr/T_{1}}\sqrt{(K-1)^{-1}\sum_{k=1}^{K}(\hat{\tau}_{k}-\hat{\tau})^{2}}$.
The $(1-\alpha)$ confidence interval for $\tau$ also takes the same
form: 
\[
\mathcal{I}_{K}(1-\alpha)=\left[\hat{\tau}-t_{K-1}(1-\alpha/2)\frac{\hat{\sigma}_{\hat{\tau}}}{\sqrt{K}},\ \hat{\tau}+t_{K-1}(1-\alpha/2)\frac{\hat{\sigma}_{\hat{\tau}}}{\sqrt{K}}\right].
\]

The following result establishes the validity of the procedure. 
\begin{thm}
\label{thm: kellog}Let Assumptions \ref{assu: kellogg part 1} and
\ref{assu: kellogg part 2} hold. Then $\mathbb{T}_{K}\overset{d}{\rightarrow}t_{K-1}$
and $P(\tau\in\mathcal{I}_{K}(1-\alpha))\rightarrow1-\alpha$. 
\end{thm}

\subsection{Selection on latent variables}

Following the literature, we consider a factor model for the potential outcomes \citep[e.g.,][]{abadie10sc,arkhangelsky2021synthetic,benmichael2021augmented,ferman2021properties,ferman2021synthetic},
\begin{equation}
Y_{it}(0)=\theta_t + L_i'F_t+\varepsilon_{it}, \quad E(\varepsilon_{it})=0, \quad 0\le i\le N,~ 1\le t\le T,\label{eq:factor_model}
\end{equation}
where $\theta_t$ is a common unrestricted nonstationarity, $L_i$ is a vector of loadings, $F_t$ is a vector of factors, and $\varepsilon_{it}$ is an idiosyncratic shock. Define $L:=(L_0,\dots,L_N)$, $F:=(F_1,\dots,F_T)$, and $\varepsilon:=(\varepsilon_{1},\dots,\varepsilon_{T})$, where $\varepsilon_t=(\varepsilon_{0t},\dots,\varepsilon_{Nt})'$ for $t=1,\dots,T$. We emphasize that while we focus on linear factor models for concreteness, the $t$-test is valid more broadly and accommodates various types of outcome models and general forms of misspecification.

The validity of the $t$-test follows if we can show that the prediction model implied by the factor model satisfies the conditions required in the main text, conditional on treatment assignment $D$. 
 SC methods are biased under factor models absent additional restrictions \citep[e.g.,][]{abadie10sc,ferman2021synthetic}, which  motivates the proposed bias correction. We consider the following assumption. 
\begin{assumption}
\label{assu: unconfoundedness factor}
For some random vector $v$, $D=f(L,v)$ and $(L,v)\indep (F,\varepsilon)$.
\end{assumption}
The selection mechanism $D=f(L,v)$ can be viewed as a generalization of the notion of ``selection on fixed effects,'' which has traditionally motivated the use of DID methods, to factor models.\footnote{See, for example, \citet{ghanem2023selection} for a discussion of selection mechanisms in the context of DID.} It captures the idea that the only systematic confounders are individual characteristics, similar to \citet{imbens2023identification}. 
 We can also state Assumption \ref{assu: unconfoundedness factor} as $(L,D)\indep (F,\varepsilon)$.

Define $Y_{t}(0):=Y_{Dt}(0)$, and denote by $X_t$ the vector collecting $Y_{jt}(0)$ for $j\ne D$.
The $t$-test is valid if, conditional on $D$, $\{(V_{t}(0),Z_t)\}_{t=1}^T$, satisfies the conditions in the main text, where $V_{t}(0)=Y_t(0)-\theta_t$ and $Z_{t}=X_t-\mathbf{1}_N\theta_t$. We establish the following result.
\begin{lem}\label{lem:sufficient_stationarity} Suppose that the data are generated by model \eqref{eq:factor_model}. Let Assumption \ref{assu: unconfoundedness factor} hold. If  $\{F_t,\varepsilon_t\}_{t=1}^T$ is stationary, then $\{(V_t(0),Z_t)\}_{t=1}^T$ is stationary conditional on $D$.
\end{lem}
By Lemma \ref{lem:sufficient_stationarity}, the validity of the $t$-test follows by invoking the same theory as in Section \ref{sec:common_trend}.

\section{Additional extensions}
\label{app: additional extensions}

\subsection{Consistency of the classical SC estimator}
\label{sec:constistency_classical_SC}

In the main text, we focus on the canonical SC estimator \eqref{eq:sc_estimator}. Here we provide conditions under which the classical SC estimator of \citet{abadie10sc} satisfies the consistency requirement in Assumption \ref{ass:ell_2_consistency}, which implies that the $t$-test can also be used in conjunction with the classical SC estimator. 

We start by discussing the theoretical properties of the classical SC estimator based on all pre-treatment data and discuss how to implement the cross-fitting approach underlying the $t$-test at the end. Consider the same setup as in the main text, but suppose that in addition to outcome data, researchers have access to a vector of covariates or predictors $Z_i\in \mathbb{R}^p$ for $i=0,\dots,N$. 

Our results accommodate settings where the outcomes are generated by a factor model with covariates as in \citet{abadie10sc}, 
\begin{equation}
Y_{it}(0)=\theta_t+\delta_tZ_i+L_iF_t+\varepsilon_{it},\label{eq:factor_model_abadie10sc}
\end{equation}
where $\theta_t$ is a common time effect, $\delta_t$ is a time-varying coefficient, $L_i$ is a vector of unit-specific loadings, and $F_t$ is a vector of time-varying factors. However, importantly, our results do not require the true data generating process to be a factor model. This feature provides useful robustness in applications.

To define the estimator and state our theoretical results, we introduce some additional notation. Define $Y_{i}=(Y_{i1},\dots,Y_{iT_{0}})'\in\RR^{T_{0}}$, $Y_{-0}=(Y_{1},\dots,Y_{N})\in\RR^{T_{0}\times N}$ and $Z_{-0}=(Z_{1},\dots,Z_{N})\in\RR^{p\times N}$.

We consider the classical SC estimator
\begin{equation}
\tw\in\arg\min_{w\in\mathcal{W}^{SC}}\left\Vert \begin{pmatrix}Y_{0}\\
Z_{0}
\end{pmatrix}-\begin{pmatrix}Y_{-0}\\
Z_{-0}
\end{pmatrix}w\right\Vert _{V}^{2},\label{eq:classical_sc}
\end{equation}
where $V$ is a positive semi-definite  matrix and $\|c\|_{V}^{2}=c'Vc$ for any $c\in\RR^{T_{0}+p}$.
It will be useful to partition $V$ as
\[
V=\begin{pmatrix}V_{1} & 0\\
0 & V_{2}
\end{pmatrix},
\]
where $V_{1}\in\RR^{T_{0}\times T_{0}}$ and $V_{2}\in\RR^{p\times p}$. To simplify the exposition, we include all pre-treatment outcomes in estimator \eqref{eq:classical_sc}, but our results continue to hold as long as researchers use a subset of $O(T_0)$ pre-treatment outcomes. Different choices of $V$ have been proposed \citep[e.g.,][]{abadie10sc,abadie15sc}. Our results below allow for general $V$, and we do not restrict the relative magnitude of $V_2$ and $V_1$. See Remark \ref{rem:importance_covariates} for a further discussion of the role of covariates.

The following lemma establishes the $\ell_2$-consistency of $\tilde{w}$ for the pseudo-true SC weights,
\[
\bw=(\bw_{1},...,\bw_{N})'=\arg\min_{w\in\mathcal{W}^{SC}}E\left\Vert \begin{pmatrix}Y_{0}\\
Z_{0}
\end{pmatrix}-\begin{pmatrix}Y_{-0}\\
Z_{-0}
\end{pmatrix}w\right\Vert _{V}^{2},
\]
where we use $\bw$ instead of $w_\ast$ to denote the pseudo-true weights to distinguish them from the pseudo-true weights for the SC estimator \eqref{eq:sc_estimator} in the main text.

As in Section \ref{sec:common_trend}, we allow for a common nonstationarity. Specifically, we assume that $Y_{it}(0)=\tY_{it}(0)+\theta_t$, where $\theta_t$ is a non-stationary process. Define $\tY_{-0}=(\tY_{1},\dots,\tY_{N})\in\RR^{T_{0}\times N}$ with $\tY_{i}=(\tY_{i1},\dots,\tY_{iT_{0}})'\in\RR^{T_{0}}$. 

\begin{lem}
\label{lem: classical SC}
Suppose that $\|T_{0}^{-1}u'V_{1}\tY_{-1}-T_{0}^{-1}Eu'V_{1}\tY_{-1}\|_{\infty}=o_{P}(1)$
and $\|T_{0}^{-1}\tY_{-1}'V_{1}\tY_{-1}-\Sigma_{V}\|_{\infty}=o_{P}(1)$,
where $u=Y_{0}=Y_{-0}\bw$ and $\Sigma_{V}=ET_{0}^{-1}\tY_{-0}'V_{1}\tY_{-0}$. Suppose further that $Z_i$, for $i\in 0,\dots,N$, is non-stochastic.\footnote{If $Z_i$, $i\in 0,\dots,N$, is stochastic, we interpret the analysis as conditional on $(Z_0,\dots,Z_N)$.} If $\lambda_{\min}(\Sigma_{V})$
is bounded below by a positive constant, then $\|\tw-\bw\|_{2}=o_{P}(1)$. 
\end{lem}

Lemma \ref{lem: classical SC} allows for arbitrary misspecification and a common nonstationarity $\theta_t$. It accommodates factor models, such as \eqref{eq:factor_model_abadie10sc}, but does not require the outcomes to be generated by a factor model. Moreover, we do not require the existence of ``true'' SC weights that balance the individual loadings and covariates, as, for example, in \citet{abadie10sc}. 

Lemma \ref{lem: classical SC} establishes the consistency of the classical SC estimator with covariates. Implementing the cross-fitting required for the $t$-test is straightforward. For each $k=1,\dots,K$, compute the component estimators as 
\begin{equation*}
\htau_{k}=\frac{1}{T_{1}}\sum_{t=T_{0}+1}^{T}\left(Y_{0t}-\sum_{i=1}^N\tw_{i,(k)}Y_{it}\right)-\frac{1}{|H_{k}|}\sum_{t\in H_{k}}\left(Y_{0t}-\sum_{i=1}^N\tw_{i,(k)}Y_{it}\right), 
\end{equation*}
where $\tw_{(k)}=(\tw_{1,(k)},\dots, \tw_{N,(k)})'$ are the weights obtained from applying the classical SC estimator to the outcome data in the subperiod $H_{(-k)}:=\{1,\dots,T_0\}\setminus H_k$ and all covariates. The estimator of the ATT and the $t$-statistic can be constructed as described in Section \ref{sec:implementation}. The results in Lemma \ref{lem: classical SC} imply that $\max_{1\le k \le K}\|\tw_{(k)}-\bar{w}\|_{2}=o_P(1)$ so that Assumption \ref{ass:ell_2_consistency} holds with $w_\ast$ replaced by $\bar{w}$, and the theoretical results in the main text imply that the $t$-test is valid.

\begin{rem}\label{rem:importance_covariates}
There are some recent discussions on the role of covariates. Although one might naturally think that including the covariates helps with the estimation accuracy of the weights, papers such as \cite{kaul2022standard} show that certain ways of choosing the weights eventually lead to $V_2=0$, which makes covariates irrelevant. Some papers derive consistency towards a target that has no covariates even though the estimator takes into account covariates, e.g.,
\citet{zhang2021asymptotic}. Here,  we do not take a stand on this by just taking $V$ as given. We allow for zero $V_2$ or large $V_2$. Therefore, the consistency result below applies no matter how important the covariates are. \qed
\end{rem}

\subsection{Inference on the expected effect}
\label{app: inference on expected effect}

In the main text, we propose a $t$-test for making inferences on the ATT, $\tau=T^{-1}\sum_{t=T_0+1}^T\alpha_{t}$. If researchers are willing restrict treatment effect heterogeneity over time and assume that $\{\alpha_{t}\}$ is stationary and weakly dependent, the expected treatment effect, $\tau_e=E(\alpha_{t})$, is a natural alternative to the ATT \citep[e.g.,][]{li2017estimation,li2020statistical}.
Here, we show how our method can be modified to make inferences on $\tau_e$.

To motivate this modification, note that by similar arguments as in Lemma \ref{lem: algebra} and Theorem \ref{thm: asy distr}, we have
\[
\hat{\tau}_{k}-\tau_{e}=\frac{1}{T_{1}}\sum_{t=T_{0}+1}^{T}u_{t}-\frac{1}{|H_{k}|}\sum_{t\in H_{k}}u_{t}+\frac{1}{T_{1}}\sum_{t=T_{0}+1}^{T}\tilde{\alpha}_{t}+o_{P}\left(\frac{1}{\sqrt{\min\{T_{0},T_{1}\}}}\right).
\]
where $\tilde{\alpha}_{t}=\alpha_{t}-\tau_e$. The main difference to the analysis in Section \ref{sec: theory stationary} is the term $T_{1}^{-1}\sum_{t=T_{0}+1}^{T}\tilde{\alpha}_{t}$. This term arises because of the different centering ($\tau_e$ instead of $\tau$) and captures the randomness of the treatment effects. Due to the presence of this additional term, the statistic $\sqrt{K}\left(\htau-\tau_e \right)/\hat\sigma_{\htau}$, does not have an asymptotic $t$-distribution. (Note, however, that $\hat{\tau}_{k}$ is asymptotically unbiased for $\tau_e$ since $T_{1}^{-1}\sum_{t=T_{0}+1}^{T}u_{t}-|H_{k}|^{-1}\sum_{t\in H_{k}}u_{t}$ has mean zero under our assumptions.)

To overcome this issue, we introduce a modified debiasing approach based on also splitting up the post-treatment period. To simplify the exposition, we assume that $T_1/K$ is an integer in the following. We split the post-treatment period into $K$ consecutive non-overlapping blocks with $T_1/K$ elements, $\tilde{H}_1,\dots,\tilde{H}_K$. The estimator of the expected ATT is given by 
$$
\tilde\tau = \frac{1}{K}\sum_{k=1}^K\tilde{\tau}_k,
$$
where 
\begin{equation}
\tilde{\tau}_{k}=\frac{1}{|\tilde{H}_k|}\sum_{t\in \tilde{H}_k}\left(Y_{t}-X_t'\hw_{(k)}\right)-\frac{1}{|H_{k}|}\sum_{t\in H_{k}}\left(Y_{t}-X_{t}'\hw_{(k)}\right).
\end{equation}
To make inferences, we again construct $t$-statistic based on $\{\tilde{\tau}_{1},\dots,\tilde{\tau}_{K}\}$. Since these estimators are based on non-overlapping blocks of data and thus asymptotically independent (see proof of Theorem \ref{thm: asy distribution expected effect}), rescaling the denominator by $\sqrt{1+Kr/T_1}$ is not necessary, and we can use a standard $t$-statistic,
$$
\tilde{\mathbb{T}}_K=\frac{\sqrt{K}(\tilde\tau - \tau_e)}{\tilde\sigma_{\tilde\tau}},
$$
where
$$
\tilde\sigma_{\tilde\tau}=\sqrt{\frac{1}{K-1}\sum_{k=1}^K \left(\tilde{\tau}_k -\tilde{\tau}\right)^2}.
$$
The corresponding $(1-\alpha)$ confidence interval for $\tau_e$ is 
\begin{equation*}
\tilde{\mathcal{I}}_{K}(1-\alpha)=\left [ \tilde{\tau} -t_{K-1}(1-\alpha/2)\frac{\tilde\sigma_{\tilde\tau}}{\sqrt{K}},~ \htau +t_{K-1}(1-\alpha/2) \frac{\tilde\sigma_{\tilde\tau}}{\sqrt{K}}\right].
\end{equation*}

To handle this case, we modify parts (3) and (4) of Assumption \ref{assu: weak dependence}. 

\begin{assumption}
\label{assu: new weak dependence}Suppose the following conditions hold.
\begin{enumerate}  \setlength\itemsep{0pt}
\item There exists a constant $\kappa_{1}>0$ such that for any $A\subseteq\{1,\dots,T\}$,
the largest eigenvalue of $E\left[|A|^{-1}\left(\sum_{t\in A}\tilde{X}_{t}\right)\left(\sum_{t\in A}\tilde{X}_{t}\right)'\right]$
is bounded above by $\kappa_{1}$. 
\item There exists a sequence $\rho_{T}>0$ such that $P(\max_{1\leq t\leq T}\|\tilde{X}_{t}\|_{\infty}\leq\rho_{T})\rightarrow1$. 
\item The data $\{(X_{t},\tilde{u}_{t},\tilde{\alpha}_{t})\}_{t=1}^{T}$ are $\beta$-mixing with coefficient
satisfying $\betamix(\gamma_{T})\rightarrow0$ for some sequence $\gamma_{T}$
satisfying $0<\gamma_{T}<r/2$ and $\rho_{T}\gamma_{T}=o_P(\min\{\sqrt{T_{0}},\sqrt{T_{1}}\})$, where $r=\min\{\left\lfloor T_{0}/K\right\rfloor,T_1 \}$.
\item $\{\tilde{u}_{t}\}_{t=1}^{T}$ and $\{\tilde{\alpha}_{t}\}_{t=1}^{T}$ satisfy $\max_{1\leq t\leq T}E|\tilde{u}_{t}|^q=O(1) $, $\max_{1\leq t\leq T}E|\tilde{\alpha}_{t}|^q=O(1) $ and $\betamix(i)\lesssim i^{-\eta} $ for some constants $q> 2$ and $\eta>q/(q-2)$ and $\sigma^2>0$.
\end{enumerate}
\end{assumption}

The following theorem provides a formal justification of the modified inference method.

\begin{thm}\label{thm: asy distribution expected effect}
Let Assumptions  \ref{assu: stationary Y0 X}, \ref{ass:ell_2_consistency}, and \ref{assu: new weak dependence}  hold. Suppose that $T_0,T_1\rightarrow \infty$ and that $T_{0}/T_{1}\rightarrow c_{0}$ for some $c_{0}\in[0,\infty]$. 
Then (i) $\tilde{\mathbb{T}}_K\overset{d}\rightarrow t_{K-1}$ and (ii) $P(\tau_e\in \tilde{\mathcal{I}}_{K}(1-\alpha))\rightarrow 1-\alpha.$
\end{thm}

The proof of Theorem \ref{thm: asy distribution expected effect} shows that the component estimators $\tilde{\tau}_{1},\dots,\tilde{\tau}_{K}$ are asymptotically independent due to the sample splitting in the post-treatment period, so that one can use a standard $t$-statistic. This result has two important practical implications. First, the $t$-test based on $\tilde{\mathbb{T}}_K$ is robust against heterogeneity in the variances of the component estimators \citep{bakirov2005student,ibragimovmueller2010} and thus is robust against certain types of nonstationarity in $\{u_t\}$. Second, the sign-based randomization test of \citet{canay2017randomization} constitutes a natural alternative to the $t$-test, provided that $K$ is large enough. The randomization test is particularly appealing when stationarity of $\{u_t\}$ is questionable: it is robust to heterogeneity in the variances across component estimators and more powerful than the $t$-test in such settings \citep{canay2017randomization}.

\begin{rem}[ATT vs.\ expected effect]\label{rem:att_vs_expected_effect}
The quantities $\tau$ and $\tau_e$ are sometimes referred to as the sample ATT and population ATT, respectively. See, for example,  \citet{imbens2004nonparametric} and \citet{rothe2017robust}. As pointed out by \citet{imbens2004nonparametric}, there would be no extra information in the data about $\tau_e$ beyond $\tau$, even if both potential outcomes were observed. In terms of efficiency, \citet{imbens2004nonparametric} also observed that $\tau$ can be estimated more accurately than $\tau_e$ when treatment effects are heterogeneous. In cross-sectional applications with iid data, researchers in economics often focus on the population ATTs. However, in our aggregate panel data setting with one treated unit, making inferences on $\tau_e$ requires restrictive stationarity assumptions on the treatment effect $\alpha_t$. Moreover, efficiency considerations are often important given the small or moderate sample sizes prevalent in SC applications. \qed
\end{rem}

\section{Robustness of the $t$-test}\label{app:robustness}
The results in \citet{bakirov2005student} and \citet{ibragimovmueller2010} show that the standard $t$-test is robust against heterogeneity in the variances of the component estimators for $\alpha\le 2 \Phi(-3) \approx 0.083.$ In our context, such heterogeneity arises when $\{u_t\}$ is not stationary, and the LRV varies across blocks within the pre-treatment period and/or between the pre-treatment and the post-treatment period. 

The existing robustness results do not directly apply to the $t$-test described in Section \ref{sec:implementation} because the component estimators are not asymptotically independent. Nevertheless, as we discuss below, the proposed $t$-test inherits certain robustness properties from the standard $t$-test. 

Suppose that $\{u_t\}$ is non-stationary with variance changing across blocks and between the pre-treatment and post-treatment period so that 
\[
\sqrt{\min\{T_0,T_1\} }\begin{pmatrix}\htau_{1}-\tau\\
\vdots\\
\htau_{K}-\tau
\end{pmatrix}\overset{d}{\rightarrow}\begin{pmatrix}\sqrt{\min\{c_{0},1\} }\sigma_0\xi_{0}-\sqrt{g_{c_0,K} }\sigma_1\xi_{1}\\
\vdots\\
\sqrt{\min\{c_{0},1\} }\sigma_0\xi_{0}-\sqrt{g_{c_0,K}}\sigma_K\xi_{K}
\end{pmatrix},
\]
where $\xi_{0},\dots,\xi_{K}$ are independent $N(0,1)$ random variables and $\sigma_0,\dots,\sigma_K$ are the block-specific LRVs. By the continuous mapping theorem, 
$
\mathbb{T}_{K}  \overset{d}{\rightarrow} \mathcal{T}_K,
$
where 
\begin{align}
\mathcal{T}_K & =\frac{\sqrt{\frac{K}{g_{c_0,K}(1+\min\{c_{0},K\})}}\left(\sqrt{\min\{c_0,1\} }\sigma_0\xi_{0}-\sqrt{g_{c_0,K}}K^{-1}\sum_{k=1}^{K}\sigma_k\xi_k\right)}{\sqrt{(K-1)^{-1}\sum_{k=1}^{K}(\sigma_k\xi_{k}-K^{-1}\sum_{k=1}^{K}\sigma_k\xi_k)^{2}}}\notag \\
& =\frac{\sqrt{\frac{K\min\{c_0,1\}}{g_{c_0,K}(1+\min\{c_{0},K\})}}\sigma_0\xi_{0}}{\sqrt{(K-1)^{-1}\sum_{k=1}^{K}(\sigma_k\xi_{k}-K^{-1}\sum_{k=1}^{K}\sigma_k\xi_k)^{2}}}-\frac{1}{\sqrt{1+\min\{c_0,K\}}}\mathbf{T}_K,\label{eq:decomposition_robustness}
\end{align}
and $\mathbf{T}_K$ is the standard $t$-statistic, which is robust to variance heterogeneity. 

The decomposition \eqref{eq:decomposition_robustness} shows that, while the proposed $t$-test inherits some robustness from the standard $t$-test due to the second term, it is not fully robust against heterogeneity in general. 

Importantly, however, the $t$-test is robust under different types of restrictions on the LRVs $\sigma_0,\dots,\sigma_K$. A leading example is when $\{u_t\}$ is non-stationary due to a structural break between the pre-treatment and post-treatment period, so that $\sigma_k=\sigma_{\text{pre}}>0$ for $k\ge 1$ and $\sigma_0=\sigma_{\text{post}}>0$ with $\sigma_{\text{pre}}\ne \sigma_{\text{post}}$. In this case, we have that
\begin{align*}
\mathcal{T}_K & =\frac{\sqrt{\frac{K}{g_{c_0,K}(1+\min\{c_{0},K\})}}\left(\sqrt{\min\{c_0,1\} }\frac{\sigma_{\text{post}}}{\sigma_{\text{pre}}}\xi_{0}-\sqrt{g_{c_0,K}}\bxi\right)}{\sqrt{(K-1)^{-1}\sum_{k=1}^{K}(\xi_{k}-\bxi_k)^{2}}}.
\end{align*}
Note that the numerator has a $N(0,\sigma_{\text{num}}^2)$ distribution, where 
\[
\sigma_{\text{num}}^2=\frac{K}{g_{c_0,K}(1+\min\{c_{0},K\})}\times\left(\min\{c_0,1\}\frac{\sigma_{\text{post}}^2}{\sigma_{\text{pre}}^2} +g_{c_0,K}K^{-1}\right)=\begin{cases}
\frac{1+\frac{\sigma_{\text{post}}^2}{\sigma_{\text{pre}}^2}c_0}{1+c_0} & {\rm if}\ 1\leq c_0 \leq K\\
\frac{1+\frac{\sigma_{\text{post}}^2}{\sigma_{\text{pre}}^2}c_0}{1+c_0} & {\rm if}\ c_{0}<1\\
\frac{1+\frac{\sigma_{\text{post}}^2}{\sigma_{\text{pre}}^2}K}{1+K} & {\rm if}\ c_{0}>K.
\end{cases}
\]
Note that $\sigma_{\text{num}}^2\le 1$ if $\sigma_{\text{post}}^2/\sigma_{\text{pre}}^2\le 1$. Therefore, by the independence of the numerator and denominator, we can condition on the denominator to show that the $t$-test remains valid in this case:
\begin{align*}
P(|\mathcal{T}_K|>t_{K-1}(1-\alpha/2))&=E\left(E\left(1\{|\mathcal{T}_K|>t_{K-1}(1-\alpha/2)\}\mid \sqrt{(K-1)^{-1}\sum_{k=1}^{K}(\xi_{k}-\bxi_k)^{2}}\right)\right)\\
&\le E\left(E\left(1\{|\mathbf{T}_K|>t_{K-1}(1-\alpha/2)\}\mid \sqrt{(K-1)^{-1}\sum_{k=1}^{K}(\xi_{k}-\bxi_k)^{2}}\right)\right)\\
&= P(|\mathbf{T}_K|>t_{K-1}(1-\alpha/2))\quad \text{for} \quad \frac{\sigma_{\text{post}}^2}{\sigma_{\text{pre}}^2}\le 1.
\end{align*}

The above derivation shows that the $t$-test remains valid when $\{u_t\}$ is non-stationary due to a decrease in the LRV between the pre- and post-treatment period ($\sigma_{\text{post}}/\sigma_{\text{pre}}\le 1$). A similar argument shows that the $t$-test will not be valid in general when the LRV increases so that $\sigma_{\text{post}}/\sigma_{\text{pre}}> 1$. 

The above derivation also suggests a simple modification of the $t$-statistic for applications where researchers are concerned about an increase in the LRV. Suppose that $\sigma_{\text{post}}^2/\sigma_{\text{pre}}^2\le B_{\sigma_{\text{post}}^2/\sigma_{\text{pre}}^2}$, which implies an upper bound on the variance of the numerator, 
$$
\sigma_{\text{num}}^2\le B_{\sigma_{\text{num}}^2}=\frac{1+B_{\sigma_{\text{post}}^2/\sigma_{\text{pre}}^2}c_0}{1+c_0}1\{c_0\le K\}+\frac{1+B_{\sigma_{\text{post}}^2/\sigma_{\text{pre}}^2}K}{1+K}1\{c_0> K\}.
$$
Then, the following modified $t$-statistic yields valid inferences:
\begin{eqnarray*}
\mathbb{T}_K = \frac{\sqrt{K}\left(\htau-\tau \right)}{\sqrt{1+\frac{Kr}{T_1}}\sqrt{B_{\sigma_{\text{num}}^2}}\sqrt{\frac{1}{K-1}\sum_{k=1}^K \left(\htau_k -\htau\right)^2}}.
\end{eqnarray*}

\section{Higher-order improvements}\label{sec: higher order}

Suppose that we observe $\{y_{t}\}_{t=1}^{T}$ from the following
Gaussian location model: 
\begin{equation}
y_{t}=\beta+u_{t},\label{eq: gauss location}
\end{equation}
where $u_{t}$ is a mean zero covariance-stationary Gaussian process
with $\sum_{h=-\infty}^{\infty}h^{2}\gamma(h)<\infty$ and $\gamma(h)=Eu_{t}u_{t-h}$.
This model has been considered for example by \citet{jansson2004error} and \citet[][Assumption 3]{sun2008optimal}. \citet[][]{sun2008optimal} show that the usual consistent estimators using Bartlett
kernel for the LRV (e.g., Newey-West estimator) would generate a test that has size
distortion at least $O(T^{-1/2})$. They also show that the fixed-$b$
approach would have size distortion of order $O(T^{-1})$. These
results are the typical explanation for why the fixed-$b$ approach
is more accurate than the classical LRV estimation approach. We now show that in this setting our self-normalization
approach also enjoys the higher-order improvement as it has size distortion
of order $O(T^{-1})$.

We first outline the inference procedure in the Gaussian location
model in (\ref{eq: gauss location}). Let $K\geq2$ be a fixed integer
and define $G=T/K$. For simplicity, we assume that $G$ is an integer.
Then we construct blocks $\{H_{k}\}_{k=1}^{K}$ with $H_{k}=\{(k-1)G+1,\dots,kG\}$.
Let $\hbeta_{k}=G^{-1}\sum_{t\in H_{k}}y_{t}$ for $1\le k\leq K$
and $\bbeta=K^{-1}\sum_{k=1}^{K}\hbeta_{k}$. For testing $H_{0}: \beta=0$,
we define the test statistic
\[
\mathbf{T}_K=\frac{\sqrt{K}\bbeta}{\sqrt{(K-1)^{-1}\sum_{k=1}^{K}(\hbeta_{k}-\bbeta)^{2}}}.
\]
The critical value is $t_{K-1}(1-\alpha/2)$. 
\begin{thm}
	\label{thm: higher-order t}Consider the model in (\ref{eq: gauss location}).
	Suppose that $H_{0}: \beta=0$ holds. Then 
	\[
	\left|P\left(|\mathbf{T}_K|>t_{K-1}(1-\alpha/2)\right)-\alpha\right|=O(T^{-1}).
	\]
\end{thm}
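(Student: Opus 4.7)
The key observation is that under Gaussian errors and $H_0:\beta=0$, $(\hat\beta_1,\ldots,\hat\beta_K)' \sim N(0,V)$ \emph{exactly}, and the distribution of the scale-invariant statistic $\mathbf{T}_K$ depends only on $V$. Moreover, if $V = cI_K$ for any scalar $c>0$, then $\mathbf{T}_K$ has an \emph{exact} $t_{K-1}$-distribution by the classical construction of the $t$-statistic from iid Gaussians. The proof therefore reduces to quantifying how the slight departure of $V$ from a scalar multiple of $I_K$ perturbs the null rejection probability.

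First I would compute $V$ explicitly. Writing $V_{jk} = G^{-2}\sum_{s\in H_j,\,t\in H_k}\gamma(s-t)$ and using the moment condition $\sum_h h^2\gamma(h)<\infty$ (which implies both $\sum_h|\gamma(h)|<\infty$ and $\sum_{|h|\geq G}|\gamma(h)| = o(G^{-2})$), a direct Abel-type summation gives $V_{kk} = G^{-1}\omega^2 + O(G^{-2})$ with $\omega^2 = \sum_h\gamma(h)$, and $|V_{jk}| = O(G^{-2})$ for $j\neq k$ (the leading off-diagonal term for adjacent blocks is $G^{-2}\sum_{h\geq 1}h\gamma(h)$, and non-adjacent entries are smaller still). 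Hence $V = (G^{-1}\omega^2)(I_K + A)$ with $\|A\|_{\mathrm{op}} = O(G^{-1})$.

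Next I would couple the actual statistic with an exactly-$t$-distributed one via a shared standard Gaussian. Let $Z\sim N(0,I_K)$ and set $\hat\beta = V^{1/2}Z$, $\tilde\beta = (G^{-1}\omega^2)^{1/2}Z$; by construction $\mathbf{T}_K(\tilde\beta)\sim t_{K-1}$ exactly, and in particular $P(|\mathbf{T}_K(\tilde\beta)|>t_{K-1}(1-\alpha/2))=\alpha$. The standard matrix square-root perturbation inequality yields $\|V^{1/2} - (G^{-1}\omega^2)^{1/2}I_K\|_{\mathrm{op}} = O(G^{-3/2})$, so $\|\hat\beta-\tilde\beta\|_2 = O_P(G^{-3/2})$ while $\|\tilde\beta\|_2 \asymp G^{-1/2}$. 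Because $\mathbf{T}_K$ is homogeneous of degree zero, the relative perturbation is $O_P(G^{-1})$, and a first-order Taylor expansion of $\mathbf{T}_K$ around $\tilde\beta$ yields $\mathbf{T}_K(\hat\beta) - \mathbf{T}_K(\tilde\beta) = O_P(G^{-1})$ on events where the denominator $S(\tilde\beta) = \sqrt{\tilde\beta'M\tilde\beta/(K-1)}$, with $M = I_K - K^{-1}\mathbf{1}_K\mathbf{1}_K'$, is bounded below by a positive constant multiple of $\sigma = (G^{-1}\omega^2)^{1/2}$.

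Finally I would convert this pointwise perturbation into a CDF bound using Gaussian anti-concentration. Since $\mathbf{T}_K(\tilde\beta)\sim t_{K-1}$ has a bounded smooth density, $P(|\mathbf{T}_K(\tilde\beta)|\in[c_\alpha-\eta,\,c_\alpha+\eta]) = O(\eta)$ uniformly in small $\eta$, and combining this with the perturbation bound yields $|P(|\mathbf{T}_K(\hat\beta)|>c_\alpha) - \alpha| = O(G^{-1}) = O(T^{-1})$. The main obstacle is achieving the sharp $O(G^{-1})$ rate rather than a slower rate like $O(G^{-1+\epsilon})$: the perturbation bound on $\mathbf{T}_K(\hat\beta) - \mathbf{T}_K(\tilde\beta)$ degrades when $S(\tilde\beta)/\sigma$ is small, so one must show that this ``small-denominator'' event contributes at most $O(G^{-1})$ to the CDF gap. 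This is done by invoking the small-ball estimate $P(Z'MZ/\|Z\|^2 < \varepsilon) = O(\varepsilon^{(K-1)/2})$ for the Beta-distributed quantity $Z'MZ/\|Z\|^2$ together with sub-Gaussian concentration of $\|Z\|$ and $\|\hat\beta-\tilde\beta\|$, all of which are available precisely because the underlying process $\{u_t\}$ is Gaussian; an Edgeworth-style bookkeeping then delivers the sharp $O(T^{-1})$ rate.
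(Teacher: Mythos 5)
Your proposal follows essentially the same route as the paper's proof: under exact Gaussianity the vector $(\hat\beta_1,\dots,\hat\beta_K)$ is exactly normal with covariance equal to a scalar matrix plus an $O(T^{-1})$ relative perturbation (the paper's bound on the off-diagonal entries of $\Sigma_{T,K}$ is precisely your Abel-summation estimate), one couples $\mathbf{T}_K$ to an exactly $t_{K-1}$-distributed statistic through a common standard normal vector, and the $O_P(T^{-1})$ perturbation is converted into an $O(T^{-1})$ rejection-probability bound via the boundedness of the $t_{K-1}$ density. The one substantive difference is that you explicitly confront the degeneracy of the self-normalized map near the small-denominator set $\{v'Mv\approx 0\}$ --- a point the paper waves away with the assertion that $\psi$ is Lipschitz, which is false globally --- but your own resolution is only sketched: a naive balance of the Beta small-ball estimate $O(\varepsilon^{(K-1)/2})$ against the $\delta^{-3}$ blow-up of the local Lipschitz constant gives a rate strictly worse than $O(T^{-1})$ for small $K$, so the concluding ``Edgeworth-style bookkeeping'' would need to be made concrete (for instance by rewriting the rejection event as the Gaussian quadratic-form inequality $Z'V^{1/2}\bigl((K-1)K^{-1}\mathbf{1}_K\mathbf{1}_K'-c_\alpha^2 M\bigr)V^{1/2}Z>0$ and comparing it directly with its unperturbed counterpart) before the sharp rate is actually established.
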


By Theorem \ref{thm: higher-order t}, the ``cross-fitted'' self-normalized $t$-test also has size distortion $O(T^{-1})$ in the Gaussian location model \eqref{eq: gauss location}. Although it is quite difficult to derive the higher-order asymptotics outside the Gaussian location model, Theorem \ref{thm: higher-order t} suggests that, for size considerations, the cross-fitted self-normalized test is expected to have similar properties as fixed-$b$ methods.
However, in high-dimensional settings, we are not aware of any existing results that would allow for establishing the validity of the fixed-$b$ approach in our context.

\section{Calculations expected length}
\label{app: expected length}
The length of the confidence interval is 
$$
L=2 t_{K-1}(1-\alpha/2)\frac{1}{\sqrt{K}}\sqrt{1+\frac{Kr}{T_1}}\sqrt{\frac{1}{K-1}\sum_{k=1}^K \left(\htau_k -\htau\right)^2}.
$$
Consider 
\begin{eqnarray*}
\sqrt{\min\{T_0,T_1\}}L&=&2 t_{K-1}(1-\alpha/2)\frac{1}{\sqrt{K}\sqrt{K-1}}\sqrt{1+\frac{Kr}{T_1}}\sqrt{\min\{T_0,T_1\}\sum_{k=1}^K \left(\htau_k -\htau\right)^2}.
\end{eqnarray*}
To derive the limiting distribution for fixed $K$, note that 
$$
\min\{T_0,T_1\}\sum_{k=1}^K \left(\htau_k -\htau\right)^2=\sum_{k=1}^K\left(\sqrt{\min\{T_0,T_1\}}(\htau_k-\tau)-\frac{1}{K}\sum_{k=1}^K\sqrt{\min\{T_0,T_1\}}(\htau_k-\tau)\right)^2
$$
and that $Kr/T_1 \rightarrow \min \{c_0,K \} $ as $T_0,T_1 \rightarrow \infty$.

Therefore, by Theorem \ref{thm: asy distr} and the continuous mapping theorem,
$
\sqrt{\min\{T_0,T_1\}}L\overset{d}\rightarrow \mathcal{L},
$
where
\begin{eqnarray*}
\mathcal{L}&=&2 \sigma t_{K-1}(1-\alpha/2)\frac{1}{\sqrt{K}\sqrt{K-1}}\sqrt{1+\min \{c_0,K \}}\\
&&\times \sqrt{\sum_{k=1}^K \left(\sqrt{\min\{c_{0},1\} }\xi_{0}-\sqrt{g_{c_0,K} }\xi_{k}-\frac{1}{K}\sum_{k=1}^K\sqrt{\min\{c_{0},1\} }\xi_{0}-\sqrt{g_{c_0,K} }\xi_{k}\right)^2}\\
&=&2\sigma  t_{K-1}(1-\alpha/2)\frac{1}{\sqrt{K}\sqrt{K-1}}\sqrt{1+\min \{c_0,K \}}\sqrt{g_{c_0,K} }\sqrt{\sum_{k=1}^K \left(\xi_{k}-\frac{1}{K}\sum_{k=1}^K\xi_{k}\right)^2}
\end{eqnarray*}
Thus, it remains to compute the expectation of
$$
\sqrt{\sum_{k=1}^K \left(\xi_{k}-\frac{1}{K}\sum_{k=1}^K\xi_{k}\right)^2}
$$
Because
$$
\sum_{k=1}^K \left(\xi_{k}-\frac{1}{K}\sum_{k=1}^K\xi_{k}\right)^2\sim \chi^2_{K-1},
$$
it follows that
$$
\sqrt{\sum_{k=1}^K \left(\xi_{k}-\frac{1}{K}\sum_{k=1}^K\xi_{k}\right)^2}\sim \chi_{K-1},
$$
where $\chi_{K-1}$ denotes a chi-distribution with $K-1$ degrees of freedom. Therefore,
$$
E\left(\sqrt{\sum_{k=1}^K \left(\xi_{k}-\frac{1}{K}\sum_{k=1}^K\xi_{k}\right)^2} \right)=\sqrt{2}\frac{\Gamma(K/2)}{\Gamma((K-1)/2)},
$$
where $\Gamma(\cdot)$ denotes the Gamma function, and
$$
E(\mathcal{L})=2\sigma  t_{K-1}(1-\alpha/2)\frac{1}{\sqrt{K}\sqrt{K-1}}\sqrt{1+\min \{c_0,K \}} \sqrt{g_{c_0,K}}\sqrt{2}\frac{\Gamma(K/2)}{\Gamma((K-1)/2)}.
$$

Next, we provide an approximation of the limit of $E(\mathcal{L})$ as $K\rightarrow \infty$. Since $K\rightarrow\infty$, assuming that $c_0<\infty$, we have that for large $K$, 
\[
g_{c_{0},K}=K\oneb\{c_{0}<1\}+(K/c_{0})\oneb\{1\leq c_{0}\leq K\}=\min\{c_{0}^{-1},1\}K.
\]
Note further that we take $T_1\rightarrow \infty$ before taking $K\rightarrow\infty$ so that $1+Kr/T_{1}=1$. To understand what happens when $K\rightarrow\infty$, we use Stirling's approximation of the Gamma function:  $\Gamma(n+1)=(1+o(1))\times\sqrt{2\pi n}(n/e)^{n}$ as $n\rightarrow \infty$.
This means that 
\begin{multline*}
\frac{\Gamma(n+3/2)}{\Gamma(n+1)}=(1+o(1))\cdot\frac{((n+1/2)/e)^{n+1/2}}{(n/e)^{n}}=(1+o(1))\cdot\left(\frac{n+1/2}{n}\right)^{n}\cdot\left(\frac{n+1/2}{e}\right)^{1/2}\\
=(1+o(1))\cdot e^{1/2}\left(\frac{n+1/2}{e}\right)^{1/2}=(1+o(1))\cdot\sqrt{n}.
\end{multline*}
Therefore, and using that $t_{K-1}(1-\alpha)\rightarrow \Phi^{-1}(1-\alpha/2)$ as $K\rightarrow \infty$, we have the following approximation as $K\rightarrow \infty$,
\begin{align*}
E(\mathcal{L}) & =2\sigma \Phi^{-1}(1-\alpha/2)\frac{1}{\sqrt{K}\sqrt{K-1}}\sqrt{g_{c_{0},K}}\sqrt{2}\frac{\Gamma(K/2)}{\Gamma((K-1)/2)} \sqrt{1+\min \{c_0,K \} }\\
 & =(1+o(1))\cdot2\sigma \Phi^{-1}(1-\alpha/2)\frac{1}{\sqrt{K}\sqrt{K-1}}\sqrt{\min\{c_{0}^{-1},1\}K}\sqrt{2}\cdot\sqrt{K/2} \sqrt{1+c_0 }\\
 & =(1+o(1))\cdot2\sigma \Phi^{-1}(1-\alpha/2)\sqrt{\min\{c_{0}^{-1},1\}} \sqrt{1+c_0 }.
\end{align*}

\section{Verification of Assumption \ref{assu: non-stationary deviation}}
\label{app:verification}

Here we illustrate the verification of Assumption \ref{assu: non-stationary deviation} based on two leading examples.

\begin{example}[Unit root]
	Assume that $\xi_{t}=\sum_{i=1}^{t}v_{t}$ with a mean-zero stationary
	process $v_{t}$. Then by the functional central limit theorem, we
	can show that $T^{-1/2}\xi_{[rT]}$ tends to $B(r)$, where $B(\cdot)$
	is a standard Brownian motion. Thus, the continuous mapping theorem
	and the condition of $T_{1}\ll T_{0}$ imply that $T_{1}^{-1}T^{-1/2}\sum_{t=T_{0}+1}^{T}\xi_{t}\overset{d}\rightarrow B(1)$
	and $T^{-2}\sum_{t\in H_{(-k)}}\xi_{t}^{2}=T^{-2}\sum_{t=1}^{T_{0}}\xi_{t}^{2}-T^{-2}\sum_{t\in H_{(k)}}\xi_{t}^{2}\overset{d}\rightarrow\int_{0}^{1}B^{2}(s)ds$ jointly.
By the continuous mapping theorem, this verifies the first part of Assumption \ref{assu: non-stationary deviation}. The second part follows by a similar calculation. 
 \qed
\end{example}

\begin{example}[Polynomial trend]
	Assume that $\xi_{t}=\sum_{k=1}^{M}c_{k}t^{k}$ with constants $c_{1},\dots,c_{M}$,
	where $M\geq1$ is fixed. ($M=1$ corresponds to a linear trend and
	$M=2$ represents a quadratic trend.) Recall the elementary relation
	of $(\sum_{i=1}^{n}i^{\gamma})/(n^{1+\gamma}/(1+\gamma))\rightarrow1$
	as $n\rightarrow\infty$ for any fixed $\gamma>0$. By $T_{1}\ll T_{0}$,
	 the first part of Assumption \ref{assu: non-stationary deviation} holds:
	\begin{eqnarray*}
	\frac{\sum_{t=T_{0}+1}^{T}\xi_{t}}{\sqrt{\sum_{t\in H_{(-k)}}\xi_{t}^{2}}}&=&(1+o(1))\cdot\frac{(1+M)^{-1}\left[(1+T_{1}/T_{0})^{M+1}-1\right]T_{0}^{M+1}}{\sqrt{(1+2M)^{-1}T_{0}^{2M+1}}}\\
	&=&O(T_{1}T^{-1/2}).
	\end{eqnarray*}
  The second part  of Assumption \ref{assu: non-stationary deviation}  follows by a similar calculation. 
	 \qed
\end{example}

\section{Proofs}
We will sometimes write $Y_t$ instead of $Y_t(0)$ to simplify the exposition.
\subsection{Proof of Lemma \ref{lem: algebra}}
Let $\mu=E(X_{t})$. Notice that for $T_{0}+1\leq t\leq T$, 
\[
Y_{t}-X_{t}'\hw_{(k)}=\alpha_{t}+u_{t}-X_{t}'\Delta_{(k)}=\alpha_{t}+u_{t}-\mu'\Delta_{(k)}-\tX_{t}'\Delta_{(k)}
\]
and for $t\in H_{k}$,
\[
Y_{t}(0)-X_{t}'\hw_{(k)}=u_{t}-X_{t}'\Delta_{(k)}=u_{t}-\mu'\Delta_{(k)}-\tX_{t}'\Delta_{(k)}.
\]
Therefore, 
\begin{align*}
\htau_{k}-\tau & =T_{1}^{-1}\sum_{t=T_{0}+1}^{T}(Y_{t}-X_{t}'\hw_{(k)})-\frac{1}{|H_{k}|}\sum_{t\in H_{k}}(Y_{t}-X_{t}'\hat{w}_{(k)})-\tau\\
 & =T_{1}^{-1}\sum_{t=T_{0}+1}^{T}(\alpha_{t}+u_{t}-\mu'\Delta_{(k)}-\tX_{t}'\Delta_{(k)})-\frac{1}{|H_{k}|}\sum_{t\in H_{k}}(u_{t}-\mu'\Delta_{(k)}-\tX_{t}'\Delta_{(k)})-\tau\\
 & =T_{1}^{-1}\sum_{t=T_{0}+1}^{T}u_{t}-\frac{1}{|H_{k}|}\sum_{t\in H_{k}}u_{t}+\frac{1}{|H_{k}|}\sum_{t\in H_{k}}\tX_{t}'\Delta_{(k)}-T_{1}^{-1}\sum_{t=T_{0}+1}^{T}\tX_{t}'\Delta_{(k)}.
\end{align*}
The proof is complete. \qed

\subsection{Proof of Lemma \ref{lem: consistency misspecification}}

We prove the results for the following more general estimator:
\begin{equation}
\hat{w}_{(k)}\in \underset{w\in \mathcal{W}}{\arg\min}\sum_{t\in H_{(-k)}}(Y_{t}-X_{t}'w)^{2}\label{eq:unified_estimator},
\end{equation}
where $\mathcal{W}$ is a subset of $\left\{w:\|w\|_1\le Q\right\}$ with $Q=O(1)$. For the canonical SC estimator, we see that $\mathcal{W}=\{w=(w_1,\dots,w_N)': w_i\geq 0,\ \sum_{i=1}^N w_i=1 \}$ and thus $Q=1$. 

\medskip

	For simplicity, we write $\hw=\hat{w}_{(k)} $, $\hmu=\hmu_{(-k)}$
	and $\hSigma=\hSigma_{(-k)}$. Let $\xi=(\hmu-\hSigma w_{*})-(\mu-\Sigma w_{*})$.

Since $\|w_{*}\|_\infty \leq Q$, we have that $\|\xi\|_{\infty}\leq \|\hmu - \mu\|_{\infty}+  \|(\hSigma-\Sigma) w_{*}\|_{\infty} \leq \|\hmu - \mu\|_{\infty}+ Q \|\hSigma-\Sigma \|_{\infty} $. Thus,  by assumption, $\|\xi\|_{\infty}=o_P(1)$. 
	
	We rewrite 
	\begin{align*}
	w_{*} =\arg\min_{v}\ v'\Sigma v-2\mu'v\qquad s.t.\qquad v\in \mathcal{W}
	\end{align*}
	and 
	\begin{align*}
	\hw  \in \arg\min_{v}\ v'\hSigma v-2\hmu'v\qquad s.t.\qquad v\in \mathcal{W}
	\end{align*}
	
	Let $\Delta=\hw-w_{*}$. 	For any $\lambda\in[0,1]$, define $w_{\lambda}=w_{*}+\lambda\Delta$.
	Then by the definition of $w_{*}$, we have that $w_{\lambda}'\Sigma w_{\lambda}-2\mu' w_{\lambda}\geq w_{*}'\Sigma w_{*}-2\mu' w_{*}$,
	which means 
	\[
	\lambda^{2}\Delta'\Sigma\Delta\geq2\lambda(\mu-\Sigma w_{*})'\Delta.
	\]
	Thus, for any $\lambda\in(0,1)$, we have that $\lambda\Delta'\Sigma\Delta\geq 2(\mu-\Sigma w_{*})'\Delta$.
	Since this holds for any $\lambda\in(0,1)$, we have that 
	\begin{equation}
	(\mu-\Sigma w_{*})'\Delta\leq0.\label{eq: key bnd misspecifiation}
	\end{equation}
	Now by definition, we have that 
	\[
	\hw'\hSigma\hw-2\hmu'\hw\le w_{*}'\hSigma w_{*}-2\hmu' w_{*}.
	\]
	It follows that 
	\[
	\Delta'\hSigma\Delta\leq2(\hmu-\hSigma w_{*})'\Delta.
	\]
	Notice that 
	$$
	(\hmu-\hSigma w_{*})'\Delta =(\mu-\Sigma w_{*})'\Delta+\xi'\Delta \overset{\text{(i)}}{\leq}\xi'\Delta
	\leq\|\xi\|_{\infty}\|\Delta\|_{1}\leq2\|\xi\|_{\infty}Q,
	$$
	where (i) holds by (\ref{eq: key bnd misspecifiation}). The above two displays imply
	that 
	\[
	\Delta'\hSigma\Delta\leq2\|\xi\|_{\infty}Q.
	\]
	Therefore, 
	\begin{eqnarray*}
	2\|\xi\|_{\infty}Q \geq\Delta'\Sigma\Delta+\Delta'(\hSigma-\Sigma)\Delta&\geq&\Delta'\Sigma\Delta-\|\hSigma-\Sigma\|_{\infty}\|\Delta\|_{1}^{2}\\
	&\geq& c\|\Delta\|_{2}^{2}-4\|\hSigma-\Sigma\|_{\infty}Q^2.
	\end{eqnarray*}
	The desired result follows by recalling $\|\xi\|_{\infty}\leq \|\hmu - \mu\|_{\infty}+ Q \|\hSigma-\Sigma \|_{\infty} $. \qed

\subsection{Proof of Lemma \ref{lem:consistency_common_nonstationarity}}

By Equations \eqref{eq:sc_estimator_common_nonstationarity} and \eqref{eq:w_ast_common_nonstationarity} the result follows from the same arguments as in the proof of Lemma \ref{lem: consistency misspecification} with $\mathcal{W}=\mathcal{W}^{SC}.$ \qed

\subsection{Proof of Lemma \ref{lem:sufficient_stationarity}}
Note that
\begin{align*}
Y_{t}(0) & =\sum_{i=0}^{D}\mathbf{1}\{D=i\}Y_{it}(0)\\
 & =\sum_{i=0}^{D}\mathbf{1}\{D=i\}(\theta_{t}+L_{i}'F_{t}+\varepsilon_{it})\\
 & =\theta_{t}+F_{t}'\left(\sum_{i=0}^{D}\mathbf{1}\{D=i\}L_{i}\right)+\sum_{i=0}^{D}\mathbf{1}\{D=i\}\varepsilon_{it}\\
 & =V_{t}(0)+\theta_{t},
\end{align*}
where $V_{t}(0)=F_{t}'\left(\sum_{i=0}^{D}\mathbf{1}\{D=i\}L_{i}\right)+\sum_{i=0}^{D}\mathbf{1}\{D=i\}\varepsilon_{it}$.
For $j\neq D$, we have that 
\[
Y_{jt}(0)=\theta_{t}+L_{j}'F_{t}+\varepsilon_{jt}=\theta_{t}+Z_{jt},
\]
where $Z_{jt}=L_{j}'F_{t}+\varepsilon_{jt}$. 

Next, note that Assumption \ref{assu: unconfoundedness factor} implies $(L,D)\indep (F,\varepsilon)$.
Let $\mathcal{F}$ be $\sigma$-algebra generated by $(L,D)$. Since
$(L,D)$ is independent of $(F,\varepsilon)$ and $(F_{t},\varepsilon_{t})$
is stationary, we have that conditional on $\mathcal{F}$, the following
is stationary: 
\[
\left[F_{t}'\left(\sum_{i=0}^{D}\mathbf{1}\{D=i\}L_{i}\right)+\sum_{i=0}^{D}\mathbf{1}\{D=i\}\varepsilon_{it},\{L_{j}'F_{t}+\varepsilon_{jt}\}_{j\neq D}\right].
\]
Stationarity conditional on $(L,D)$ then implies stationarity conditional on $D$ due to the law of iterated expectations. 

\subsection{Proof of Lemma \ref{lem: classical SC}}

Let $\Delta=\tw-\bw$ and $e=Z_{0}-Z_{-0}\bw$. 
Since $\bw\in\mathcal{W}^{SC}$, the definition of $\tw$ implies
that 
\begin{equation}
\left\Vert \begin{pmatrix}Y_{0}\\
Z_{0}
\end{pmatrix}-\begin{pmatrix}Y_{-0}\\
Z_{-0}
\end{pmatrix}\tw\right\Vert _{V}^{2}\leq\left\Vert \begin{pmatrix}Y_{0}\\
Z_{0}
\end{pmatrix}-\begin{pmatrix}Y_{-0}\\
Z_{-0}
\end{pmatrix}\bw\right\Vert _{V}^{2}.\label{eq: basic real sc inequality part 1}
\end{equation}

The right-hand side is $\left\Vert \begin{pmatrix}u\\
e
\end{pmatrix}\right\Vert _{V}^{2}=u'V_{1}u+e'V_{2}e$. For the left-hand side, we have 
\[
\begin{pmatrix}Y_{0}\\
Z_{0}
\end{pmatrix}-\begin{pmatrix}Y_{-0}\\
Z_{-0}
\end{pmatrix}\bw=\begin{pmatrix}Y_{-0}\bw+u\\
Z_{-0}\bw+e
\end{pmatrix}-\begin{pmatrix}Y_{-0}\\
Z_{-0}
\end{pmatrix}\hw=\begin{pmatrix}u-Y_{-0}\Delta\\
e-Z_{-0}\Delta
\end{pmatrix}\overset{\text{(i)}}{=}\begin{pmatrix}u-\tY_{-0}\Delta\\
e-Z_{-0}\Delta
\end{pmatrix}.
\]
where (i) follows by $Y_{-0}\Delta=(\delta\oneb_{N}'+\tY_{-0})\Delta=\tY_{-0}\Delta$
since $\oneb_{N}'\Delta=\oneb_{N}'(\tw-\bw)=1-1=0$, where $\delta=(\delta_{1},\dots,\delta_{T_{0}})'$.
Thus, \eqref{eq: basic real sc inequality part 1} implies that 
\[
\left(u-\tY_{-0}\Delta\right)'V_{1}\left(u-\tY_{-0}\Delta\right)+\left(e-Z_{-0}\Delta\right)'V_{2}\left(e-Z_{-0}\Delta\right)\leq u'V_{1}u+e'V_{2}e.
\]

Rearranging the terms, we obtain
\begin{align}
\Delta'\left(T_{0}^{-1}\tY_{-0}'V_{1}\tY_{-0}\right)\Delta+\Delta'\left(T_{0}^{-1}Z_{-0}'V_{2}Z_{-0}\right)\Delta & \leq2\left(T_{0}^{-1}u'V_{1}\tY_{-0}\right)\Delta+2\left(T_{0}^{-1}e'V_{2}Z_{-0}\right)\Delta.\label{eq: basic real sc inequality part 2}
\end{align}

We observe that for any $w$,
\begin{align*}
 & T_{0}^{-1}E\left\Vert \begin{pmatrix}Y_{0}\\
Z_{0}
\end{pmatrix}-\begin{pmatrix}Y_{-0}\\
Z_{-0}
\end{pmatrix}w\right\Vert _{V}^{2}\\
 & =T_{0}^{-1}E\left\Vert \begin{pmatrix}\tY_{0}\\
Z_{0}
\end{pmatrix}-\begin{pmatrix}\tY_{-0}\\
Z_{-0}
\end{pmatrix}w\right\Vert _{V}^{2}\\
 & =T_{0}^{-1}E\|\tY_{0}-\tY_{-0}w\|_{V_{1}}^{2}+T_{0}^{-1}\|Z_{0}-Z_{-0}w\|_{V_{2}}^{2}\\
 & =T_{0}^{-1}E\|\tY_{0}\|_{V_{1}}^{2}-2\left(T_{0}^{-1}E\tY_{0}'V_{1}\tY_{-0}\right)w+w'\left(T_{0}^{-1}E\tY_{-0}'V_{1}\tY_{-0}\right)w+T_{0}^{-1}\|Z_{0}-Z_{-0}w\|_{V_{2}}^{2}.
\end{align*}

Let $w_{\lambda}=\bw+\lambda\Delta$. Notice that $w_{\lambda}\in\mathcal{W}^{SC}$
for any $\lambda\in[0,1]$. Thus, by the definition of $\bw$, we
have that 
\[
\left.\frac{\partial}{\partial\lambda}\left(-2\left(T_{0}^{-1}E\tY_{0}'V_{1}\tY_{-1}\right)w_{\lambda}+w_{\lambda}'\left(T_{0}^{-1}E\tY_{-0}'V_{1}\tY_{-0}\right)w_{\lambda}+T_{0}^{-1}\|Z_{0}-Z_{-0}w_{\lambda}\|_{V_{2}}^{2}\right)\right|_{\lambda=0}\geq0.
\]

This means that 
\[
-2\left(T_{0}^{-1}E\tY_{0}'V_{1}\tY_{-0}\right)\Delta+2\bw'\left(T_{0}^{-1}E\tY_{-0}'V_{1}\tY_{-0}\right)\Delta-2T_{0}^{-1}\Delta Z_{-0}'V_{2}(Z_{0}-Z_{-0}\bw)\geq0.
\]

In other words, 
\begin{equation}
\left(T_{0}^{-1}Eu'V_{1}\tY_{-0}\right)\Delta+T_{0}^{-1}e'V_{2}Z_{-0}\Delta\leq0.\label{eq: basic real sc inequality part 3}
\end{equation}

Hence, by the assumption of $\|T_{0}^{-1}u'V_{1}\tY_{-0}-T_{0}^{-1}Eu'V_{1}\tY_{-0}\|_{\infty}=o_{P}(1)$,
we have that 
\begin{align*}
 & \left(T_{0}^{-1}u'V_{1}\tY_{-0}\right)\Delta+T_{0}^{-1}e'V_{2}Z_{-0}\Delta\\
 & =\left(T_{0}^{-1}Eu'V_{1}\tY_{-0}\right)\Delta+T_{0}^{-1}e'V_{2}Z_{-0}\Delta+\left(T_{0}^{-1}u'V_{1}\tY_{-0}-T_{0}^{-1}Eu'V_{1}\tY_{-0}\right)\Delta\\
 & \overset{\text{(i)}}{\leq}\left(T_{0}^{-1}u'V_{1}\tY_{-0}-T_{0}^{-1}Eu'V_{1}\tY_{-0}\right)\Delta\\
 & \leq\|T_{0}^{-1}u'V_{1}\tY_{-0}-T_{0}^{-1}Eu'V_{1}\tY_{-0}\|_{\infty}\|\Delta\|_{1}\overset{\text{(ii)}}{\leq}o_{P}(1),
\end{align*}
where (i) follows by (\ref{eq: basic real sc inequality part 3})
and (ii) follows by $\|\Delta\|_{1}\leq\|\tw\|_{1}+\|\bw\|_{1}=2$.
This and (\ref{eq: basic real sc inequality part 2}) imply 
\[
\Delta'\left(T_{0}^{-1}\tY_{-0}'V_{1}\tY_{-0}\right)\Delta+\Delta'\left(T_{0}^{-1}Z_{-0}'V_{2}Z_{-0}\right)\Delta\leq o_{P}(1).
\]

On the other hand, we have
\begin{align*}
 & \Delta'\left(T_{0}^{-1}\tY_{-0}'V_{1}\tY_{-0}\right)\Delta\\
 & =\Delta'\left(T_{0}^{-1}E\tY_{-0}'V_{1}\tY_{-0}\right)\Delta+\Delta'\left(T_{0}^{-1}\tY_{-0}'V_{1}\tY_{-0}-T_{0}^{-1}E\tY_{-0}'V_{1}\tY_{-0}\right)\Delta\\
 & \geq\Delta'\left(T_{0}^{-1}E\tY_{-0}'V_{1}\tY_{-0}\right)\Delta-\|T_{0}^{-1}\tY_{-0}'V_{1}\tY_{-0}-T_{0}^{-1}E\tY_{-0}'V_{1}\tY_{-0}\|_{\infty}\|\Delta\|_{1}^{2}\\
 & =\Delta'\left(T_{0}^{-1}E\tY_{-0}'V_{1}\tY_{-0}\right)\Delta-o_{P}(1).
\end{align*}

The above two displays imply 
\[
\Delta'\left(T_{0}^{-1}E\tY_{-0}'V_{1}\tY_{-0}\right)\Delta+\Delta'\left(T_{0}^{-1}Z_{-0}'V_{2}Z_{-0}\right)\Delta=o_{P}(1).
\]

Since $\Delta'\left(T_{0}^{-1}Z_{-0}'V_{2}Z_{-0}\right)\Delta\geq0$
and the smallest eigenvalue of $T_{0}^{-1}E\tY_{-0}'V_{1}\tY_{-0}$
is bounded away from zero, we have $\|\Delta\|_{2}=o_{P}(1)$.

\subsection{Proof of Theorem \ref{thm: asy distr}}

	By Lemma \ref{lem: algebra}, we have that 
	\begin{multline}\label{eq: lem main bnd 4}
\left|\hat{\tau}_{k}-\tau-\left(\frac{1}{T_{1}}\sum_{t=T_{0}+1}^{T}u_{t}-\frac{1}{|H_{k}|}\sum_{t\in H_{k}}u_{t}\right)\right|
	\leq  \left| \frac{1}{|H_{k}|}\sum_{t\in H_{k}}\tX_{t}'\Delta_{(k)}\right|+ \left|\frac{1}{T_{1}}\sum_{t=T_{0}+1}^{T}\tX_{t}'\Delta_{(k)}\right|,
	\end{multline}
	where $\tilde{X}_{t}=X_{t}-E(X_{t})$ 
	and $\Delta_{(k)}=\hw_{(k)}-w_\ast$.\footnote{We note that working with $\tilde{X}_t$ instead of $X_t$ is key for our theoretical argument since $E(\tilde{X}_{t})=0$.} We now bound the two terms on the right-hand side of (\ref{eq: lem main bnd 4}). 
	
Fix $k\in\{1,\dots,K\}$. Define $B_{k}$ to be the ``two-sided buffer'',
i.e., the set that contains the smallest $\gamma_{T}$ numbers and
the largest $\gamma_{T}$ numbers in  $H_{k}=\{(k-1)r+1,\dots,kr\}$ for $1\leq k\le K$. Also define $A_{k}=H_{k}\backslash B_{k}$,
i.e., $A_{k}=\{t:\ (k-1)r+1+\gamma_{T}+1\leq t\leq kr-\gamma_{T}\}$.
Thus, 
\[
\sum_{t\in H_{k}}\tilde{X}_{t}'\Delta_{(k)}=\sum_{t\in A_{k}}\tilde{X}_{t}'\Delta_{(k)}+\sum_{t\in B_{k}}\tilde{X}_{t}'\Delta_{(k)}.
\]

The second term can be bounded by 
\[
\left|\sum_{t\in B_{k}}\tilde{X}_{t}'\Delta_{(k)}\right|\leq\max_{1\leq t\leq T_{0}}\|\tilde{X}_{t}\|_{\infty}\|\Delta_{(k)}\|_{1}|B_{k}|=2\gamma_{T}\max_{1\leq t\leq T_{0}}\|\tilde{X}_{t}\|_{\infty}\|\Delta_{(k)}\|_{1}.
\]

Thus, 
\begin{equation}
P\left(\left|\sum_{t\in B_{k}}\tilde{X}_{t}'\Delta_{(k)}\right|\leq2\rho_{T}\gamma_{T}\|\Delta_{(k)}\|_{1}\right)\rightarrow1.\label{eq: thm main bnd eq 1}
\end{equation}

On the other hand, we use Berbee's coupling to bound $\sum_{t\in A_{k}}\tilde{X}_{t}'\Delta_{(k)}$.
By Theorem 16.2.1 of \citet{athreya2006measure}, on an enlarged probability
space, there exist random variables $\{\bar{X}_{t}\}_{t\in A_{k}}$
such that (1) $\{\bar{X}_{t}\}_{t\in A_{k}}$ and $\{\tilde{X}_{t}\}_{t\in A_{k}}$
have the same distribution, (2) $\{\bar{X}_{t}\}_{t\in A_{k}}$ is
independent of data in $\{1,...,T_{0}\}\backslash H_{k}$ and (3)
$P(\{\bar{X}_{t}\}_{t\in A_{k}}\neq\{\tilde{X}_{t}\}_{t\in A_{k}})\leq\betamix(\gamma_{T})$.
Notice that $\Delta_{(k)}$ is independent of $\{\bar{X}_{t}\}_{t\in A_{k}}$.
Hence, 
\begin{eqnarray*}
E\left[\left(\sum_{t\in A_{k}}\bar{X}_{t}'\Delta_{(k)}\right)^{2}\mid\Delta_{(k)}\right]&=&\Delta_{(k)}'E\left[\left(\sum_{t\in A_{k}}\bar{X}_{t}\right)\left(\sum_{t\in A_{k}}\bar{X}_{t}\right)'\right]\Delta_{(k)}\\
&\overset{\text{(i)}}{\leq}&|A_{k}|\kappa_{1}\|\Delta_{(k)}\|_{2}^{2},
\end{eqnarray*}
where (i) follows by Assumption \ref{assu: weak dependence} and the
fact that $\{\bar{X}_{t}\}_{t\in A_{k}}$ and $\{\tilde{X}_{t}\}_{t\in A_{k}}$
have the same distribution. Thus, $\sum_{t\in A_{k}}\bar{X}_{t}'\Delta_{(k)}=O_{P}(\sqrt{|A_{k}|}\|\Delta_{(k)}\|_{2})$.
Since $P(\{\bar{X}_{t}\}_{t\in A_{k}}\neq\{\tilde{X}_{t}\}_{t\in A_{k}})\leq\betamix(\gamma_{T})=o(1)$,
it follows that 
\begin{equation}
\sum_{t\in A_{k}}\tilde{X}_{t}'\Delta_{(k)}=O_{P}(\sqrt{|A_{k}|}\|\Delta_{(k)}\|_{2}).\label{eq: thm main bnd eq 2}
\end{equation}

Now by (\ref{eq: thm main bnd eq 1}) and (\ref{eq: thm main bnd eq 2}),
\begin{eqnarray*}
\sum_{t\in H_{k}}\tilde{X}_{t}'\Delta_{(k)}&=&O_{P}\left(\rho_{T}\gamma_{T}\|\Delta_{(k)}\|_{1}+\sqrt{|A_{k}|}\|\Delta_{(k)}\|_{2}\right)\\
&\overset{\text{(i)}}{=}&O_{P}\left(\rho_{T}\gamma_{T} \|\Delta_{(k)}\|_{1}+\sqrt{r} \|\Delta_{(k)}\|_{2}\right),
\end{eqnarray*}
where (i) follows by the assumption that $\gamma_{T}=o(T_{0})$ and $|A_k|<|H_{k}|=r$.
Similarly, we can show
\[
T_{1}^{-1}\sum_{t=T_{0}+1}^{T}\tilde{X}_{t}'\Delta=O_{P}\left(T_{1}^{-1}\rho_{T}\gamma_{T} \|\Delta_{(k)}\|_{1}+T_{1}^{-1/2} \|\Delta_{(k)}\|_{2}\right).
\]
Since $r\asymp \min\{T_0,T_1\} $, the above two displays and (\ref{eq: lem main bnd 4}) imply 
	\begin{equation*}
\left|\hat{\tau}_{k}-\tau-\left(\frac{1}{T_{1}}\sum_{t=T_{0}+1}^{T}u_{t}-\frac{1}{|H_{k}|}\sum_{t\in H_{k}}u_{t}\right)\right|=O_{P}\left(\frac{\rho_{T}\gamma_{T} \|\Delta_{(k)}\|_{1}}{\min\{T_0,T_1\}  }+ \frac{\|\Delta_{(k)}\|_{2}}{\sqrt{\min\{T_0,T_1\}  }} \right).
\end{equation*}

By   Assumptions \ref{ass:ell_2_consistency} and \ref{assu: weak dependence} as well as the covariance-stationarity of $u_t$, we have
	\begin{equation*}
\sqrt{\min\{T_0,T_1\} }\left|\hat{\tau}_{k}-\tau-\left(\frac{1}{T_{1}}\sum_{t=T_{0}+1}^{T}\tilde{u}_{t}-\frac{1}{|H_{k}|}\sum_{t\in H_{k}}\tilde{u}_{t}\right)\right|=o_{P}\left(1 \right),
\end{equation*}
where $\tilde{u}_t=u_t-E(u_t)$. 
The desired result in Theorem \ref{thm: asy distr} follows from Assumption \ref{ass:ell_2_consistency} and the usual CLT for dependent processes, e.g., Theorem 5.20 of \cite{white2001asymptotic}.

\subsection{Proof of Theorem \ref{thm: t statistic}}
Part (i) is a direct consequence of Theorem \ref{thm: asy distr}. For Part (ii), recall that 
\[
\mathbb{T}_{K}=\frac{\sqrt{K}(\htau-\tau_{0})}{\sqrt{1+\frac{Kr}{T_{1}}}\times\sqrt{(K-1)^{-1}\sum_{k=1}^{K}(\htau_{(k)}-\htau)^{2}}}.
\]

We notice that $r/T_1\rightarrow \min\{c_0/K,1\} $. Then Theorem \ref{thm: asy distr} and the continuous mapping theorem imply that   $\mathbb{T}_{K}\overset{d}{\rightarrow}\mathcal{T}_{K}$, where 
\begin{align*}
\mathcal{T}_{K} & =\frac{\sqrt{K}\left(\sqrt{\min\{c_0,1\} }\xi_{0}-\sqrt{g_{c_0,K}}\bxi\right)}{\sqrt{1+\min\{c_{0},K\}}\times\sqrt{(K-1)^{-1}\sum_{k=1}^{K}g_{c_0,K}(\xi_{k}-\bxi)^{2}}}\\
& =\frac{\sqrt{\frac{K}{g_{c_0,K}(1+\min\{c_{0},K\})}}\left(\sqrt{\min\{c_0,1\} }\xi_{0}-\sqrt{g_{c_0,K}}\bxi\right)}{\sqrt{(K-1)^{-1}\sum_{k=1}^{K}(\xi_{k}-\bxi)^{2}}},
\end{align*}
where  $\bxi=K^{-1}\sum_{k=1}^{K}\xi_k$.

Notice that $\bxi$ is independent of $\sum_{k=1}^{K}(\xi_{k}-\bxi)^{2}$.
Therefore, the numerator is independent of the denominator. The variance
of the numerator is 
\[
\frac{K}{g_{c_0,K}(1+\min\{c_{0},K\})}\times\left(\min\{c_0,1\} +g_{c_0,K}K^{-1}\right)=\begin{cases}
1 & {\rm if}\ 1\leq c_0 \leq K\\
1 & {\rm if}\ c_{0}<1\\
1 & {\rm if}\ c_{0}>K.
\end{cases}
\]

Hence, the numerator has a standard normal distribution. It follows
that $\Tcal_{K}\sim t_{K-1}$. The proof is complete. \qed

\subsection{Proof of Theorem \ref{thm: common nonstationarity}}
We start with an auxiliary lemma, which is an analog of Lemma \ref{lem: algebra}.
\begin{lem}
\label{lem: algebra nonstationary} Let the assumptions of Theorem \ref{thm: common nonstationarity} hold. 
 Then, for $1\leq k\le K$,
\[
\hat{\tau}_{k}-\tau=\left(\frac{1}{T_{1}}\sum_{t=T_{0}+1}^{T}u_{t}-\frac{1}{|H_{k}|}\sum_{t\in H_{k}}u_{t}\right)+\left(\frac{1}{|H_{k}|}\sum_{t\in H_{k}}\tZ_{t}-\frac{1}{T_{1}}\sum_{t=T_{0}+1}^{T}\tZ_{t}\right)'\Delta_{(k)},
\]
where $\tZ_{t}=Z_{t}-E(Z_{t})$ and $\Delta_{(k)}=\hw_{(k)}-w_\ast$. 
\end{lem}
\begin{proof}[\textbf{Proof of Lemma \ref{lem: algebra nonstationary}}]
Let $\mu=E(Z_t)$. Notice that because $\mathbf{1}_N'w_\ast=\mathbf{1}_N'\hw_{(k)}=1$, for $T_{0}+1\leq t\leq T$, 
\begin{eqnarray*}
Y_{t}-X_{t}'\hw_{(k)}=\alpha_{t}+u_{t}-Z_{t}'\Delta_{(k)}=\alpha_{t}+u_{t}-\mu'\Delta_{(k)}-\tZ_{t}'\Delta_{(k)}
\end{eqnarray*}
and for $t\in H_{k}$
\[
Y_{t}-X_{t}'\hw_{(k)}=u_{t}-Z_{t}'\Delta_{(k)}=u_{t}-\mu'\Delta_{(k)}-\tZ_{t}'\Delta_{(k)}.
\]
Therefore, 
\begin{align*}
\htau_{k}-\tau & =T_{1}^{-1}\sum_{t=T_{0}+1}^{T}(Y_{t}-X_{t}'\hw_{(k)})-\frac{1}{|H_{k}|}\sum_{t\in H_{k}}(Y_{t}-X_{t}'\hat{w}_{(k)})-\tau\\
 & =T_{1}^{-1}\sum_{t=T_{0}+1}^{T}u_{t}-\frac{1}{|H_{k}|}\sum_{t\in H_{k}}u_{t}+\frac{1}{|H_{k}|}\sum_{t\in H_{k}}\tZ_{t}'\Delta_{(k)}-T_{1}^{-1}\sum_{t=T_{0}+1}^{T}\tZ_{t}'\Delta_{(k)}.
\end{align*}
The proof is complete.
\end{proof}
\begin{proof}[\textbf{Proof of Theorem \ref{thm: common nonstationarity}}]
By Equation \eqref{eq:sc_estimator_common_nonstationarity}, $\hw_{(k)}$ only depends on $\{Z_t\}_{t\in H_{(-k)}}$ and, thus, is approximately independent of $\{Z_t\}_{t\in H_{(k)} \bigcup \{T_0+1,\dots,T\} }$. Therefore, using Lemma \ref{lem: algebra nonstationary} and the same arguments as in Theorem \ref{thm: asy distr}, one can show that
	\begin{equation*}
\left|\hat{\tau}_{k}-\tau-\left(\frac{1}{T_{1}}\sum_{t=T_{0}+1}^{T}u_{t}-\frac{1}{|H_{k}|}\sum_{t\in H_{k}}u_{t}\right)\right|=O_{P}\left(\frac{\rho_{T}\gamma_{T} \|\Delta_{(k)}\|_{1}}{\min\{T_0,T_1\}  }+ \frac{\|\Delta_{(k)}\|_{2}}{\sqrt{\min\{T_0,T_1\}  }} \right).
\end{equation*}
The result now follows by Assumption \ref{ass:ell_2_consistency} and the same arguments as in Theorems \ref{thm: asy distr} and \ref{thm: t statistic} and Corollary \ref{cor:CI_PI}.
\end{proof}

\subsection{Proof of Theorem \ref{thm: nonstat deviation}}

We first derive a result on the error bounds for the SC estimator. Then we prove Theorem \ref{thm: nonstat deviation}. 
\begin{lem}
\label{lem: non-stationary machinery lasso}Assume that $Y=Xw+u$
and $X=\theta\oneb_{N}'+Z+\xi\beta'$, where $u\in\RR^{T}$, $w\in\RR^{N}$
satisfies $\oneb_{N}'w=1$ and $w\geq0$. Consider the estimator 
\[
\hw=\arg\min_{v\in\RR^{N}}\ \|Y-Xv\|_{2}^{2}\qquad s.t.\qquad v\geq0,\ \oneb_{N}'v=1.
\]

Suppose that the eigenvalues of $EZ'Z/T$ are between $c_{0}^{-1}$
and $c_{0}$. Then on the event $\Mcal$, 
\[
\|\hw-w\|_{2}\leq\sqrt{\frac{c_{0}}{T}\left[8\lambda_{1}^{2}+2c_{1}^{2}+4\lambda_{2}+4\lambda_{3}\right]}
\]
and 
\[
|\beta'(\hw-w)|\leq\|\xi\|_{2}^{-1}\left[6\lambda_{1}+3c_{1}+4\sqrt{\lambda_{2}}\right],
\]
where 
\[
\Mcal=\left\{ \|Z'\xi\|_{\infty}\leq\lambda_{1}\|\xi\|_{2}\right\} \bigcap\left\{ |u'\xi|\leq c_{1}\|\xi\|_{2}\right\} \bigcap\left\{ \|Z'u\|_{\infty}\leq\lambda_{2}\right\} \bigcap\left\{ \|Z'Z-EZ'Z\|_{\infty}\leq\lambda_{3}\right\} .
\]
\end{lem}
\begin{proof}[\textbf{Proof of Lemma \ref{lem: non-stationary machinery lasso}}]
Let $\Delta=\hw-w$ and $\Sigma_{Z}=EZ'Z/T$. We proceed in two steps.

\textbf{Step 1:} bound $\|\Delta\|_{2}$.

Recall that $Y=Xw+u$. By definition, we have $\|Y-X\hw\|_{2}^{2}\leq\|Y-Xw\|_{2}^{2}$.
This means that 
\[
\|X\Delta\|_{2}^{2}\leq2u'X\Delta.
\]
We notice that $\oneb_{N}'\Delta=0$, $X=\theta\oneb_{N}'+Z+\xi\beta'$
and thus
\[
X\Delta=Z\Delta+\xi\beta'\Delta.
\]

Therefore, we have 
\[
\|Z\Delta\|_{2}^{2}+\|\xi\|_{2}^{2}(\beta'\Delta)^{2}+2\Delta'Z'\xi\beta'\Delta\leq2u'Z\Delta+2u'\xi\beta'\Delta.
\]

We then have 
\begin{equation}
\left[\|\xi\|_{2}(\beta'\Delta)+\|\xi\|_{2}^{-1}(\Delta'Z'\xi-u'\xi)\right]^{2}\leq\|\xi\|_{2}^{-2}(\Delta'Z'\xi-u'\xi)^{2}+2u'Z\Delta-\|Z\Delta\|_{2}^{2}.\label{eq: lem lasso nonstationary 3}
\end{equation}

It follows that
\[
\|\xi\|_{2}^{-2}(\Delta'Z'\xi-u'\xi)^{2}+2u'Z\Delta-\|Z\Delta\|_{2}^{2}\geq0.
\]

Notice that $\|\Delta\|_{1}\leq\|\hw\|_{1}+\|w\|_{1}=2$. Hence, 
\begin{equation}
(\Delta'Z'\xi-u'\xi)^{2}\leq2(\Delta'Z'\xi)^{2}+2(u'\xi)^{2}\leq2\|\Delta\|_{1}^{2}\|Z'\xi\|_{\infty}^{2}+2c_{1}^{2}\|\xi\|_{2}^{2}\leq(8\lambda_{1}^{2}+2c_{1}^{2})\|\xi\|_{2}^{2}.\label{eq: lem lasso nonstationary 3.1}
\end{equation}

By the above two displays, we have 
\[
8\lambda_{1}^{2}+2c_{1}^{2}+2u'Z\Delta\geq\|Z\Delta\|_{2}^{2}.
\]

Moreover, we notice that $|u'Z\Delta|\leq\lambda_{2}\|\Delta\|_{1}$
and 
\begin{align*}
\|Z\Delta\|_{2}^{2} & =\Delta'EZ'Z\Delta+\Delta'(Z'Z-EZ'Z)\Delta\\
 & \geq\Delta'EZ'Z\Delta-\|Z'Z-EZ'Z\|_{\infty}\|\Delta\|_{1}^{2}.
\end{align*}

By $\|\Delta\|_{1}\leq2$, we have that
\[
8\lambda_{1}^{2}+2c_{1}^{2}+4\lambda_{2}\geq T\Delta'\Sigma_{Z}\Delta-4\lambda_{3}.
\]

Therefore, 
\begin{equation}
\Delta'\Sigma_{Z}\Delta\leq\frac{1}{T}\left[8\lambda_{1}^{2}+2c_{1}^{2}+4\lambda_{2}+4\lambda_{3}\right].\label{eq: lem lasso nonstationary 5}
\end{equation}

We obtain the bound on $\|\Delta\|_{2}$ by $\lambda_{\min}(\Sigma_{Z})\geq c_{0}^{-1}$. 

\textbf{Step 2:} bound $\beta'\Delta$. 

By (\ref{eq: lem lasso nonstationary 3}) and (\ref{eq: lem lasso nonstationary 3.1}),
we have 
\begin{multline*}
\left[\|\xi\|_{2}(\beta'\Delta)+\|\xi\|_{2}^{-1}(\Delta'Z'\xi-u'\xi)\right]^{2}\leq\|\xi\|_{2}^{-2}(\Delta'Z'\xi-u'\xi)^{2}+2u'Z\Delta\\
\leq8\lambda_{1}^{2}+2c_{1}^{2}+2u'Z\Delta\leq8\lambda_{1}^{2}+2c_{1}^{2}+2\|Z'u\|_{\infty}\|\Delta\|_{1}\leq8\lambda_{1}^{2}+2c_{1}^{2}+4\lambda_{2}.
\end{multline*}

Then 
\begin{align*}
\|\xi\|_{2}\cdot|\beta'\Delta| & \leq\|\xi\|_{2}^{-1}\cdot|\Delta'Z'\xi-u'\xi|+\sqrt{8\lambda_{1}^{2}+2c_{1}^{2}+4\lambda_{2}}\\
 & \overset{\text{(i)}}{\leq}\sqrt{8\lambda_{1}^{2}+2c_{1}^{2}}+\sqrt{8\lambda_{1}^{2}+2c_{1}^{2}+4\lambda_{2}}\\
 & \leq2\sqrt{8\lambda_{1}^{2}+2c_{1}^{2}+4\lambda_{2}}\leq6\lambda_{1}+3c_{1}+4\sqrt{\lambda_{2}}.
\end{align*}
where (i) follows by (\ref{eq: lem lasso nonstationary 3.1}). The
proof is complete.
\end{proof}
\begin{proof}[\textbf{Proof of Theorem \ref{thm: nonstat deviation}}]
Let $\Delta_{(k)}=\hw_{(k)}-w$. Notice that
\begin{align}
\htau_{k}-\tau & =T_{1}^{-1}\sum_{t=T_{0}+1}^{T}(Y_{t}-X_{t}'\hw_{(k)})-\frac{1}{|H_{k}|}\sum_{t\in H_{k}}(Y_{t}-X_{t}'\hw_{(k)})-\tau\nonumber \\
 & =T_{1}^{-1}\sum_{t=T_{0}+1}^{T}(\alpha_{t}+u_{t}-X_{t}'\Delta_{(k)})-\frac{1}{|H_{k}|}\sum_{t\in H_{k}}(u_{t}-X_{t}'\Delta_{(k)})-\tau\nonumber \\
 & =T_{1}^{-1}\sum_{t=T_{0}+1}^{T}(u_{t}-X_{t}'\Delta_{(k)})-\frac{1}{|H_{k}|}\sum_{t\in H_{k}}(u_{t}-X_{t}'\Delta_{(k)})\nonumber \\
 & =T_{1}^{-1}\sum_{t=T_{0}+1}^{T}u_{t}-\frac{1}{|H_{k}|}\sum_{t\in H_{k}}u_{t}+\frac{1}{|H_{k}|}\sum_{t\in H_{k}}X_{t}'\Delta_{(k)}-T_{1}^{-1}\sum_{t=T_{0}+1}^{T}X_{t}'\Delta_{(k)}.\label{eq: thm nonstat devi 3}
\end{align}

We notice that 
\[
\sum_{t\in H_{k}}X_{t}'\Delta_{(k)}=\sum_{t\in H_{k}}Z_{t}'\Delta_{(k)}+\left(\sum_{t\in H_{k}}\xi_{t}\right)(\beta'\Delta_{(k)}).
\]

Since $\Delta_{(k)}$ depends on $\{(\theta_{t},Z_{t},\xi_{t})\}_{t\in H_{(-k)}}$
and $\{Z_{t}\}_{t\in H_{(-k)}}$ is independent of $\{(\theta_{t},\xi_{t})\}_{t=1}^{T}$,
it follows that conditional on $\{(\theta_{t},\xi_{t})\}_{t=1}^{T}$,
we can still use the coupling argument in the proof of Theorem \ref{thm: asy distr}
and obtain 
\[
\sum_{t\in H_{k}}Z_{t}'\Delta_{(k)}=O_{P}\left(\rho_{T}\gamma_{T}+\sqrt{|H_{k}|}\cdot\|\Delta_{(k)}\|_{2}\right).
\]

By Assumption \ref{assu: nonstat reg}, we apply Lemma \ref{lem: non-stationary machinery lasso}
and obtain 
\begin{equation}
\|\Delta_{(k)}\|_{2}=O_{P}\left((T^{-1}\log N)^{1/4}\right)=o_{P}(1)\label{eq: thm nonstat devi 3.3}
\end{equation}
and
\[
|\beta'\Delta_{(k)}|=O_{P}\left(\frac{(\log N)^{1/2}}{\sqrt{\sum_{t\in H_{(-k)}}\xi_{t}^{2}}}\right).
\]

Therefore, 
\begin{align}
 & \frac{\sqrt{\min\{T_{0},T_{1}\}}}{|H_{k}|}\sum_{t\in H_{k}}X_{t}'\Delta_{(k)}\nonumber \\
 & =\frac{\sqrt{\min\{T_{0},T_{1}\}}}{|H_{k}|}\sum_{t\in H_{k}}Z_{t}'\Delta_{(k)}+\frac{\sqrt{\min\{T_{0},T_{1}\}}}{|H_{k}|}\left(\sum_{t\in H_{k}}\xi_{t}\right)(\beta'\Delta_{(k)})\nonumber \\
 & =\frac{\sqrt{\min\{T_{0},T_{1}\}}}{|H_{k}|}O_{P}\left(\rho_{T}\gamma_{T}+\sqrt{|H_{k}|}\cdot\|\Delta_{(k)}\|_{2}\right)\nonumber \\
 & \qquad+\frac{\sqrt{\min\{T_{0},T_{1}\}}}{|H_{k}|}O_{P}\left(\left(\sum_{t\in H_{k}}\xi_{t}\right)\cdot\frac{(\log N)^{1/2}}{\sqrt{\sum_{t\in H_{(-k)}}\xi_{t}^{2}}}\right)\nonumber \\
 & \overset{\text{(i)}}{=}o_{P}(1)+T_{1}^{-1/2}O_{P}\left((\log N)^{1/2}\frac{\sum_{t\in H_{k}}\xi_{t}}{\sqrt{\sum_{t\in H_{(-k)}}\xi_{t}^{2}}}\right)\nonumber \\
 & \overset{\text{(ii)}}{=}o_{P}(1)+T_{1}^{-1/2}O_{P}\left((\log N)^{1/2}\cdot T_{1}T^{-1/2}\right)\overset{\text{(iii)}}{=}o_{P}(1),\label{eq: thm nonstat devi 5}
\end{align}
where (i) follows by (\ref{eq: thm nonstat devi 3.3}), $\min\{T_{1},T_{0}\}\asymp T_{1}$,
$|H_{k}|\asymp T_{1}$, $T_{0}\asymp T$, $T_{1}^{-1/2}\rho_{T}\gamma_{T}=o(1)$
(Assumption \ref{assu: weak dependence}), (ii) follows by Assumption \ref{assu: non-stationary deviation},
and (iii) follows by $T_{1}\ll T_{0}/\log N$.

Similarly, we can show that

\[
\sum_{t=T_{0}+1}^{T}Z_{t}'\Delta_{(k)}=O_{P}\left(\rho_{T}\gamma_{T}+\sqrt{T_{1}}\|\Delta_{(k)}\|_{2}\right)
\]
and thus 
\begin{align}
 & \frac{\sqrt{\min\{T_{0},T_{1}\}}}{T_{1}}\sum_{t=T_{0}+1}^{T}X_{t}'\Delta_{(k)}\nonumber \\
 & =\frac{\sqrt{\min\{T_{0},T_{1}\}}}{T_{1}}\sum_{t=T_{0}+1}^{T}Z_{t}'\Delta_{(k)}+\frac{\sqrt{\min\{T_{0},T_{1}\}}}{T_{1}}\left(\sum_{t=T_{0}+1}^{T}\xi_{t}\right)(\beta'\Delta_{(k)})\nonumber \\
 & =\frac{\sqrt{\min\{T_{0},T_{1}\}}}{T_{1}}O_{P}\left(\rho_{T}\gamma_{T}+\sqrt{T_{1}}\|\Delta_{(k)}\|_{2}\right)\nonumber \\
 & \qquad+\frac{\sqrt{\min\{T_{0},T_{1}\}}}{T_{1}}O_{P}\left(\left(\sum_{t=T_{0}+1}^{T}\xi_{t}\right)\frac{(\log N)^{1/2}}{\sqrt{\sum_{t\in H_{(-k)}}\xi_{t}^{2}}}\right)\nonumber \\
 & \overset{\text{(i)}}{=}o_{P}(1)+T_{1}^{-1/2}O_{P}\left((\log N)^{1/2}\frac{\sum_{t=T_{0}+1}^{T}\xi_{t}}{\sqrt{\sum_{t\in H_{(-k)}}\xi_{t}^{2}}}\right)\nonumber \\
 & \overset{\text{(ii)}}{=}o_{P}(1)+T_{1}^{-1/2}O_{P}\left((\log N)^{1/2}\cdot T_{1}T^{-1/2}\right)\overset{\text{(iii)}}{=}o_{P}(1),\label{eq: thm nonstat devi 6}
\end{align}
where (i) follows by (\ref{eq: thm nonstat devi 3.3}), $\min\{T_{1},T_{0}\}\asymp T_{1}$
and $T_{1}^{-1/2}\rho_{T}\gamma_{T}=o(1)$, (ii) follows by Assumption
\ref{assu: non-stationary deviation} and (iii) follows by $T_{1}\ll T_{0}/\log N$.

Now we combine (\ref{eq: thm nonstat devi 3}) with (\ref{eq: thm nonstat devi 5})
and (\ref{eq: thm nonstat devi 6}), obtaining 
\begin{align*}
 & \sqrt{\min\{T_{0},T_{1}\}}(\htau_{k}-\tau)\\
 & =\sqrt{\frac{\min\{T_{0},T_{1}\}}{T_{1}}}\left(T_{1}^{-1/2}\sum_{t=T_{0}+1}^{T}u_{t}\right)-\sqrt{\frac{\min\{T_{0},T_{1}\}}{|H_{k}|}}\left(|H_{k}|^{-1/2}\sum_{t\in H_{k}}u_{t}\right)\\
 & \qquad+\frac{\sqrt{\min\{T_{0},T_{1}\}}}{|H_{k}|}\sum_{t\in H_{k}}X_{t}'\Delta_{(k)}-\frac{\sqrt{\min\{T_{0},T_{1}\}}}{T_{1}}\sum_{t=T_{0}+1}^{T}X_{t}'\Delta_{(k)}\\
 & =\sqrt{\frac{\min\{T_{0},T_{1}\}}{T_{1}}}\left(T_{1}^{-1/2}\sum_{t=T_{0}+1}^{T}u_{t}\right)-\sqrt{\frac{\min\{T_{0},T_{1}\}}{|H_{k}|}}\left(|H_{k}|^{-1/2}\sum_{t\in H_{k}}u_{t}\right)+o_{P}(1).
\end{align*}

Since $T_{1}\ll T_{0}$ and $|H_{k}|=\min\{T_{0}/K,T_{1}\}$, we have
$\min\{T_{0},T_{1}\}/T_{1}\rightarrow1$ and $\min\{T_{0},T_{1}\}/|H_{k}|\rightarrow1$.
The first claim of the theorem follows.

For the second claim, we notice that $(T_{1}^{-1/2}\sum_{t=T_{0}+1}^{T}u_{t},|H_{1}|^{-1/2}\sum_{t\in H_{1}}u_{t},...,|H_{K}|^{-1/2}\sum_{t\in H_{k}}u_{t})'$
converges in distribution to the normal distribution with mean zero
and variance matrix $I_{K+1}\sigma^{2}$, where $\sigma^{2}$ is the
long-run variance of $u_{t}$. Notice that 
\[
\mathbb{T}_{K}=\frac{\sqrt{\frac{KT_{1}}{|H_{1}|K+T_{1}}}\cdot(\htau-\tau)}{\sqrt{\frac{1}{K-1}\sum_{k=1}^{K}(\htau_{(k)}-\htau)^{2}}}.
\]

By the same argument as in the proof of Theorem \ref{thm: t statistic}, asymptotically
the numerator has $N(0,1)$ distribution and is independent of the
denominator, which asymptotically has a $\chi_{K-1}^{2}$ distribution
divided by $K-1$. Therefore, $\mathbb{T}_{K}$ converges in distribution
to the student $t$-distribution with $K-1$ degrees of freedom and, by the argument in Corollary \ref{cor:CI_PI}, $P(\tau \in \mathcal{I}_K(1-\alpha))\rightarrow 1-\alpha$. The
proof is complete.
\end{proof}

\subsection{Proof of Theorem \ref{thm: kellog}}

Before we prove Theorem \ref{thm: kellog}, we introduce the some
notations and establish a preliminary result. For any $i\in\{0,...,N\}$
and any $w\in\mathcal{W}^{SC}$, we have that 
\[
Y_{it}(0)-Y_{-i,t}(0)'w=\theta_{t}+V_{it}-(\theta_{t}\oneb_{N}'+V_{-i,t}')w=V_{it}-V_{-i,t}'w.
\]

Due to the covariance-stationarity of $V_{t}$, the distribution of
the above quantity does not depend on $t$. Hence, we can define the
following time-independent quantity: 
\[
w_{*}^{(i)}\in\underset{w\in\mathcal{W}^{SC}}{\arg\min}\,E(Y_{it}(0)-Y_{-i,t}'(0)w)^{2}=\underset{w\in\mathcal{W}^{SC}}{\arg\min}\,E(V_{it}-V_{-i,t}'w)^{2}.
\]
To state the next result, let $\Delta_{(k)}^{(i)}=\hat{w}_{(k)}^{(i)}- w_{*}^{(i)}$.
\begin{lem}
\label{lem: Kellogg tool 1} Let Assumptions \ref{assu: kellogg part 1}
and \ref{assu: kellogg part 2} hold. Then $\max_{1\leq k\leq K}\max_{0\leq i\leq N}\|\Delta_{(k)}^{(i)}\|_{2}=o_{P}(1)$.
\end{lem}

\begin{proof}[\textbf{Proof of Lemma \ref{lem: Kellogg tool 1}}]
Let $f_{i}(w)$ be the $(N+1)$-dimensional vector whose $i$-th entry
is $1$ and whose other entries are $-w$. Then 
\[
\underset{w\in\mathcal{W}^{SC}}{\arg\min}\,E(V_{it}-V_{-i,t}'w)^{2}=\underset{w\in\mathcal{W}^{SC}}{\arg\min}\,E(V_{t}'f_{i}(w))^{2}=\underset{w\in\mathcal{W}^{SC}}{\arg\min}\,f_{i}(w)'\Sigma f_{i}(w),
\]
where $\Sigma=EV_{t}V_{t}'$. Similarly, 
\begin{multline*}
\underset{w\in\mathcal{W}^{SC}}{\arg\min}\,\sum_{t\in H_{(-k)}}(Y_{it}-Y_{-i,t}'w)^{2}=\underset{w\in\mathcal{W}^{SC}}{\arg\min}\,\sum_{t\in H_{(-k)}}(V_{it}-V_{-i,t}'w)^{2}\\
=\underset{w\in\mathcal{W}^{SC}}{\arg\min}\,\sum_{t\in H_{(-k)}}(V_{t}'f_{i}(w))^{2}=\underset{w\in\mathcal{W}^{SC}}{\arg\min}\,f_{i}(w)'\hat{\Sigma}_{(-k)}f_{i}(w),
\end{multline*}
where $\hat{\Sigma}_{(-k)}=|H_{(-k)}|^{-1}\sum_{t\in H_{(-k)}}V_{t}V_{t}'$.
Let $\Delta_{(k)}^{(i)}=f_{i}(\hat{w}_{(k)}^{(i)})-f_{i}(w_{*}^{(i)})$.
By construction, 
\[
[f_{i}(w_{*}^{(i)})+\Delta_{(k)}^{(i)}]'\hat{\Sigma}_{(-k)}[f_{i}(w_{*}^{(i)})+\Delta_{(k)}^{(i)}]\leq f_{i}(w_{*}^{(i)})'\hat{\Sigma}_{(-k)}f_{i}(w_{*}^{(i)}).
\]

Thus, 
\[
(\Delta_{(k)}^{(i)})'\hat{\Sigma}_{(-k)}\Delta_{(k)}^{(i)}\leq-2f_{i}(w_{*}^{(i)})'\hat{\Sigma}_{(-k)}\Delta_{(k)}^{(i)}.
\]

Similar to the argument of  (\ref{eq: key bnd misspecifiation}) in the proof of Lemma \ref{lem: consistency misspecification},
we have that 
\[
f_{i}(w_{*}^{(i)})'\Sigma\Delta_{(k)}^{(i)}\geq0.
\]

The above two displays imply 
\begin{multline*}
(\Delta_{(k)}^{(i)})'\Sigma\Delta_{(k)}^{(i)}-\|\hat{\Sigma}_{(-k)}-\Sigma\|_{\infty}\cdot\|\Delta_{(k)}^{(i)}\|_{1}^{2}\leq(\Delta_{(k)}^{(i)})'\hat{\Sigma}_{(-k)}\Delta_{(k)}^{(i)}\\
\leq-2f_{i}(w_{*}^{(i)})'\hat{\Sigma}_{(-k)}\Delta_{(k)}^{(i)}\leq2\|f_{i}(w_{*}^{(i)})\|_{1}\|\hat{\Sigma}_{(-k)}-\Sigma\|_{\infty}\|\Delta_{(k)}^{(i)}\|_{1}.
\end{multline*}

By construction, $\|f_{i}(w_{*}^{(i)})\|_{1}=2$ and $\|\Delta_{(k)}^{(i)}\|_{1}=\|\hat{w}_{(k)}^{(i)}-w_{*}^{(i)}\|_{1}\leq2$.
It follows that $(\Delta_{(k)}^{(i)})'\Sigma\Delta_{(k)}^{(i)}\leq12\|\hat{\Sigma}_{(-k)}-\Sigma\|_{\infty}$.
Since the eigenvalues of $\Sigma$ are bounded below by $1/\kappa_{1}$,
this means that 
\[
\max_{0\leq i\leq N,0\leq k\leq K}\|\Delta_{(k)}^{(i)}\|_{2}^{2}\leq12\kappa_{1}\max_{0\leq k\leq K}\|\hat{\Sigma}_{(-k)}-\Sigma\|_{\infty}.
\]

The result follows by the assumption of $\max_{0\leq k\leq K}\|\hat{\Sigma}_{(-k)}-\Sigma\|_{\infty}=o_{P}(1)$. 
\end{proof}
\begin{proof}[\textbf{Proof of Theorem \ref{thm: kellog}}]
Define $\tilde{V}_{it}=V_{it}-EV_{it}$ and $u_{it}=\tilde{V}_{it}-\tilde{V}_{-i,t}'w_{*}^{(i)}$.
Since $\hat{w}_{(k)}^{(i)}\in\mathcal{W}^{SC}$, we have that $Y_{it}(0)-Y_{-i,t}(0)'\hat{w}_{(k)}^{(i)}=V_{it}-V_{-i,t}'\hat{w}_{(k)}^{(i)}$.
Hence, 
\begin{align*}
 & T_{1}^{-1}\sum_{i=0}^{N}\sum_{t=T_{0}+1}^{T}(Y_{it}-Y_{-i,t}'\hat{w}_{(k)}^{(i)})\mathbf{1}\{D=i\}\\
 & =T_{1}^{-1}\sum_{i=0}^{N}\sum_{t=T_{0}+1}^{T}(\alpha_{it}+Y_{it}(0)-Y_{-i,t}(0)'\hat{w}_{(k)}^{(i)})\mathbf{1}\{D=i\}\\
 & =\tau+T_{1}^{-1}\sum_{i=0}^{N}\sum_{t=T_{0}+1}^{T}(Y_{it}(0)-Y_{-i,t}(0)'\hat{w}_{(k)}^{(i)})\mathbf{1}\{D=i\}\\
 & =\tau+T_{1}^{-1}\sum_{i=0}^{N}\sum_{t=T_{0}+1}^{T}(V_{it}-V_{-i,t}'\hat{w}_{(k)}^{(i)})\mathbf{1}\{D=i\}
\end{align*}
and similarly, 
\[
\frac{1}{|H_{k}|}\sum_{i=0}^{N}\sum_{t\in H_{k}}(Y_{it}-Y_{-i,t}'\hat{w}_{(k)}^{(i)})\mathbf{1}\{D=i\}=\frac{1}{|H_{k}|}\sum_{i=0}^{N}\sum_{t\in H_{k}}(V_{it}-V_{-i,t}'\hat{w}_{(k)}^{(i)})\mathbf{1}\{D=i\}.
\]

Therefore, letting $\Delta_{(k)}^{(i)}=\hat{w}_{(k)}^{(i)}-w_{*}^{(i)}$,
we have
\begin{align}
\hat{\tau}_{k}-\tau & =\sum_{i=0}^{N}\left(T_{1}^{-1}\sum_{t=T_{0}+1}^{T}(V_{it}-V_{-i,t}'\hat{w}_{(k)}^{(i)})-\frac{1}{|H_{k}|}\sum_{t\in H_{k}}(V_{it}-V_{-i,t}'\hat{w}_{(k)}^{(i)})\right)\cdot\oneb\{D=i\}\nonumber \\
 & =\sum_{i=0}^{N}\left(T_{1}^{-1}\sum_{t=T_{0}+1}^{T}(\tilde{V}_{it}-\tilde{V}_{-i,t}'\hat{w}_{(k)}^{(i)})-\frac{1}{|H_{k}|}\sum_{t\in H_{k}}(\tilde{V}_{it}-\tilde{V}_{-i,t}'\hat{w}_{(k)}^{(i)})\right)\cdot\oneb\{D=i\}\nonumber \\
 & \quad+\sum_{i=0}^{N}\left(T_{1}^{-1}\sum_{t=T_{0}+1}^{T}(EV_{it}-(EV_{-i,t})'\hat{w}_{(k)}^{(i)})-\frac{1}{|H_{k}|}\sum_{t\in H_{k}}(EV_{it}-(EV_{-i,t})'\hat{w}_{(k)}^{(i)})\right)\cdot\oneb\{D=i\}\nonumber \\
 & \overset{\text{(i)}}{=}\sum_{i=0}^{N}\left(T_{1}^{-1}\sum_{t=T_{0}+1}^{T}(\tilde{V}_{it}-\tilde{V}_{-i,t}'\hat{w}_{(k)}^{(i)})-\frac{1}{|H_{k}|}\sum_{t\in H_{k}}(\tilde{V}_{it}-\tilde{V}_{-i,t}'\hat{w}_{(k)}^{(i)})\right)\cdot\oneb\{D=i\}\nonumber \\
 & =\sum_{i=0}^{N}\left(T_{1}^{-1}\sum_{t=T_{0}+1}^{T}u_{it}-\frac{1}{|H_{k}|}\sum_{t\in H_{k}}u_{it}\right)\cdot\oneb\{D=i\}\nonumber \\
 & \quad+\sum_{i=0}^{N}\left(\frac{1}{|H_{k}|}\sum_{t\in H_{k}}\tilde{V}_{-i,t}-T_{1}^{-1}\sum_{t=T_{0}+1}^{T}\tilde{V}_{-i,t}\right)'\Delta_{(k)}^{(i)}\cdot\oneb\{D=i\},\label{eq: thm kellogg eq 4}
\end{align}
where (i) follows by the covariance-stationarity of $V_{t}$. 

For any $i\in\{0,...,N\}$, let $\bar{\Delta}_{(k)}^{(i)}$ be the
$(N+1)$-dimensional vector whose $i$-th component is zero and other
components correspond to $\Delta_{(k)}^{(i)}$. Then $\tilde{V}_{-i,t}'\Delta_{(k)}^{(i)}=\tilde{V}_{t}'\bar{\Delta}_{(k)}^{(i)}$,
$\|\bar{\Delta}_{(k)}^{(i)}\|_{2}^{2}=\|\Delta_{(k)}^{(i)}\|_{2}^{2}$
and $\|\bar{\Delta}_{(k)}^{(i)}\|_{1}=\|\Delta_{(k)}^{(i)}\|_{1}\leq2$.
Hence, 
\begin{align*}
 & \sum_{i=0}^{N}\left(\frac{1}{|H_{k}|}\sum_{t\in H_{k}}\tilde{V}_{-i,t}-T_{1}^{-1}\sum_{t=T_{0}+1}^{T}\tilde{V}_{-i,t}\right)'\Delta_{(k)}^{(i)}\cdot\oneb\{D=i\}\\
 & =\frac{1}{|H_{k}|}\sum_{t\in H_{k}}\sum_{i=0}^{N}\tilde{V}_{-i,t}'\Delta_{(k)}^{(i)}\cdot\oneb\{D=i\}-T_{1}^{-1}\sum_{t=T_{0}+1}^{T}\sum_{i=0}^{N}\tilde{V}_{-i,t}'\Delta_{(k)}^{(i)}\cdot\oneb\{D=i\}\\
 & =\frac{1}{|H_{k}|}\sum_{t\in H_{k}}\sum_{i=0}^{N}\tilde{V}_{t}'\bar{\Delta}_{(k)}^{(i)}\cdot\oneb\{D=i\}-T_{1}^{-1}\sum_{t=T_{0}+1}^{T}\sum_{i=0}^{N}\tilde{V}_{t}'\bar{\Delta}_{(k)}^{(i)}\cdot\oneb\{D=i\}\\
 & =\frac{1}{|H_{k}|}\sum_{t\in H_{k}}\tilde{V}_{t}'\Delta_{(k)}-T_{1}^{-1}\sum_{t=T_{0}+1}^{T}\tilde{V}_{t}'\Delta_{(k)},
\end{align*}
where $\Delta_{(k)}=\sum_{i=0}^{N}\bar{\Delta}_{(k)}^{(i)}\cdot\oneb\{D=i\}$.
By the same coupling argument as in the proof of Theorem \ref{thm: asy distr}, we have
that 
\[
\left|\frac{1}{|H_{k}|}\sum_{t\in H_{k}}\tilde{V}_{t}'\Delta_{(k)}-T_{1}^{-1}\sum_{t=T_{0}+1}^{T}\tilde{V}_{t}'\Delta_{(k)}\right|=O_{P}\left(\frac{\rho_{T}\gamma_{T}\|\Delta_{(k)}\|_{1}}{\min\{T_{0},T_{1}\}}+\frac{\|\Delta_{(k)}\|_{2}}{\sqrt{\min\{T_{0},T_{1}\}}}\right),
\]
where $\rho_{T}\gamma_{T}\rightarrow0$. Notice that $\|\Delta_{(k)}\|_{1}=\sum_{i=0}^{N}\|\bar{\Delta}_{(k)}^{(i)}\|_{1}\cdot\oneb\{D=i\}\leq2$
and $\|\Delta_{(k)}\|_{2}=\sum_{i=0}^{N}\|\bar{\Delta}_{(k)}^{(i)}\|_{2}\cdot\oneb\{D=i\}\leq\max_{0\leq i\leq N}\|\bar{\Delta}_{(k)}^{(i)}\|_{2}=\max_{0\leq i\leq N}\|\Delta_{(k)}^{(i)}\|_{2}=o_{P}(1)$
due to Lemma \ref{lem: Kellogg tool 1}. Hence, 
\[
\left|\frac{1}{|H_{k}|}\sum_{t\in H_{k}}\tilde{V}_{t}'\Delta_{(k)}-T_{1}^{-1}\sum_{t=T_{0}+1}^{T}\tilde{V}_{t}'\Delta_{(k)}\right|=o_{P}\left(\frac{1}{\sqrt{\min\{T_{0},T_{1}\}}}\right).
\]

Let $\bar{w}_{*}^{(i)}$ be the $(N+1)$-dimensional vector whose
$i$-th entry is $1$ and the other entries correspond to $-w_{*}^{(i)}$.
Then $u_{it}=\tilde{V}_{it}-\tilde{V}_{-i,t}'w_{*}^{(i)}=\tilde{V}_{t}'\bar{w}_{*}^{(i)}$.
Therefore, 
\begin{align*}
 & \sum_{i=0}^{N}\left(T_{1}^{-1}\sum_{t=T_{0}+1}^{T}u_{it}-\frac{1}{|H_{k}|}\sum_{t\in H_{k}}u_{it}\right)\cdot\oneb\{D=i\}\\
 & =\sum_{i=0}^{N}\left(T_{1}^{-1}\sum_{t=T_{0}+1}^{T}\tilde{V}_{t}'\bar{w}_{*}^{(i)}-\frac{1}{|H_{k}|}\sum_{t\in H_{k}}\tilde{V}_{t}'\bar{w}_{*}^{(i)}\right)\cdot\oneb\{D=i\}\\
 & =T_{1}^{-1}\sum_{t=T_{0}+1}^{T}\tilde{V}_{t}'\bar{w}_{*}-\frac{1}{|H_{k}|}\sum_{t\in H_{k}}\tilde{V}_{t}'\bar{w}_{*},
\end{align*}
where $\bar{w}_{*}=\sum_{i=0}^{N}\bar{w}_{*}^{(i)}\oneb\{D=i\}$.
The above two displays and (\ref{eq: thm kellogg eq 4}) imply that
\[
\sqrt{\min\{T_{0},T_{1}\}}\left|\hat{\tau}_{(k)}-\tau-\left(T_{1}^{-1}\sum_{t=T_{0}+1}^{T}\tilde{V}_{t}'\bar{w}_{*}-\frac{1}{|H_{k}|}\sum_{t\in H_{k}}\tilde{V}_{t}'\bar{w}_{*}\right)\right|=o_{P}(1).
\]

Let $A_{1}=\{1,...,T\}\backslash\{T_{0}-2B_{T}+1,...,T_{0}+2B_{T}\}$
and $A_{2}=\{T_{0}-B_{T},...,T_{0}+B_{T}\}$. By Berbee's coupling
(e.g., Theorem 16.2.1 of \citet{athreya2006measure}), on an enlarged probability
space, there exist random variables $\{\bar{V}_{t}\}_{t\in A_{1}}$
such that (1) $\{\bar{V}_{t}\}_{t\in A_{1}}$ and $\{\tilde{V}_{t}\}_{t\in A_{1}}$
have the same distribution, (2) $\{\bar{V}_{t}\}_{t\in A_{1}}$ is
independent of $\{\tilde{V}_{t}\}_{t\in A_{2}}$ and (3) $P(\{\bar{V}_{t}\}_{t\in A_{1}}\neq\{\tilde{V}_{t}\}_{t\in A_{1}})\leq\beta_{{\rm mix}}(B_{T})$.
Define $R_{T,k}=T_{1}^{-1}\sum_{t=T_{0}+2B_{T}+1}^{T}\tilde{V}_{t}'\bar{w}_{*}-\frac{1}{|H_{k}|}\sum_{t\in H_{k}\backslash A_{1}}\tilde{V}_{t}'\bar{w}_{*}$.
Notice that 
\begin{align*}
 & T_{1}^{-1}\sum_{t=T_{0}+1}^{T}\tilde{V}_{t}'\bar{w}_{*}-\frac{1}{|H_{k}|}\sum_{t\in H_{k}}\tilde{V}_{t}'\bar{w}_{*}-R_{T,k}\\
 & =T_{1}^{-1}\sum_{t=T_{0}+1}^{T_{0}+2B_{T}}\tilde{V}_{t}'\bar{w}_{*}-\frac{1}{|H_{k}|}\sum_{t\in H_{k}\bigcap A_{1}}\tilde{V}_{t}'\bar{w}_{*}\\
 & =O_{P}\left(\max_{1\leq t\leq T}\|\tilde{V}_{t}\|_{\infty}\left(T_{1}^{-1}\cdot2B_{T}+|H_{k}|^{-1}\cdot|H_{k}\bigcap A_{1}|\right)\cdot\|\bar{w}_{*}\|_{1}\right)\\
 & \overset{\text{(i)}}{=}O_{P}\left(\rho_{T}\left(T_{1}^{-1}\cdot2B_{T}+|H_{k}|^{-1}\cdot|H_{k}\bigcap A_{1}|\right)\right)\overset{\text{(ii)}}{=}o_{P}\left(\frac{1}{\sqrt{\min\{T_{0},T_{1}\}}}\right),
\end{align*}
where (i) follows by $\|\bar{w}_{*}\|_{1}=\sum_{i=0}^{N}\|\bar{w}_{*}^{(i)}\|_{1}\cdot\oneb\{D=i\}\leq1$
and (ii) follows by $|H_{k}\bigcap A_{1}|\leq2B_{T}$ and the assumption
of $B_{T}^{2}\ll\rho_{T}^{-2}\min\{T_{0},T_{1}\}$. 

By assumption, $\bar{w}_{*}$ is independent of $\{\tilde{V}_{t}\}_{t\in A_{1}}$
conditional on $\{\tilde{V}_{t}\}_{t\in A_{2}}\bigcup\{\theta_{t}\}_{t\in A_{2}}$.
Thus, $\bar{w}_{*}$ is independent of $\{\bar{V}_{t}\}_{t\in A_{1}}$
conditional on $\{\tilde{V}_{t}\}_{t\in A_{2}}\bigcup\{\theta_{t}\}_{t\in A_{2}}$.
Thus, conditional on $\{\bar{V}_{t}\}_{t\in A_{2}}\bigcup\{\theta_{t}\}_{t\in A_{2}}$,
there exists $\sigma^{2}$ representing the long-run variance of $T^{-1}\sum_{t=1}^{T}\bar{V}_{t}'\bar{w}_{*}$
such that
\[
\sqrt{\min\{T_{0},T_{1}\}}\begin{pmatrix}T_{1}^{-1}\sum_{t=T_{0}+2B_{T}+1}^{T}\bar{V}_{t}'\bar{w}_{*}\\
\frac{1}{|H_{1}|}\sum_{t\in H_{1}\backslash A_{1}}\bar{V}_{t}'\bar{w}_{*}\\
\frac{1}{|H_{2}|}\sum_{t\in H_{2}\backslash A_{1}}\bar{V}_{t}'\bar{w}_{*}\\
\vdots\\
\frac{1}{|H_{K}|}\sum_{t\in H_{K}\backslash A_{1}}\bar{V}_{t}'\bar{w}_{*}
\end{pmatrix}/\sigma\overset{d}{\rightarrow}\begin{pmatrix}\sqrt{\min\{c_{0},1\}}\xi_{0}\\
\sqrt{g_{c_{0},K}}\xi_{1}\\
\sqrt{g_{c_{0},K}}\xi_{2}\\
\vdots\\
\sqrt{g_{c_{0},K}}\xi_{K}
\end{pmatrix},
\]
where $\xi_{0},...,\xi_{K}$ are independent $N(0,1)$ random variables.
Since $R_{T,k}=T_{1}^{-1}\sum_{t=T_{0}+2B_{T}+1}^{T}\tilde{V}_{t}'\bar{w}_{*}-\frac{1}{|H_{k}|}\sum_{t\in H_{k}\backslash A_{1}}\tilde{V}_{t}'\bar{w}_{*}$
with probability at least $1-\beta_{\min}(B_{T})$ and $B_{T}\rightarrow\infty$,
the above three displays imply 
\[
\sqrt{\min\{T_{0},T_{1}\}}\begin{pmatrix}\hat{\tau}_{(1)}-\tau\\
\vdots\\
\hat{\tau}_{(K)}-\tau
\end{pmatrix}/\sigma\overset{d}{\rightarrow}\begin{pmatrix}\sqrt{\min\{c_{0},1\}}\xi_{0}-\sqrt{g_{c_{0},K}}\xi_{1}\\
\vdots\\
\sqrt{\min\{c_{0},1\}}\xi_{0}-\sqrt{g_{c_{0},K}}\xi_{K}
\end{pmatrix}.
\]

The desired result follows by the same argument as in Theorem \ref{thm: t statistic}
and Corollary \ref{cor:CI_PI}. 
\end{proof}

\subsection{Proof of Theorem \ref{thm: asy distribution expected effect}}

Using similar arguments as in Lemma \ref{lem: algebra} and Theorem \ref{thm: asy distr}, we obtain
\begin{equation*}
\left|\hat{\tau}_{k}-\tau_{e}-\left(\frac{1}{|\tilde{H}_k|}\sum_{t\in \tilde{H}_k}(u_{t}+\tilde{\alpha}_{t})-\frac{1}{|H_{k}|}\sum_{t\in H_{k}}u_{t}\right)\right|=O_{P}\left(\frac{\rho_{T}\gamma_{T} \|\Delta_{(k)}\|_{1}}{\min\{T_0,T_1\}  }+ \frac{\|\Delta_{(k)}\|_{2}}{\sqrt{\min\{T_0,T_1\}  }} \right).
\end{equation*}

Since $\tilde{\tau}_1,\dots,\tilde{\tau}_K$ are obtained based on non-overlapping blocks of data,
\[
\sqrt{\min\{T_0,T_1\} }\begin{pmatrix}\htau_{1}-\tau_e\\
\vdots\\
\htau_{K}-\tau_e
\end{pmatrix}\overset{d}{\rightarrow}\begin{pmatrix}\xi_{1}\\\vdots\\ \xi_{K}
\end{pmatrix},
\]
where $\xi_1,\dots,\xi_K$ are independent and identically distributed mean zero normal random variables. The result (i) now follows from classical arguments, and (ii) follows from the same arguments as in Corollary \ref{cor:CI_PI}.

\subsection{Proof of Theorem \ref{thm: higher-order t}}
We define the mapping $\psi:\RR^{K}\rightarrow\RR$ by 
\[
\psi(v_{1},...,v_{K})=\frac{\sqrt{K}\bar{v}}{\sqrt{(K-1)^{-1}\sum_{k=1}^{K}(v_{k}-\bar{v})^{2}}}\qquad\text{with}\qquad\bar{v}=K^{-1}\sum_{j=1}^{K}v_{j}.
\]

We define $\Sigma_{T,K}$ to be the covariance matrix of $\sqrt{G}(\hbeta_{1},...,\hbeta_{K})$.
We also define the matrix $\Sigma_{K}$ to be the diagonal matrix
whose diagonal entries are all equal to $TE(\hbeta_{1}^{2})$. Finally,
we define $\xi\sim N(0,I_{K})$. Hence, $\mathbf{T}_K$ has the same distribution
as $\psi(\Sigma_{T,K}^{1/2}\xi)$. Moreover, $\psi(\Sigma_{K}^{1/2}\xi)$
follows student's $t$-distribution with $K-1$ degrees of freedom.
It suffices to show that 
\[
\left|P\left(\psi(\Sigma_{T,K}^{1/2}\xi)\leq a\right)-P\left(\psi(\Sigma_{K}^{1/2}\xi)\leq a\right)\right|=O(T^{-1})\qquad\forall a\in\RR.
\]

Clearly, $\psi(\cdot)$ is Lipschitz. Since the student's $t$-distribution
with $K-1$ degrees of freedom has bounded density, we only need to
show that $\|\Sigma_{T,K}^{1/2}\xi-\Sigma_{K}^{1/2}\xi\|_{2}=O_{P}(T^{-1})$. 

Now we do so by showing that $\Sigma_{T,K}-\Sigma_{K}=O(T^{-1})$.
The diagonal entries in both matrices are the same. We consider the
off-diagonal entries. Fix $k,l\in\{1,...,K\}$ with $l>k\geq1$. Then
\begin{align*}
& \left|\left(\Sigma_{T,K}\right)_{k,l}\right|\\
& =G\left|E(\hbeta_{k}\hbeta_{l})\right|\\
& =G^{-1}\left|\sum_{t_{1}\in H_{k}}\sum_{t_{2}\in H_{l}}Eu_{t_{1}}u_{t_{2}}\right|\\
& =G^{-1}\left|\sum_{t_{1}=(k-1)G+1}^{kG}\sum_{t_{2}=(l-1)G+1}^{lG}Eu_{t_{1}}u_{t_{2}}\right|\\
& =G^{-1}\left|\sum_{t_{1}=(k-1)G+1}^{kG}\sum_{t_{2}=(l-1)G+1}^{lG}\gamma(t_{2}-t_{1})\right|\\
& =G^{-1}\left|\sum_{t_{1}=(k-1)G+1}^{kG}\sum_{h=(l-1)G+1-t_{1}}^{lG-t_{1}}\gamma(h)\right|\\
& =G^{-1}\left|\sum_{h=(l-2)G+1}^{(l-1)G}\gamma(h)\left[\min\left\{ (l-k+1)G-h,G\right\} -\max\left\{ (l-k)G+1-h,1\right\} \right]\right|\\
& \leq G^{-1}\sum_{h=(l-2)G+1}^{(l-1)G}|\gamma(h)|\cdot\left|\left[\min\left\{ (l-k+1)G-h,G\right\} -\max\left\{ (l-k)G+1-h,1\right\} \right]\right|\\
& \leq G^{-1}\sum_{h=(l-2)G+1}^{(l-1)G}|\gamma(h)|h\leq KT^{-1}\sum_{h=1}^{\infty}|\gamma(h)|h.
\end{align*}

Since $\sum_{h=1}^{\infty}|\gamma(h)|h$ is bounded, we have $\left(\Sigma_{T,K}\right)_{k,l}=O(T^{-1})$.
Therefore, $\Sigma_{T,K}-\Sigma_{K}=O(T^{-1})$. The proof is complete.  \qed

\subsection{Proof of Corollary \ref{cor:CI_PI}}

By Theorem \ref{thm: t statistic}, we have that
$$
P\left(\mathbb{T}_K\in [-t_{K-1}(1-\alpha/2),t_{K-1}(1-\alpha/2)]\right)\rightarrow 1-\alpha.$$ 
Note that the event
$$
\left\{\mathbb{T}_K\in [-t_{K-1}(1-\alpha/2),t_{K-1}(1-\alpha/2)]\right\}
$$
is equivalent to the event
$$
\left\{\tau \in \left[\htau -t_{K-1}(1-\alpha/2)\frac{\hat\sigma_{\htau}}{\sqrt{K}},~ \htau +t_{K-1}(1-\alpha/2) \frac{\hat\sigma_{\htau}}{\sqrt{K}}\right]\right\}.
$$
Thus the result follows. \qed

\section{Additional simulations}
\label{app:additional_simulations}

\subsection{Performance with large $T_0$}
\label{app:large_T0}
In this section, we explore the performance of the $t$-test when $T_0=150$. These simulations are motivated by the theoretical results in Section \ref{sec:sparse_deviations}, which require that $T_0$ is much larger than $T_1$. A larger $T_0$ further allows us to explore the impact of choosing a larger $K$.  

Table \ref{tab:empirical_mc_large_T0} shows the results for $(T_0,T_1,N)=(150,16,14)$ and $K\in \{4,6,8\}$. The results for DGP6--DGP7 confirm the  theoretical results in Section \ref{sec:sparse_deviations}, and the results for DGP8--DGP9 suggest that the $t$-test remains quite robust in settings not covered by our theory, provided that $T_0$ is large enough. Table \ref{tab:empirical_mc_large_T0} also shows that increasing $K$ to $K=6$ and $K=8$ reduces the length of the confidence intervals, especially under misspecification, while not affecting coverage accuracy much.

\begin{table}[H]
\setlength{\tabcolsep}{4pt}
\linespread{1.05}
\scriptsize
\centering
\caption{Simulation results with $(T_0,T_1,N)=(150,16,14)$}
\begin{tabular}{lccccccccc}

\toprule
\midrule

&\multicolumn{3}{c}{Bias$\times$10} & \multicolumn{3}{c}{Coverage} & \multicolumn{3}{c}{Avg.\ length CI} \\
\cmidrule(l{5pt}r{5pt}){2-4}\cmidrule(l{5pt}r{5pt}){5-7} \cmidrule(l{5pt}r{5pt}){8-10}
&$K=4$&$K=6$&$K=8$&$K=4$&$K=6$&$K=8$&$K=4$&$K=6$&$K=8$\\
\midrule
DGP1 &  -0.00 & -0.00 & -0.00 & 0.90 & 0.90 & 0.89 & 0.06 & 0.05 & 0.05 \\

\midrule
DGP2 &0.00 & 0.00 & -0.00 & 0.89 & 0.91 & 0.90 & 0.06 & 0.05 & 0.05 \\ 

\midrule
DGP3 &0.01 & 0.01 & 0.02 & 0.89 & 0.90 & 0.90 & 0.48 & 0.41 & 0.38 \\ 

\midrule
DGP4 &-0.01 & -0.00 & -0.02 & 0.90 & 0.90 & 0.91 & 0.48 & 0.41 & 0.39 \\ 

\midrule
DGP5 &0.00 & 0.01 & -0.01 & 0.91 & 0.91 & 0.91 & 0.48 & 0.42 & 0.39 \\ 

\midrule
DGP6 & -0.01 & -0.01 & -0.02 & 0.89 & 0.88 & 0.87 & 0.06 & 0.05 & 0.05 \\ 

\midrule  
DGP7 & 0.00 & 0.00 & 0.00 & 0.89 & 0.89 & 0.89 & 0.06 & 0.05 & 0.05 \\ 

\midrule
DGP8 &0.01 & 0.01 & 0.00 & 0.86 & 0.85 & 0.84 & 0.41 & 0.35 & 0.32 \\ 

\midrule
DGP9 &-0.01 & -0.00 & 0.00 & 0.83 & 0.83 & 0.82 & 0.09 & 0.08 & 0.08 \\ 

\midrule
\bottomrule

\multicolumn{10}{p{11cm}}{\scriptsize{\it Notes:}. Simulations are based on 10,000 repetitions. CI: Confidence interval. Nominal coverage: $1-\alpha= 0.9$. The DGPs are described in Section \ref{sec:simulations}.}
\end{tabular}
\addtolength{\tabcolsep}{4pt}    
\label{tab:empirical_mc_large_T0}
\end{table}
\normalsize
\linespread{1.25}

\subsection{Performance with small $T_1$}
Figures \ref{fig:coverage_small_T1} and \ref{fig:length_small_T1} display the coverage and average length of the confidence intervals based on the $t$-test for DGP1--DGP9 when $T_0=30$ and $T_1\in \{10,12,14,16\}$ (recall that $T_1=16$ in the main text). Given the small sample setting, we choose $K=3$. Overall, the performance of the $t$-test does not change much if the sample size decreases from $T_1=16$ to $T_1=10$, except under DGP8, which is not covered by our theory. This finding is not surprising given that the persistence in the prediction errors is relatively low ($\rho_u=0.31$) and changing $T_1$ does not affect the estimation of the weights. Under DGP8, and to a lesser extent under DGP6, coverage decreases as $T_1$ increases. This result is consistent with our theoretical results, which require $T_0$ to be much larger than $T_1$ when there are deviations from a common nonstationarity.

\begin{figure}[H]
\caption{Coverage with small $T_1$}
	\label{fig:coverage_small_T1}
	\begin{center}
		\includegraphics[width=0.32\textwidth,trim = {0 0cm 0 1cm}]{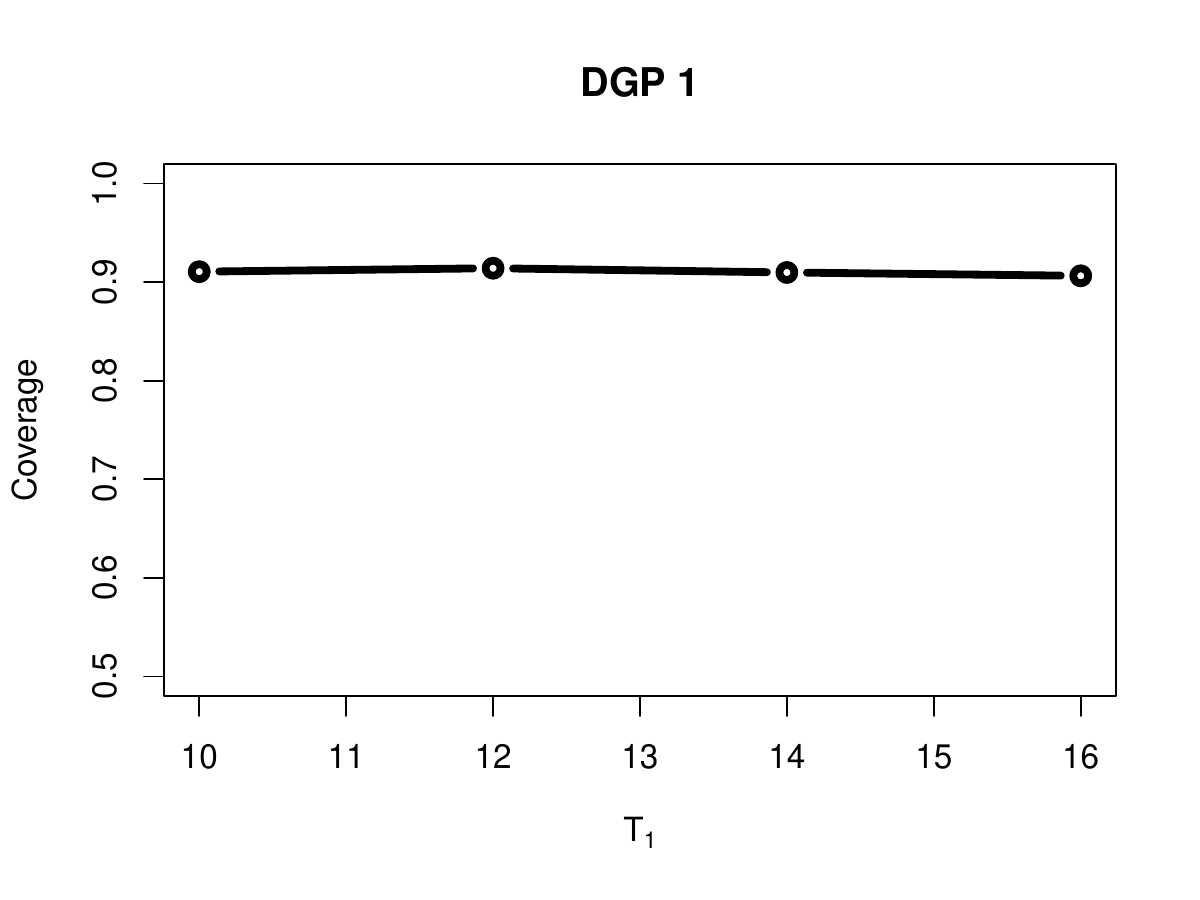}
  		\includegraphics[width=0.32\textwidth,trim = {0 0cm 0 1cm}]{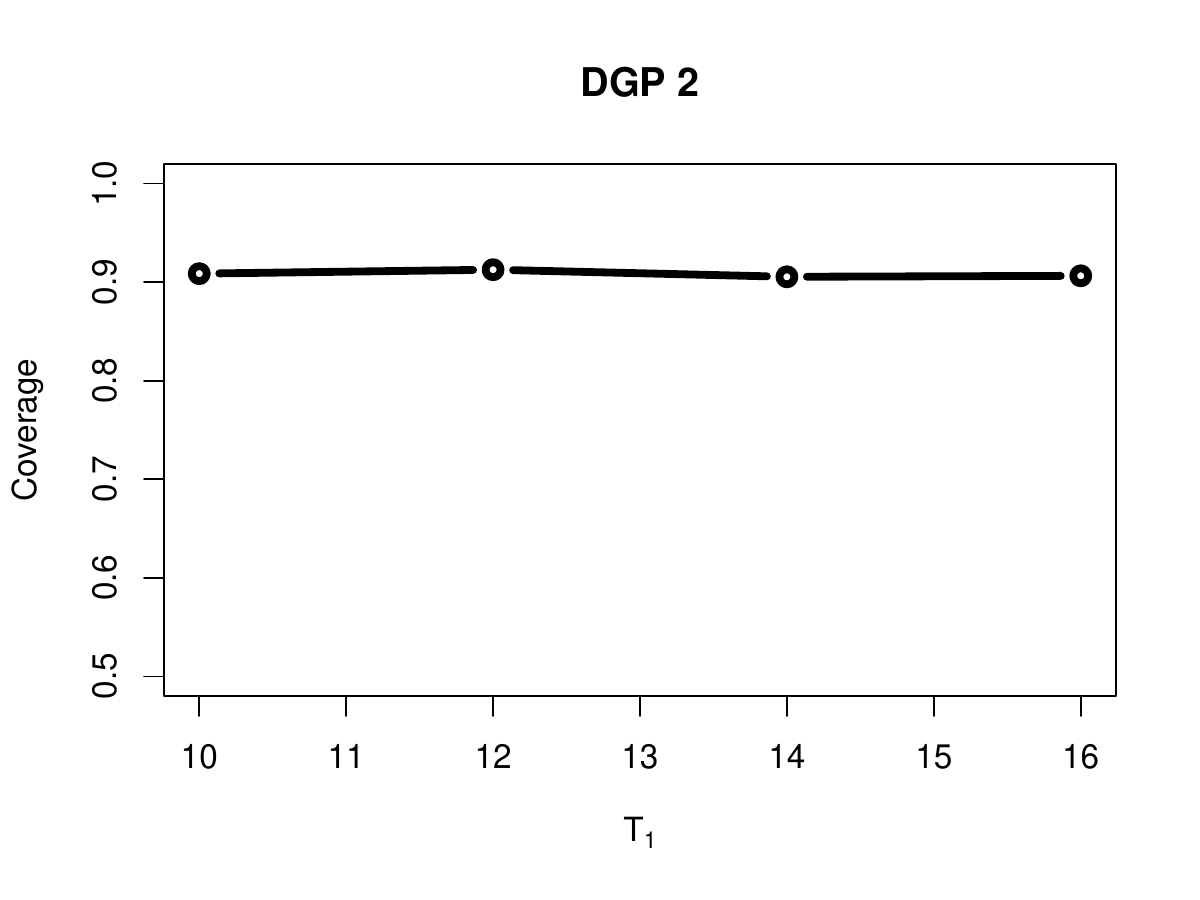}
        \includegraphics[width=0.32\textwidth,trim = {0 0cm 0 1cm}]{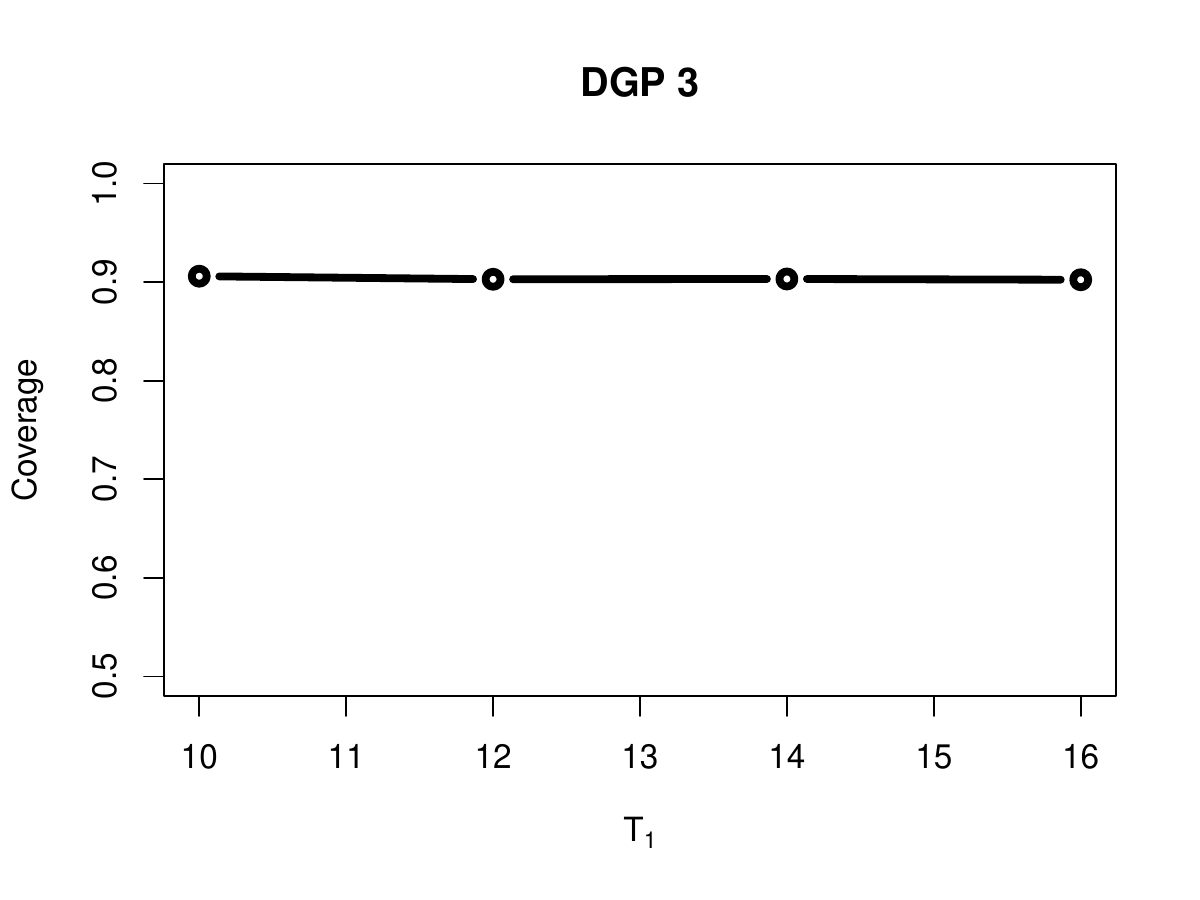}
        
  		\includegraphics[width=0.32\textwidth,trim = {0 0cm 0 1cm}]{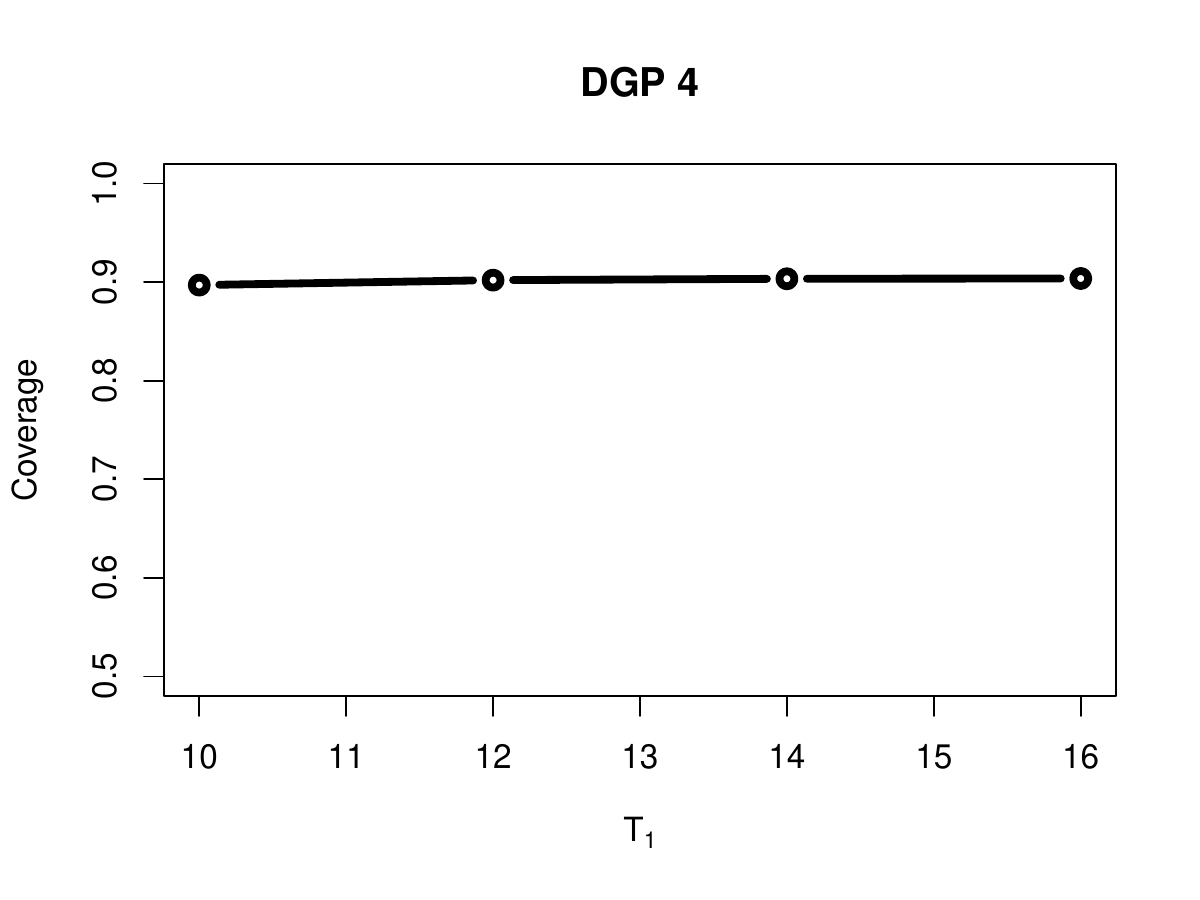}
      	\includegraphics[width=0.32\textwidth,trim = {0 0cm 0 1cm}]{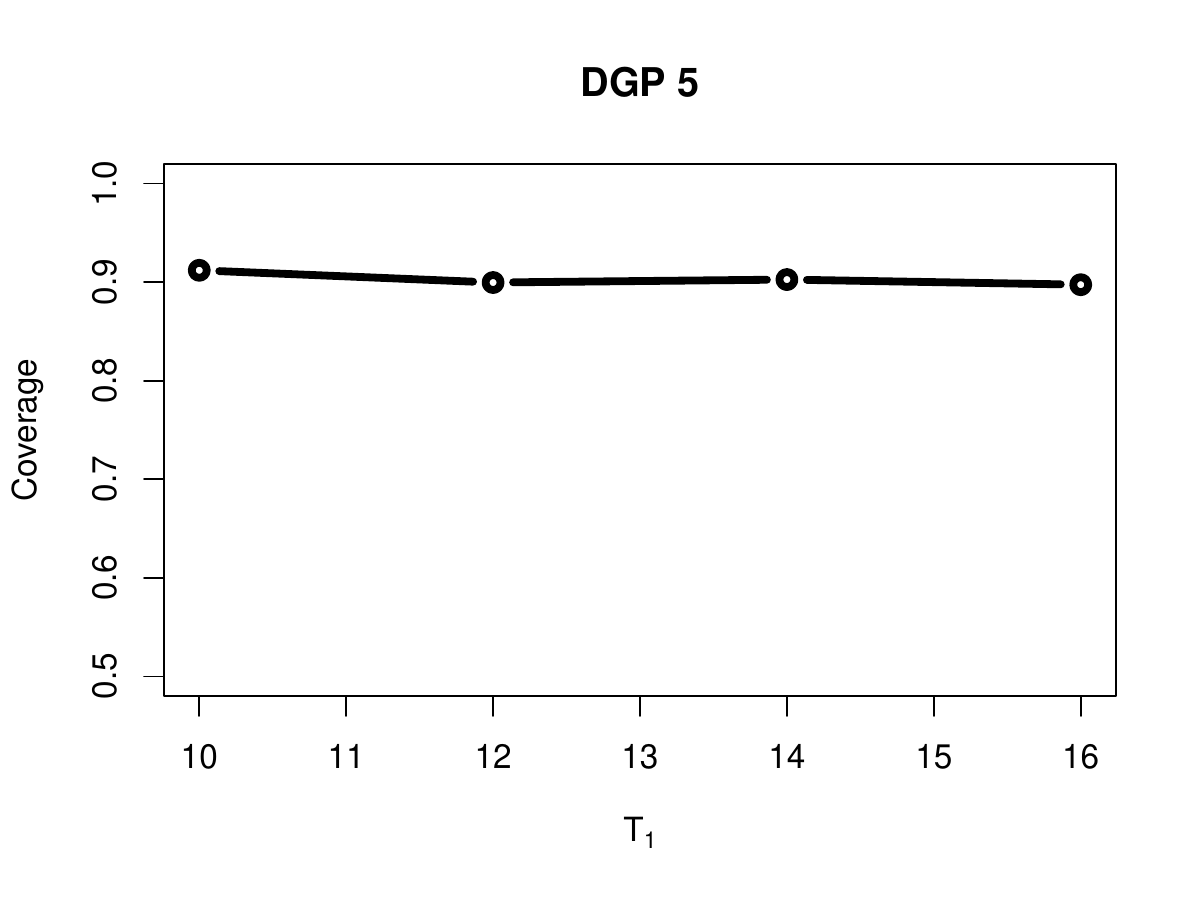}
      	\includegraphics[width=0.32\textwidth,trim = {0 0cm 0 1cm}]{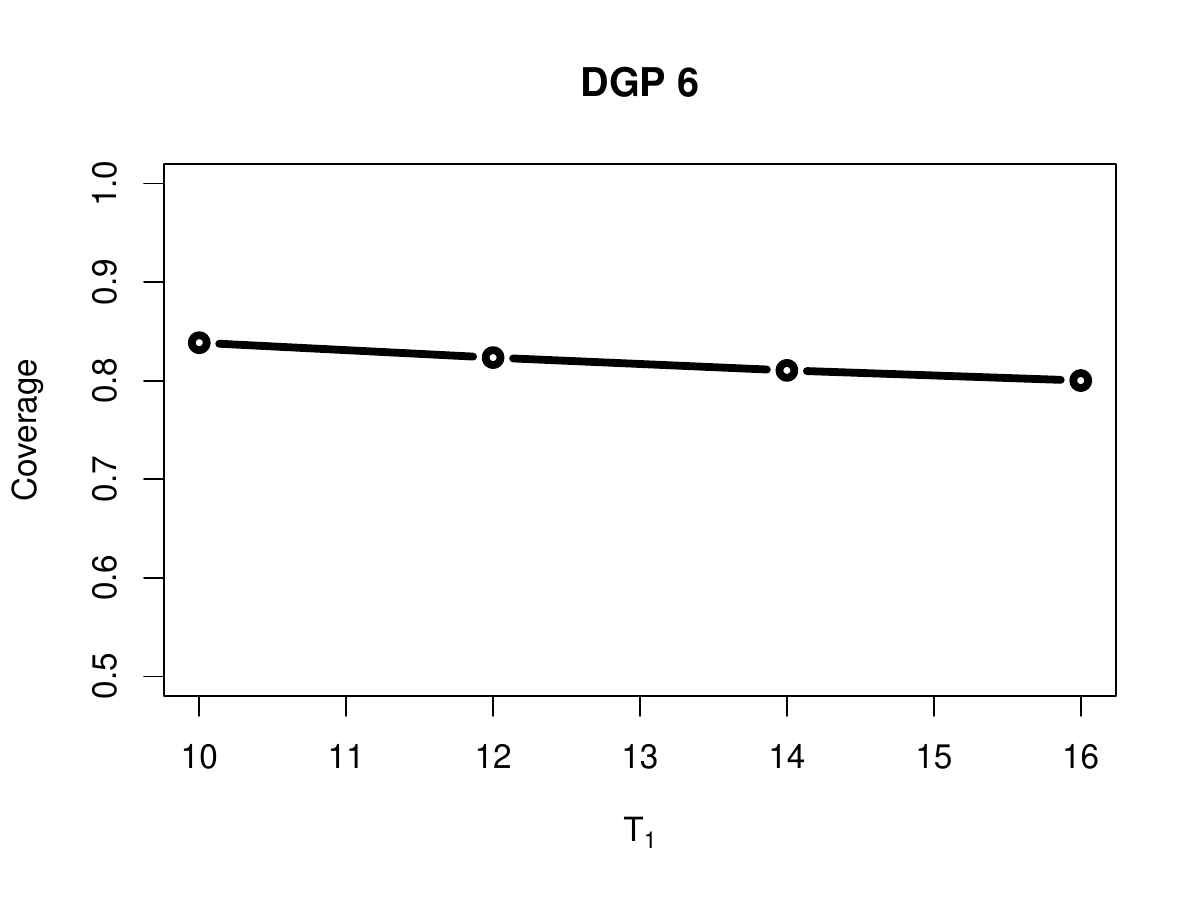}

  		\includegraphics[width=0.32\textwidth,trim = {0 0cm 0 1cm}]{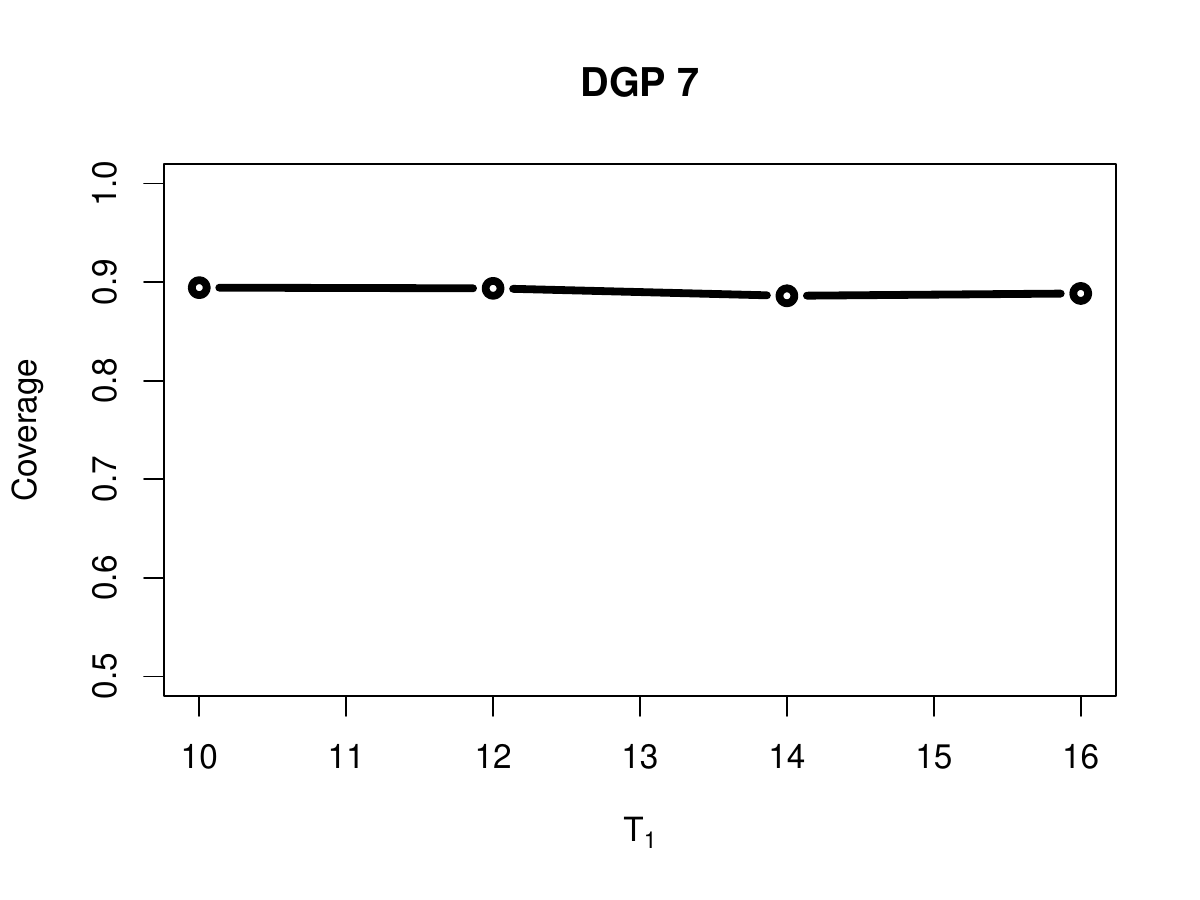}
      	\includegraphics[width=0.32\textwidth,trim = {0 0cm 0 1cm}]{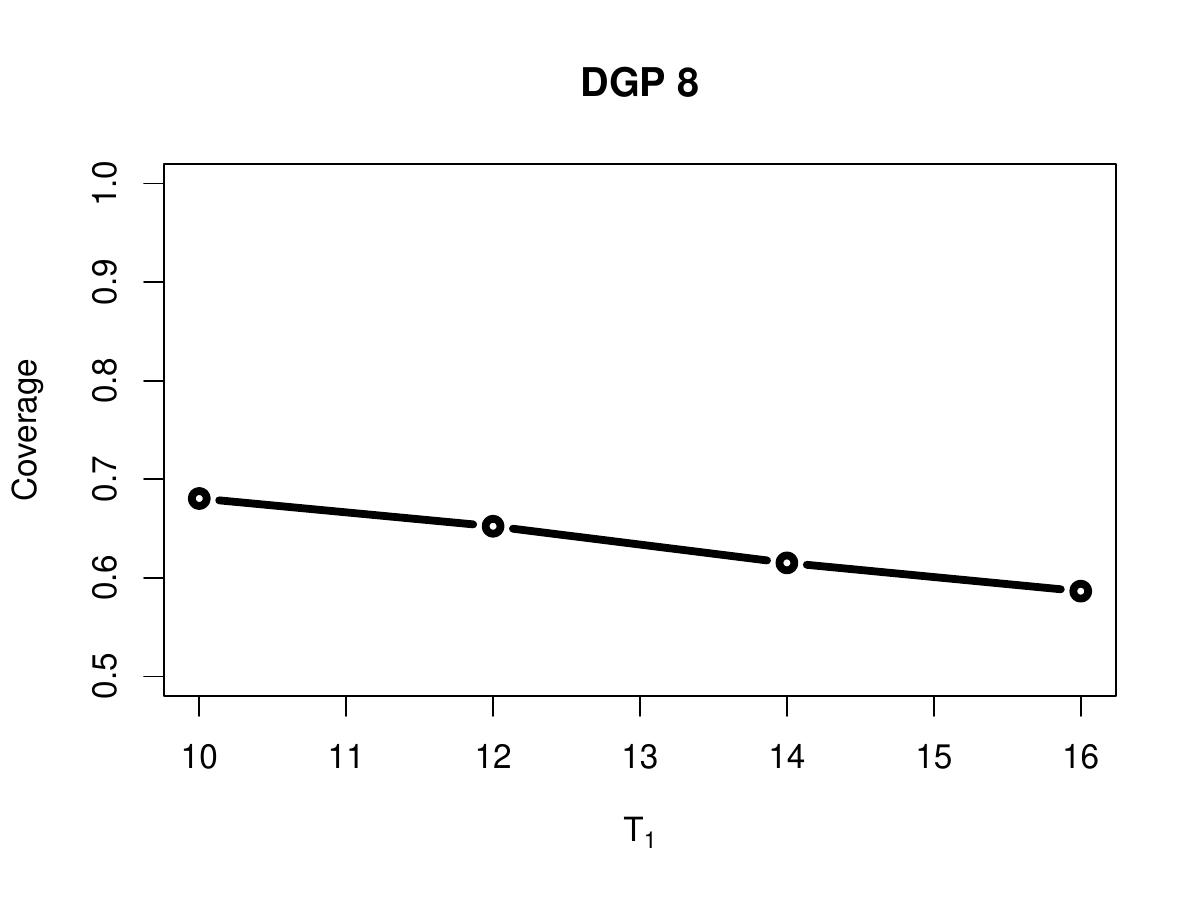}
      	\includegraphics[width=0.32\textwidth,trim = {0 0cm 0 1cm}]{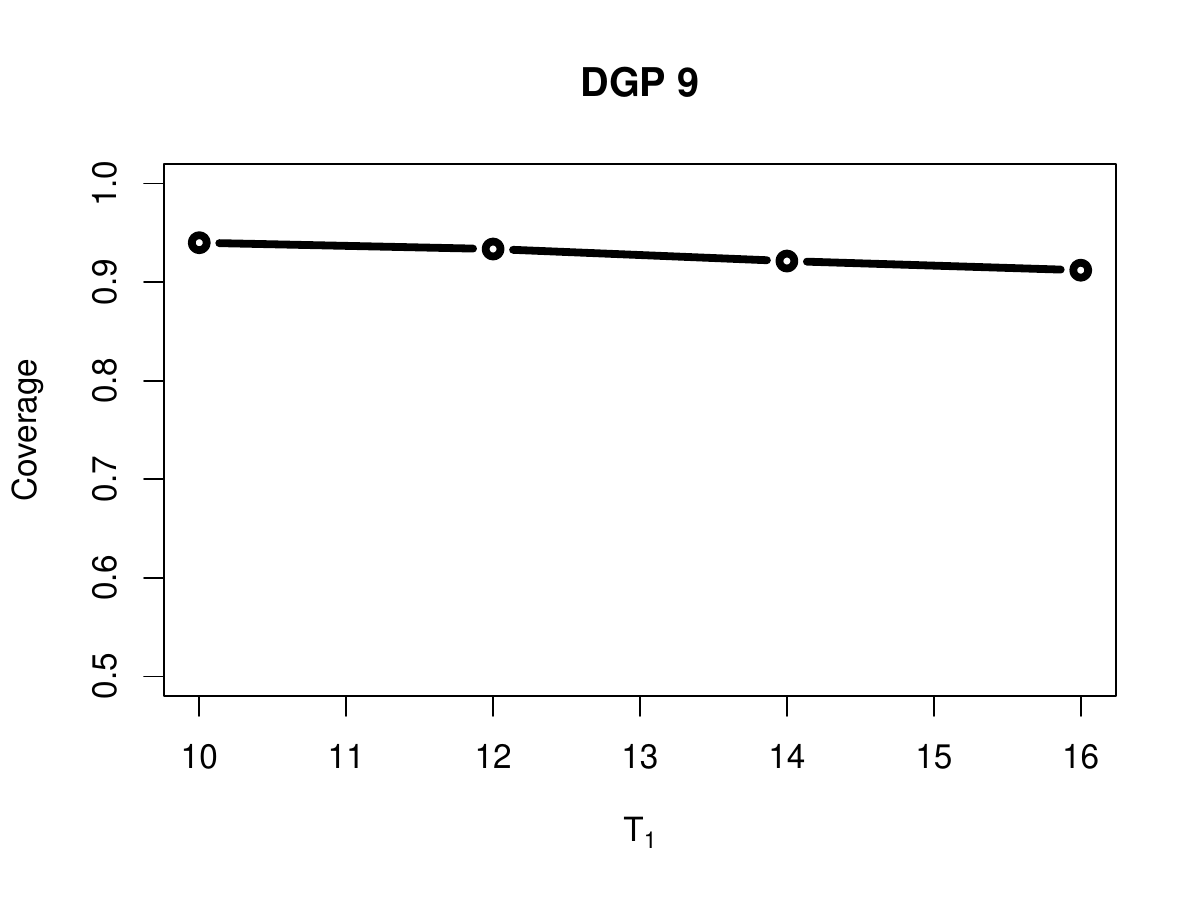}            
   
	\end{center}
\scriptsize{\textit{Notes:} Simulations with 10,000 repetitions. Nominal coverage: $1-\alpha=0.9$. The DGPs are described in Section \ref{sec:simulations}.}	
\end{figure}

\begin{figure}[H]
\caption{Length with small $T_1$}
	\label{fig:length_small_T1}
	\begin{center}
		\includegraphics[width=0.32\textwidth,trim = {0 0cm 0 1cm}]{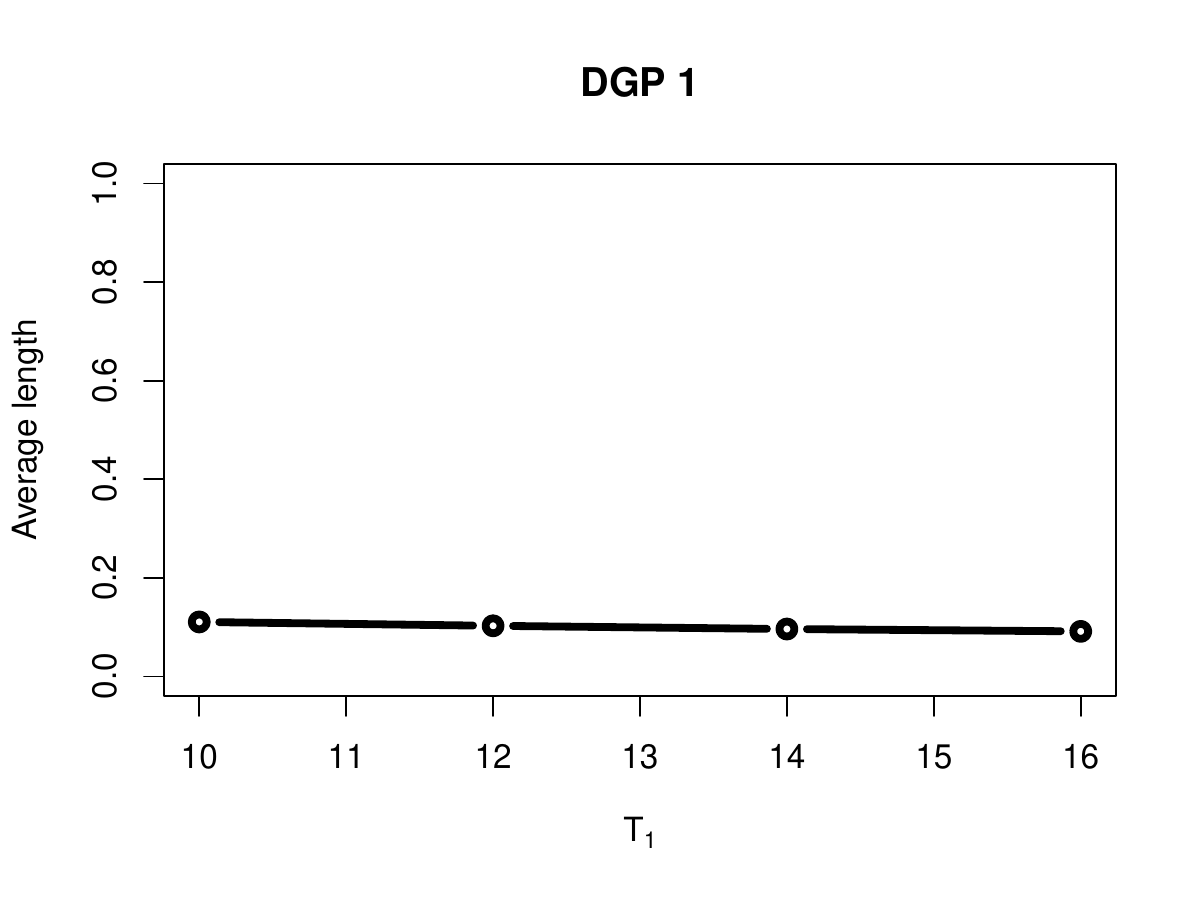}
  		\includegraphics[width=0.32\textwidth,trim = {0 0cm 0 1cm}]{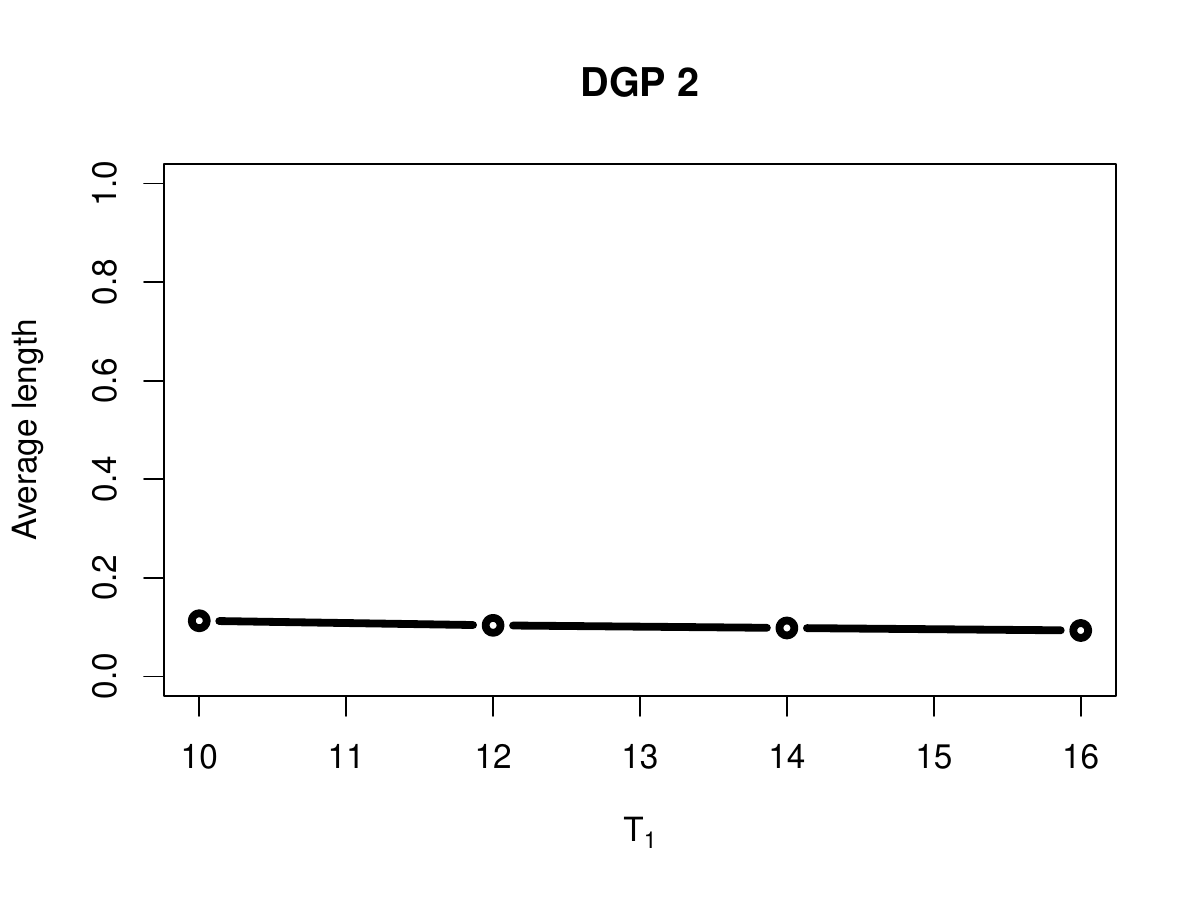}
        \includegraphics[width=0.32\textwidth,trim = {0 0cm 0 1cm}]{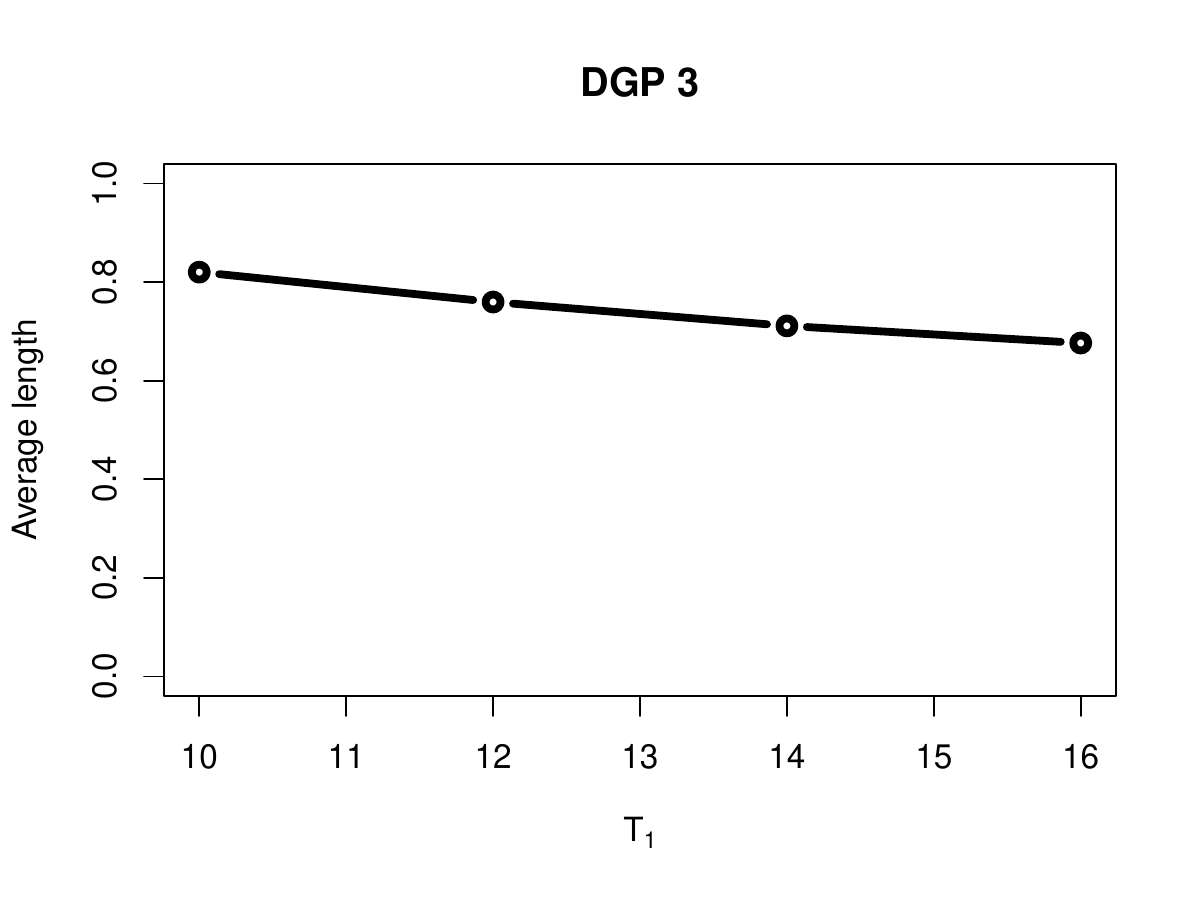}
        
  		\includegraphics[width=0.32\textwidth,trim = {0 0cm 0 1cm}]{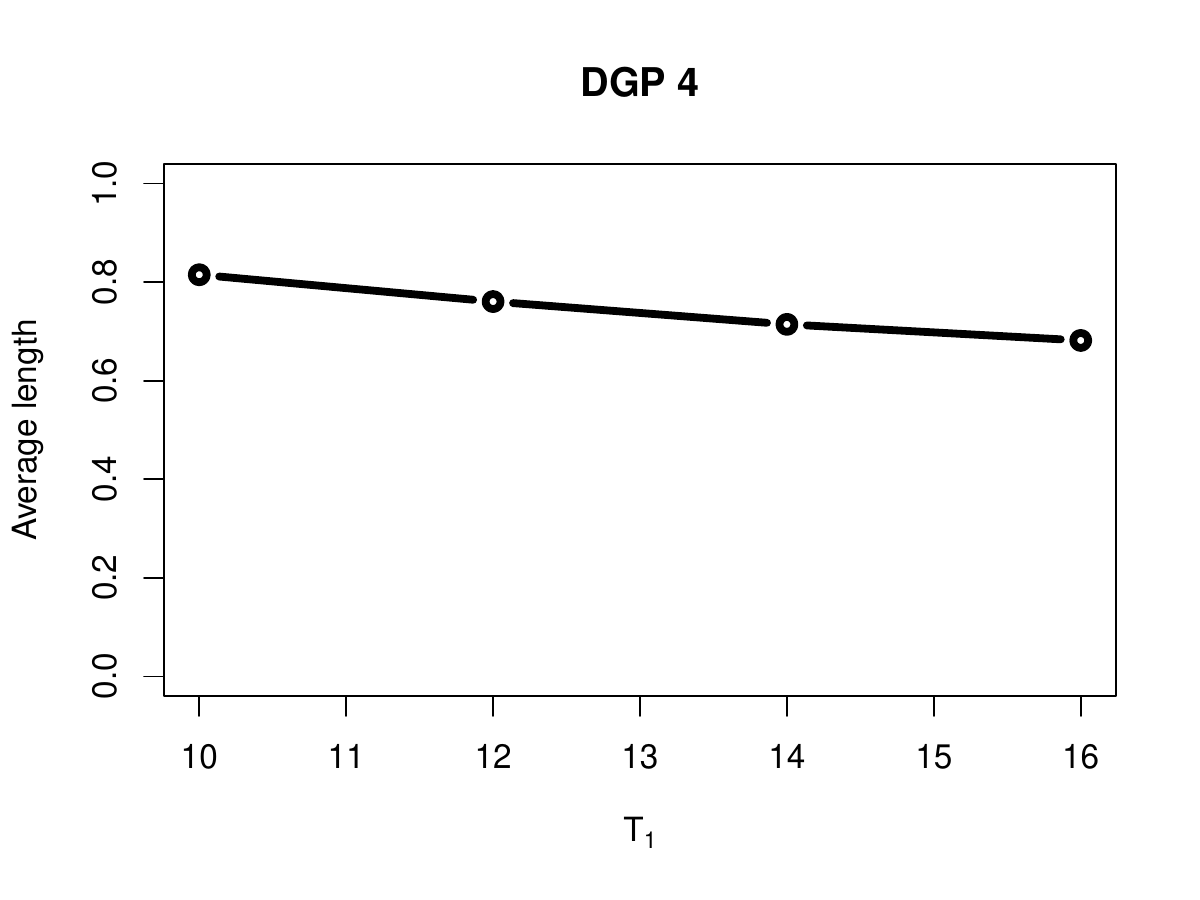}
      	\includegraphics[width=0.32\textwidth,trim = {0 0cm 0 1cm}]{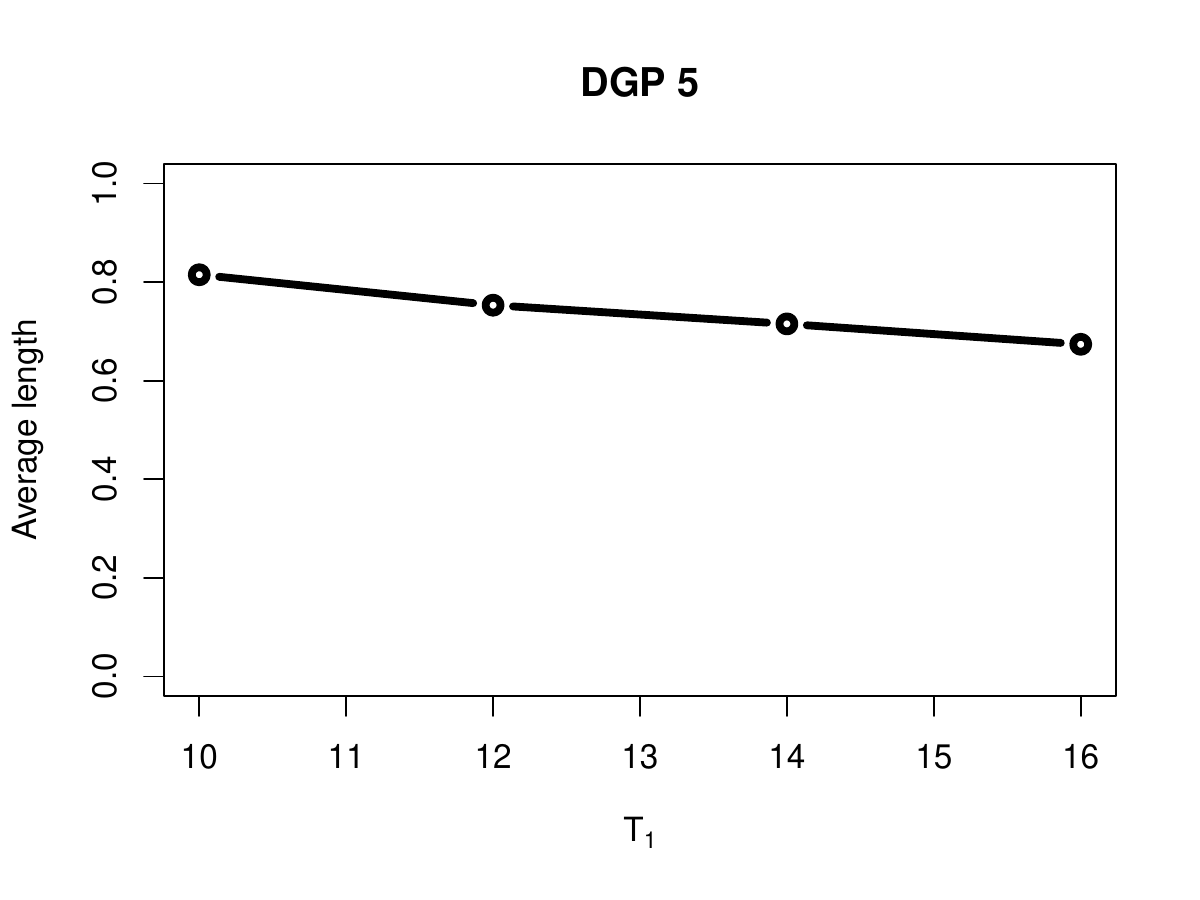}
      	\includegraphics[width=0.32\textwidth,trim = {0 0cm 0 1cm}]{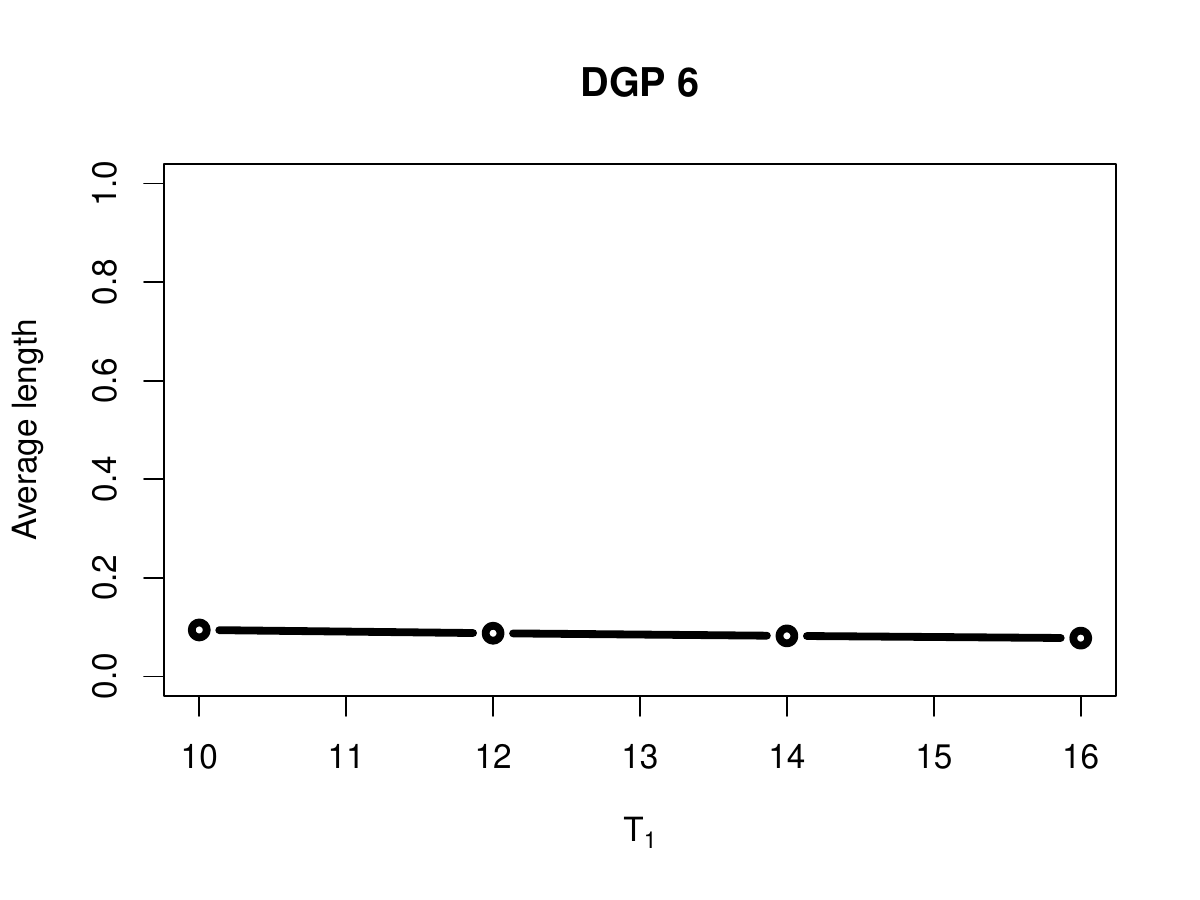}

  		\includegraphics[width=0.32\textwidth,trim = {0 0cm 0 1cm}]{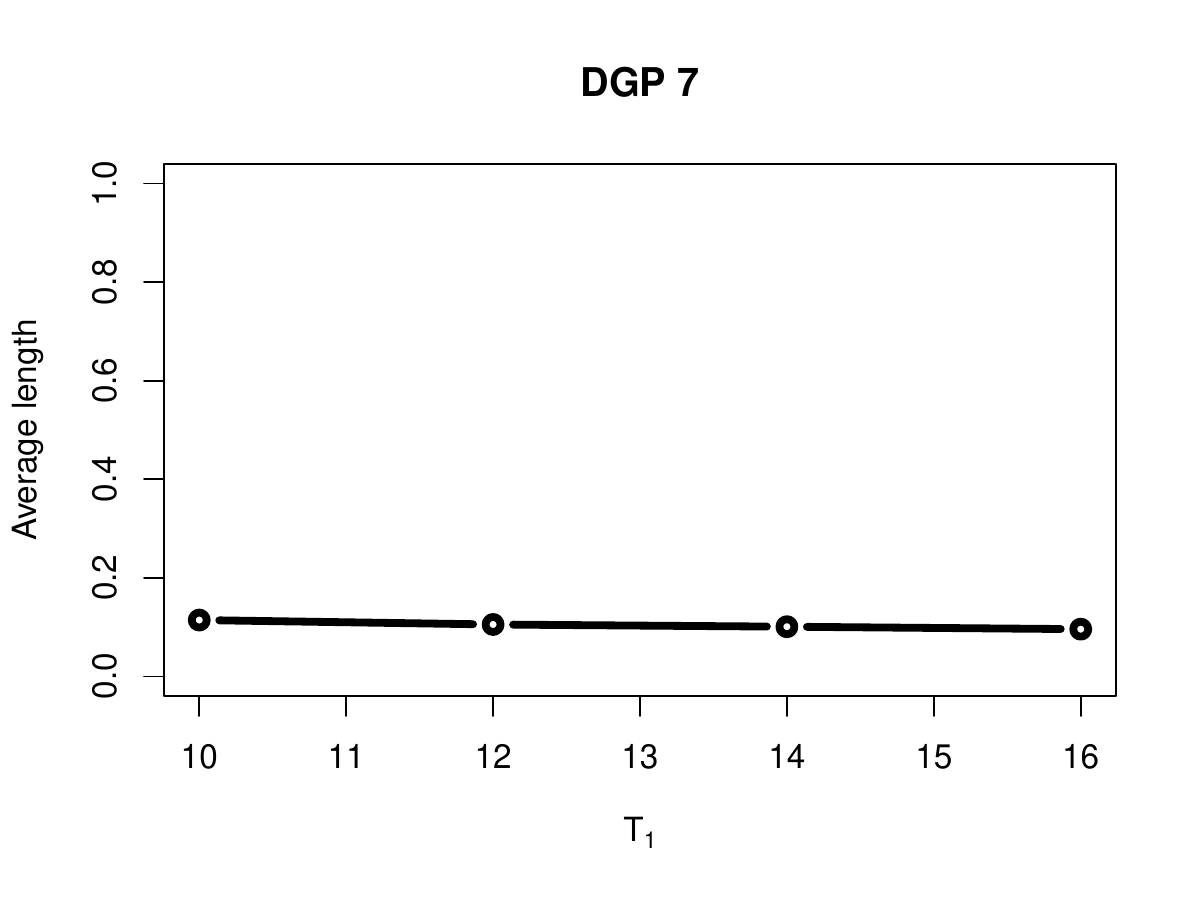}
      	\includegraphics[width=0.32\textwidth,trim = {0 0cm 0 1cm}]{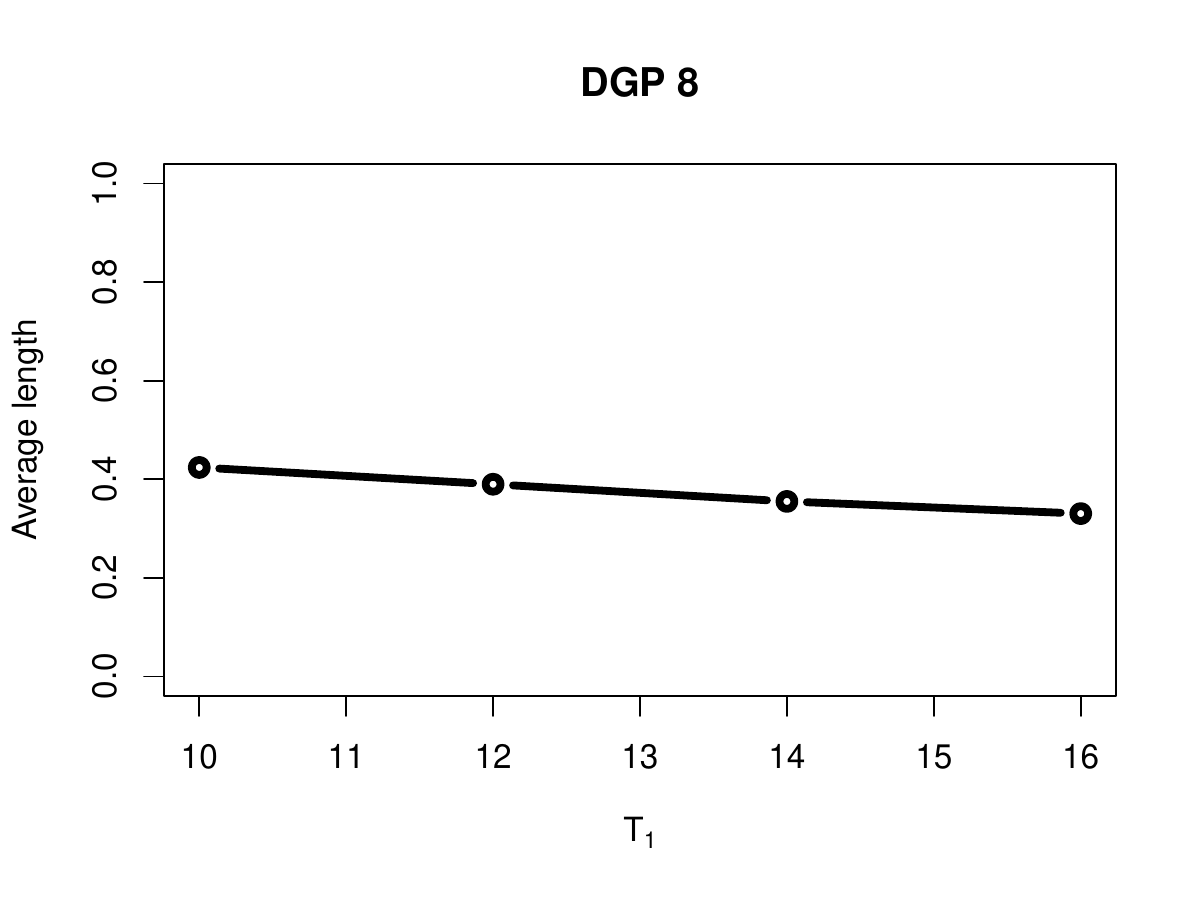}
      	\includegraphics[width=0.32\textwidth,trim = {0 0cm 0 1cm}]{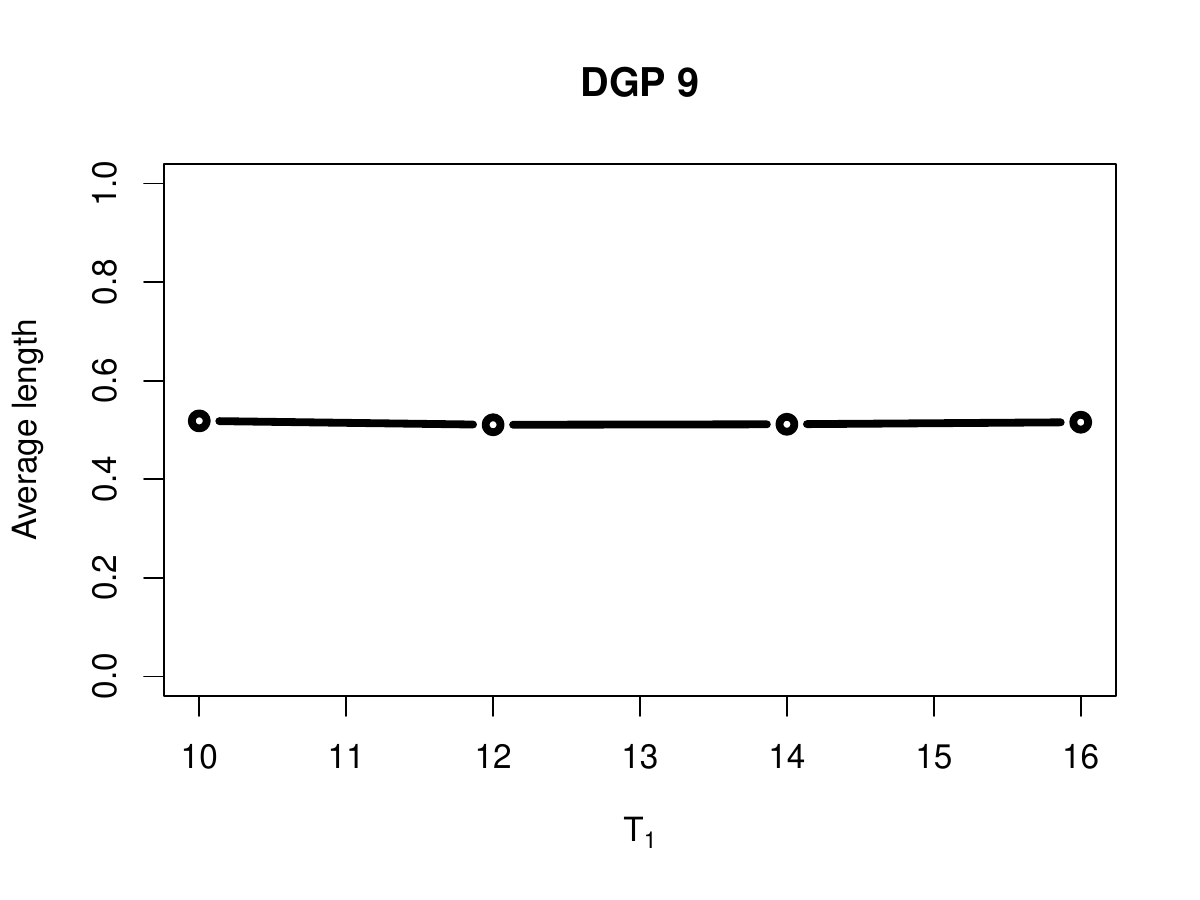}            

	\end{center}
\scriptsize{\textit{Notes:} Simulations with 10,000 repetitions. Nominal coverage: $1-\alpha=0.9$. The DGPs are described in Section \ref{sec:simulations}.}	
\end{figure}

\subsection{Comparison to permutation method of \citet{abadie10sc}}
\label{app:comparison_permutation}

The permutation approach of \citet{abadie10sc} is a popular inference method in SC applications. It is fundamentally different from the $t$-test. First, it is designed to test sharp null hypotheses, such as the null of no effect whatsoever, whereas the $t$-test is designed to make inferences on the ATT. Second, it is typically motivated from a design-based perspective and requires random assignment of the treatment to be valid, whereas the $t$-test is developed within a sampling-based framework and allows for selection on observables and unobservables (see Appendix \ref{app:random_treatment}). 

Given these differences, we focus on size and power in a setting with constant effects to compare both methods. We implement the permutation test as described in Section 3.5 of \citet{abadie2021using}. Figure \ref{fig:comparison_permutation} displays the empirical rejection probabilities for both tests over a grid of alternatives $\alpha_t=a$ for $t\ge T_0$. We consider the same DGPs as in the main text and two different sample sizes $(T_0,T_1,N)=(30,16,14)$ (``small'') and $(T_0,T_1,N)=(150,16,14)$ (``large''). In light of our simulation results, we choose $K=4$ for the small sample case and $K=8$ for the large sample case.

The simulation results show that the $t$-test is more powerful than the permutation approach overall, despite testing a weaker null hypothesis. The permutation test is only comparable to the $t$-test in terms of power under stationarity and correct specification (DGP1 and DGP2) when $T_0=30$. For  the other DGPs and when $T_0=150$, the $t$-test is more powerful, and the permutation test can have essentially no power in some cases. We note that while the permutation test typically underrejects under our specific DGPs, \citet{hahn17inference} show that this test can also substantially overreject under DGPs with sampling-based variation. 

\begin{figure}[H]
\caption{Comparison to permutation approach}
	\label{fig:comparison_permutation}
	\begin{center}
		\includegraphics[width=0.32\textwidth,trim = {0 0cm 0 1cm}]{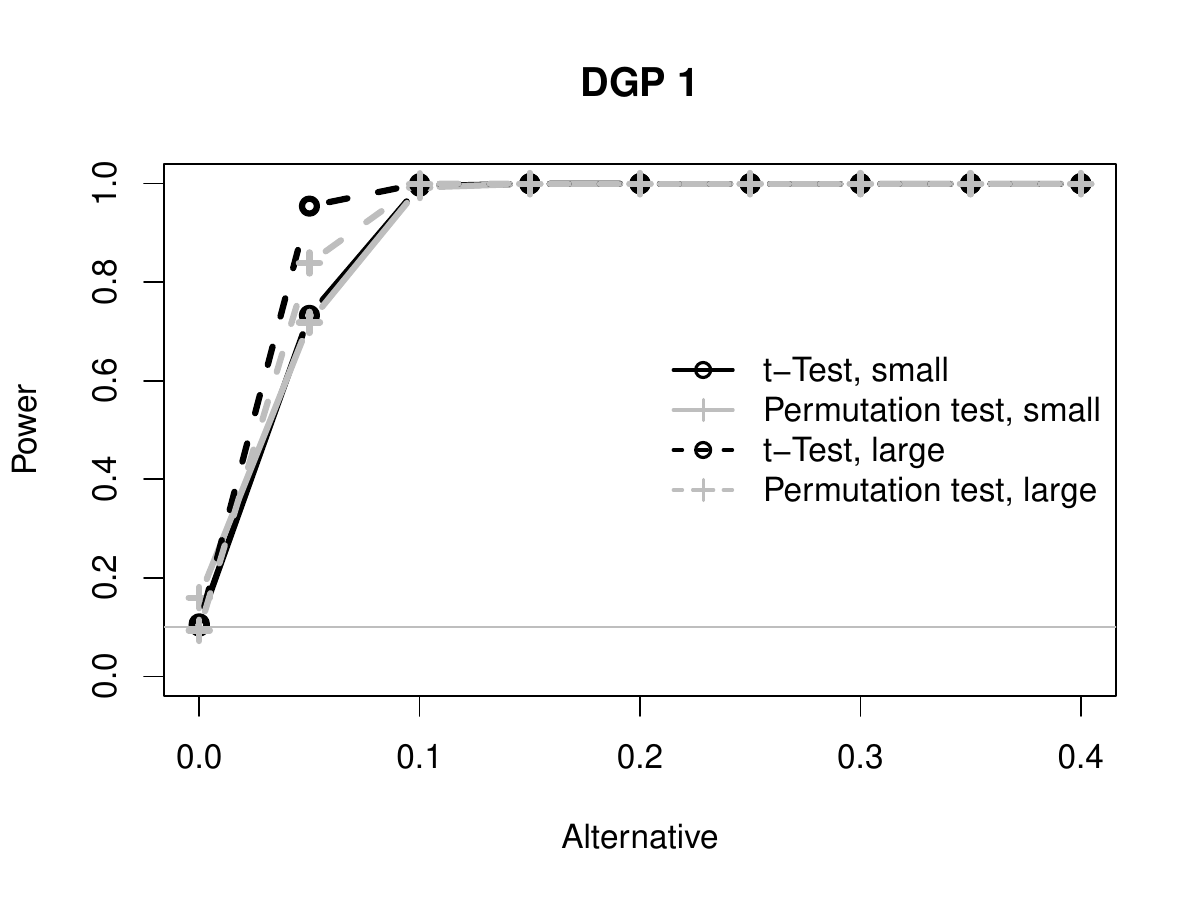}
  		\includegraphics[width=0.32\textwidth,trim = {0 0cm 0 1cm}]{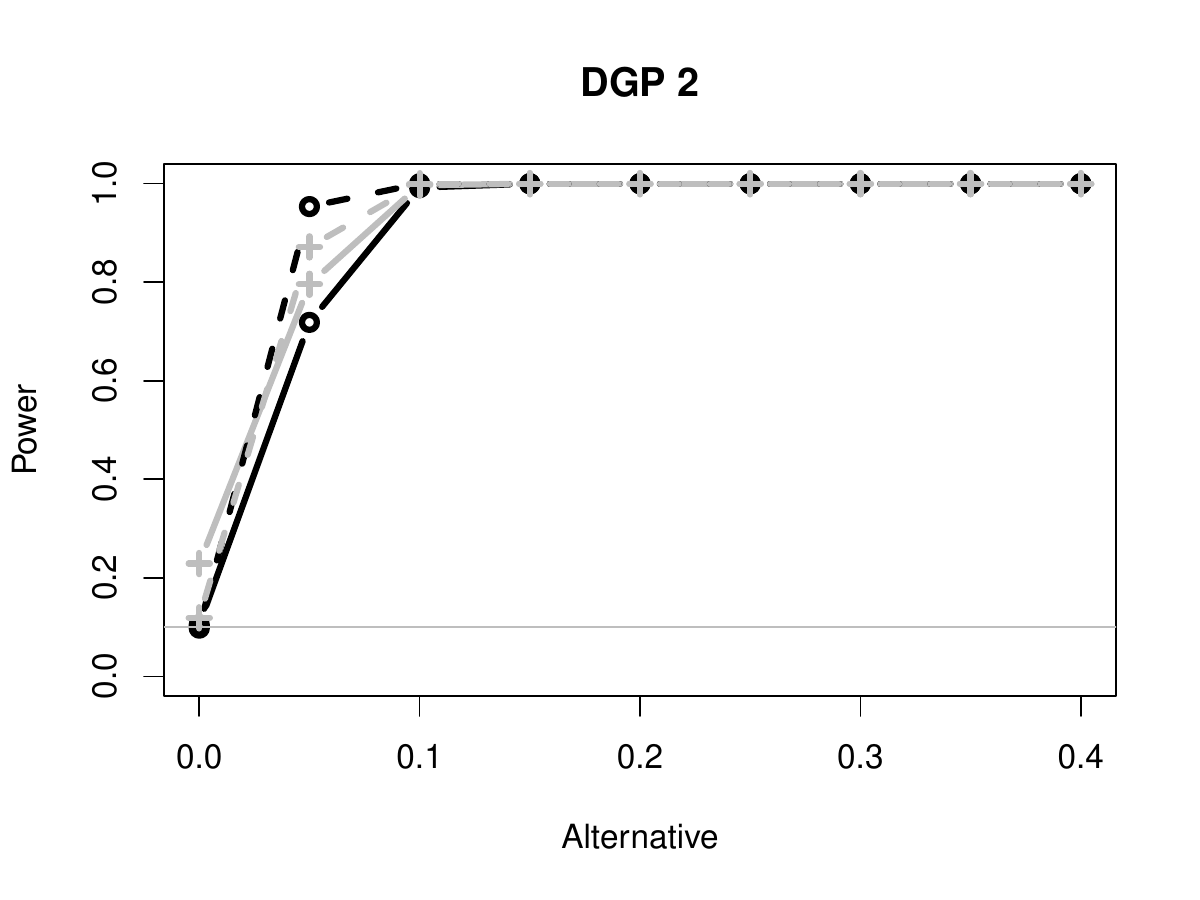}
        \includegraphics[width=0.32\textwidth,trim = {0 0cm 0 1cm}]{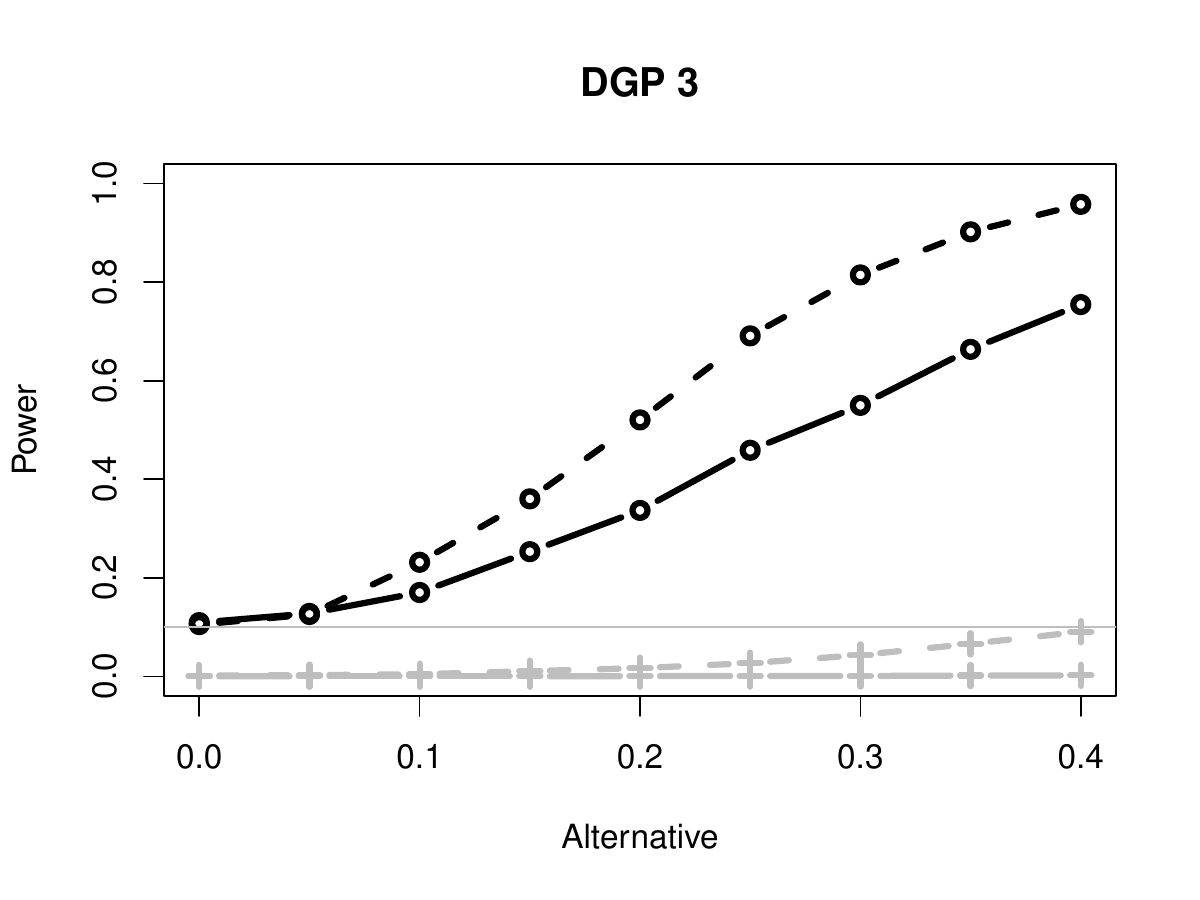}
        
  		\includegraphics[width=0.32\textwidth,trim = {0 0cm 0 1cm}]{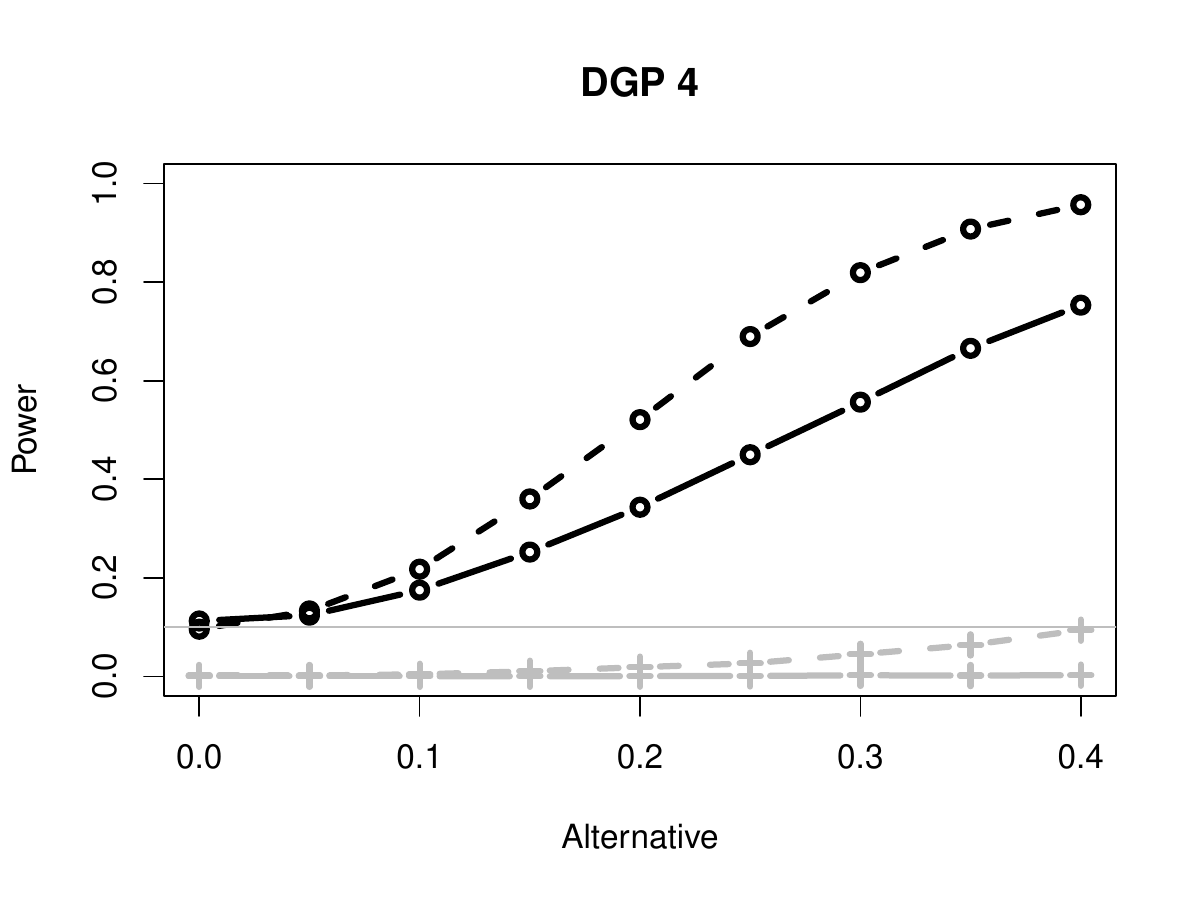}
      	\includegraphics[width=0.32\textwidth,trim = {0 0cm 0 1cm}]{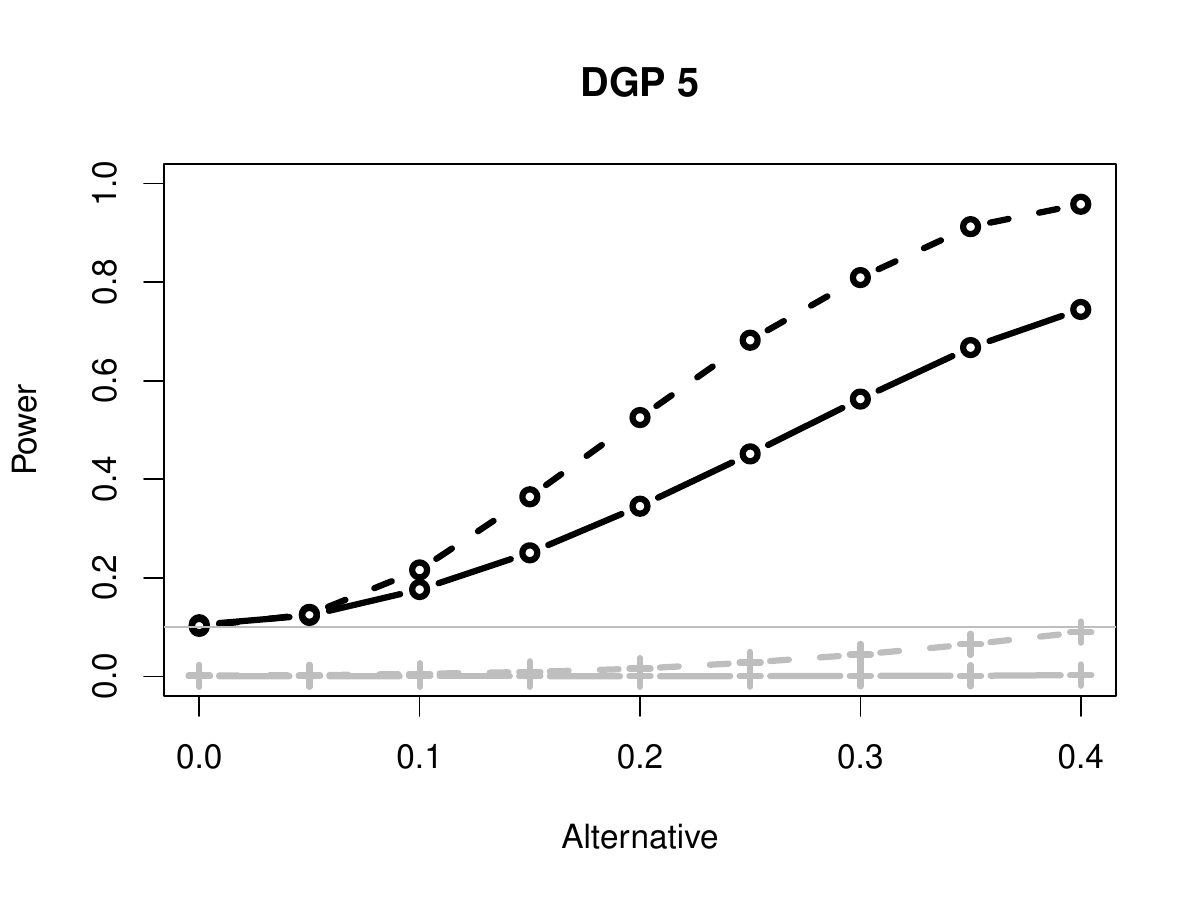}
      	\includegraphics[width=0.32\textwidth,trim = {0 0cm 0 1cm}]{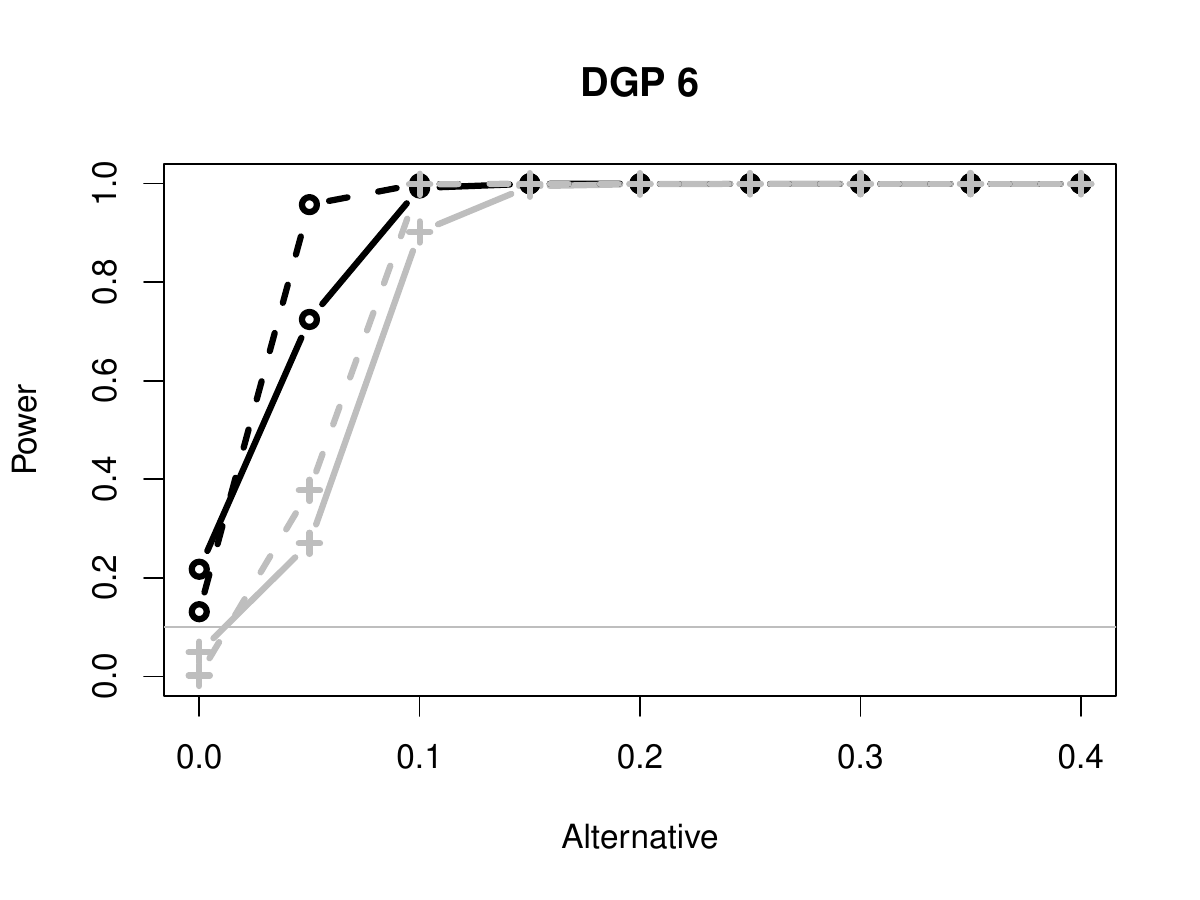}

  		\includegraphics[width=0.32\textwidth,trim = {0 0cm 0 1cm}]{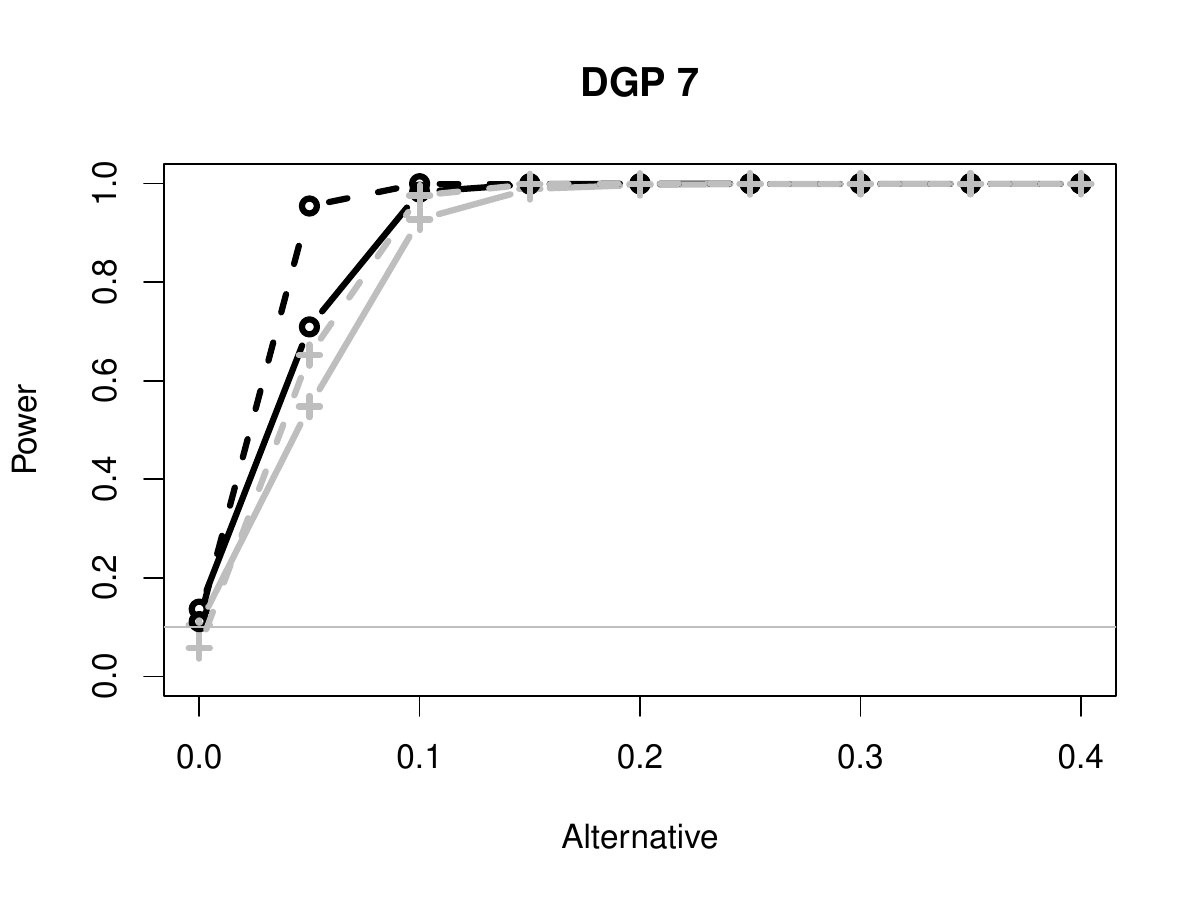}
      	\includegraphics[width=0.32\textwidth,trim = {0 0cm 0 1cm}]{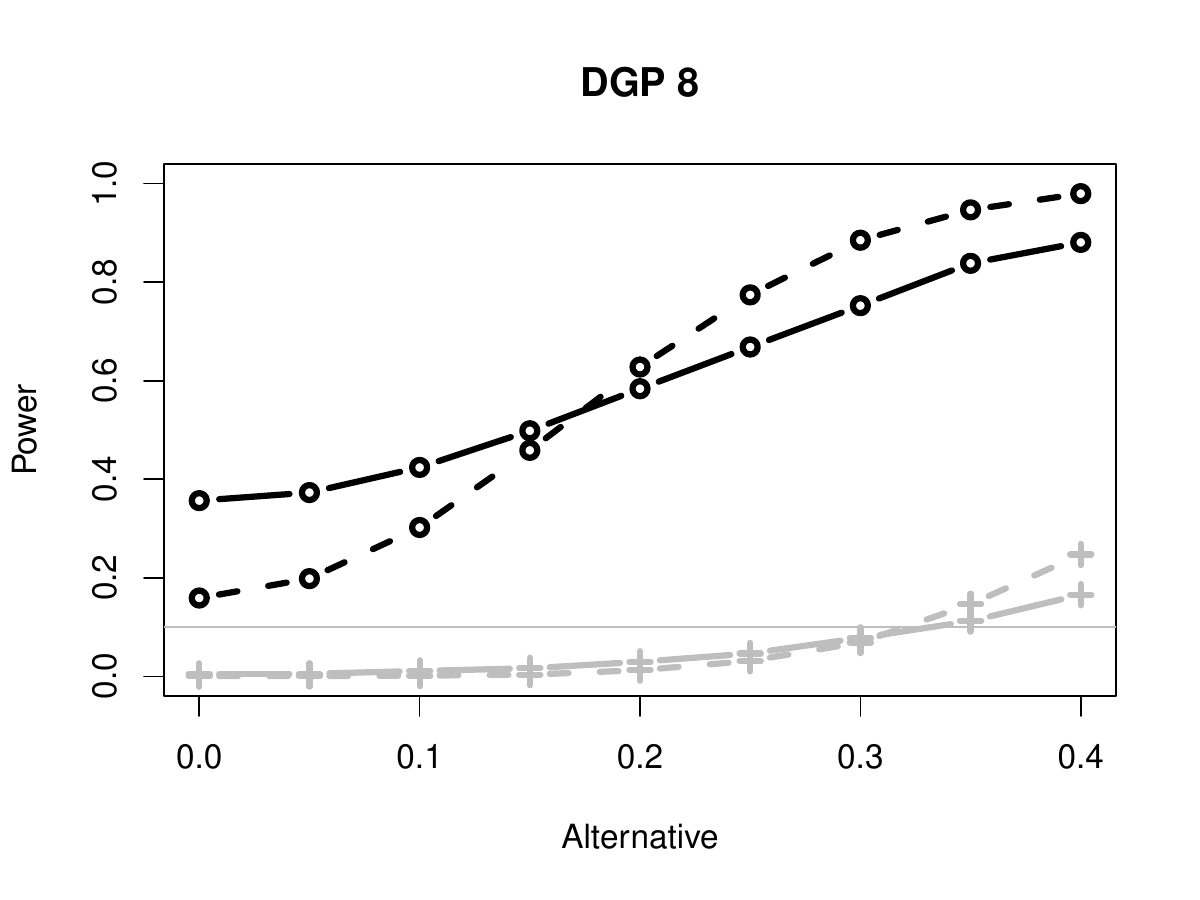}
      	\includegraphics[width=0.32\textwidth,trim = {0 0cm 0 1cm}]{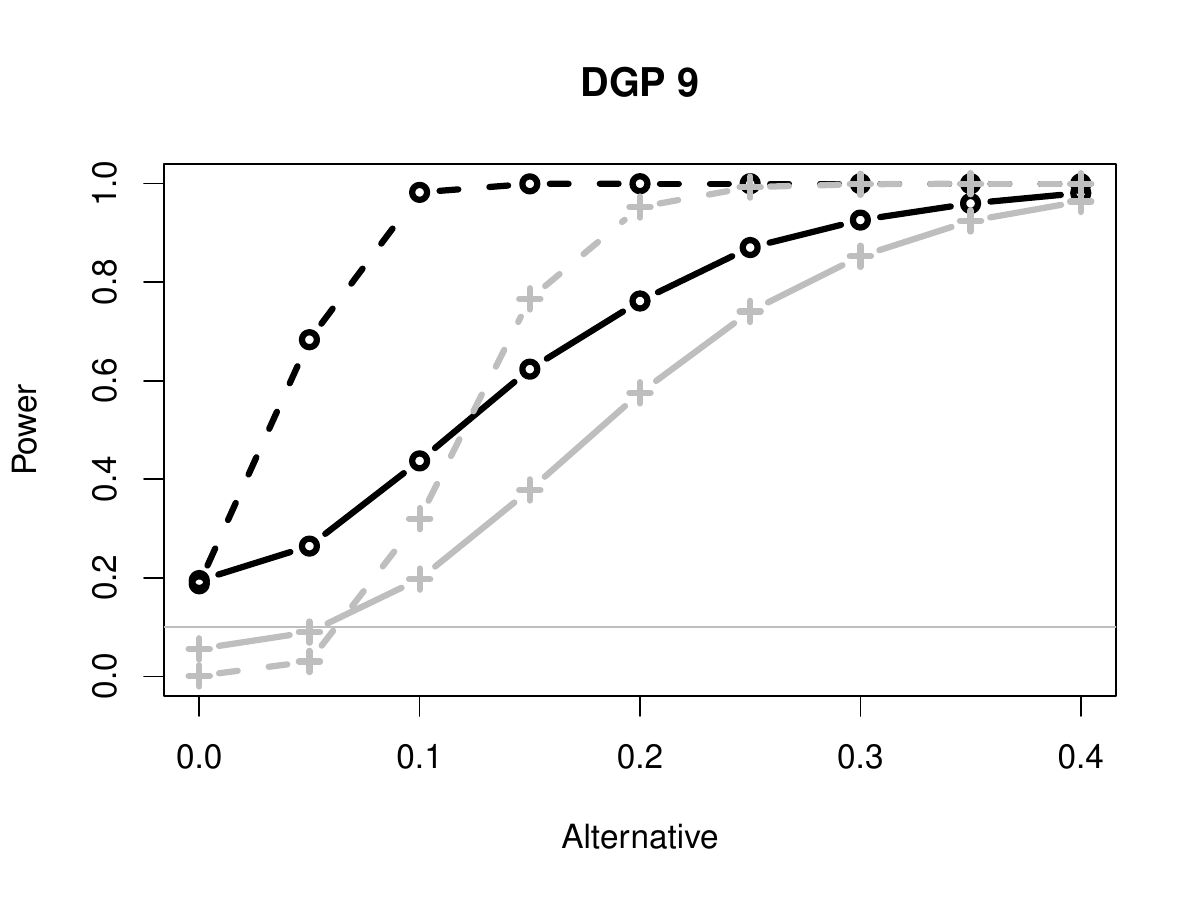}            
        
	\end{center}
\scriptsize{\textit{Notes:} Simulations with 10,000 repetitions. Nominal level: $\alpha=0.1$. Small: $(T_0,T_1,N)=(30,16,14)$ and $K=4$ for the $t$-test.  Large: $(T_0,T_1,N)=(150,16,14)$ and $K=8$ for the $t$-test. The DGPs are described in Section \ref{sec:simulations}.}	
\end{figure}

\subsection{The impact of time-varying weights}
\label{app: time-varying weights}

Here we illustrate the impact of the variability in the weights on the coverage and average length of the confidence intervals. We consider two scenarios. 

\begin{enumerate}
\item \textit{Correct specification.} The weights satisfy the SC constraints in all periods, $w_t\in \mathcal{W}^{SC}$ for all $t$. Specifically, we consider setting where the weights follow a Markov switching process with two regimes we considered in  Section \ref{sec:simulations}, a SC regime and a DID regime.\footnote{Markov-switching processes have a long tradition in economics \citep[e.g.,][]{hamilton1989new,baele2020flights,sims2006were,board2021learning}.} In the SC regime, the weights correspond to $w^{\text{SC}}$ (the SC weights estimated based on the application); in the DID regime, the weights correspond to $w^{\text{DID}}$ ($w^{\text{DID}}=(1/J,\dots,1/J)$). Under this DGP, a larger probability of staying in the same regime, $p_{\text{stay}}$, leads to more persistence and wider confidence intervals.
\item \textit{Misspecification.} The weights do not satisfy the SC constraints in all periods. Specifically, we consider a setting where we add Gaussian noise to $w^{\text{SC}}$, $w_t=w^{\text{SC}}+\zeta_t$, where $\zeta_t\overset{iid}\sim N(0,\sigma_w^2 I)$. The larger the variability in the weights, $\sigma_w$, the wider the confidence intervals.
\end{enumerate}

Figure \ref{fig:time_varying_weights} plots the coverage and average length of the 90\% confidence intervals under both scenarios. We report the length relative to the case where the weights are time-invariant and equal to $w^{\text{SC}}$. The simulation results confirm our theory. The confidence intervals have good coverage accuracy irrespective of the variability in the weights.\footnote{There is some undercoverage under correct specification when $p_{\text{stay}}=0.9$, because $p_\text{stay}$ induces additional persistence.} While it does not affect coverage, variability in the weights increases the length of the confidence intervals. The increase in length is more pronounced under misspecification because the variability in the weights is not restricted by the SC constraints in this case. 

\begin{figure}[ht]
\caption{Coverage and relative length of confidence intervals with time-varying weights}
	\label{fig:time_varying_weights}
	\begin{center}
		\includegraphics[width=0.4\textwidth,trim = {0 1cm 0 1cm}]{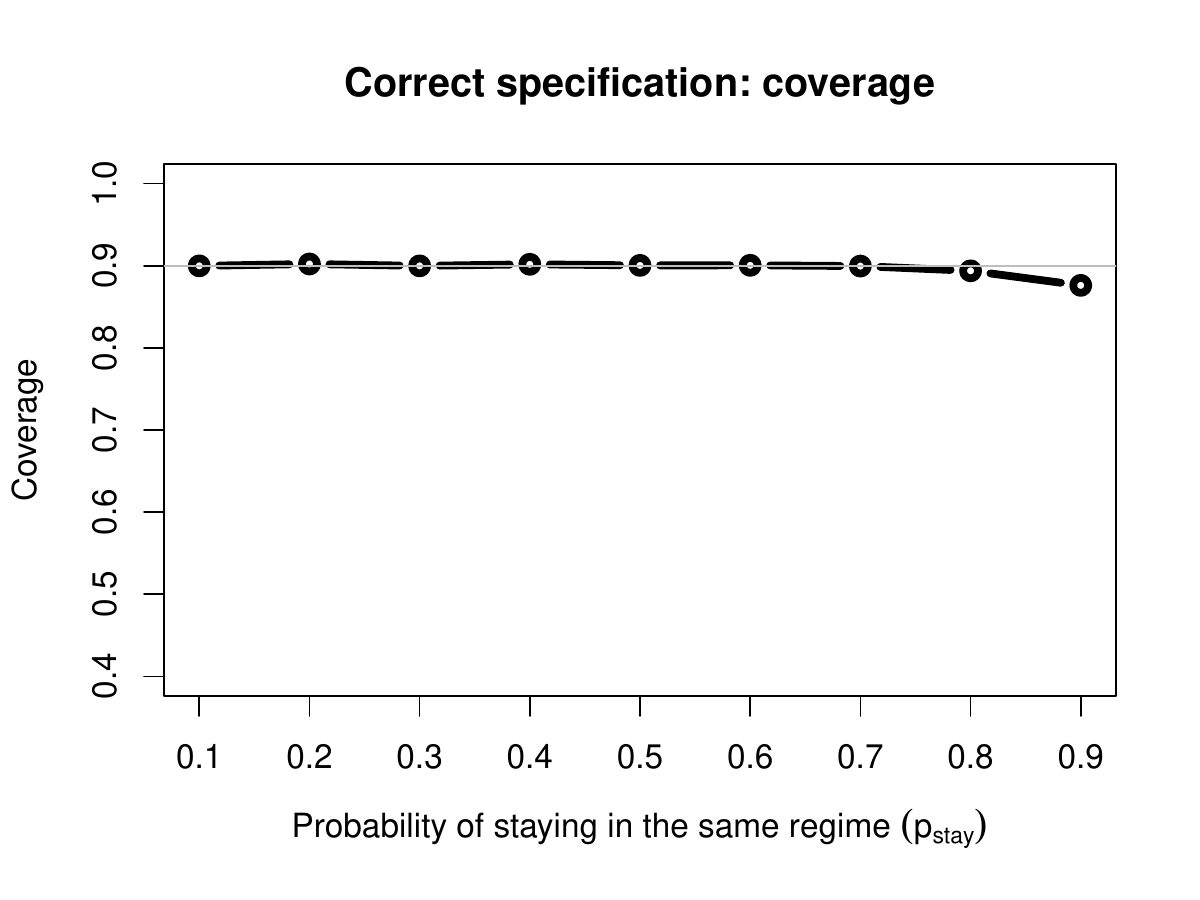}
		\includegraphics[width=0.4\textwidth,trim = {0 1cm 0 1cm}]{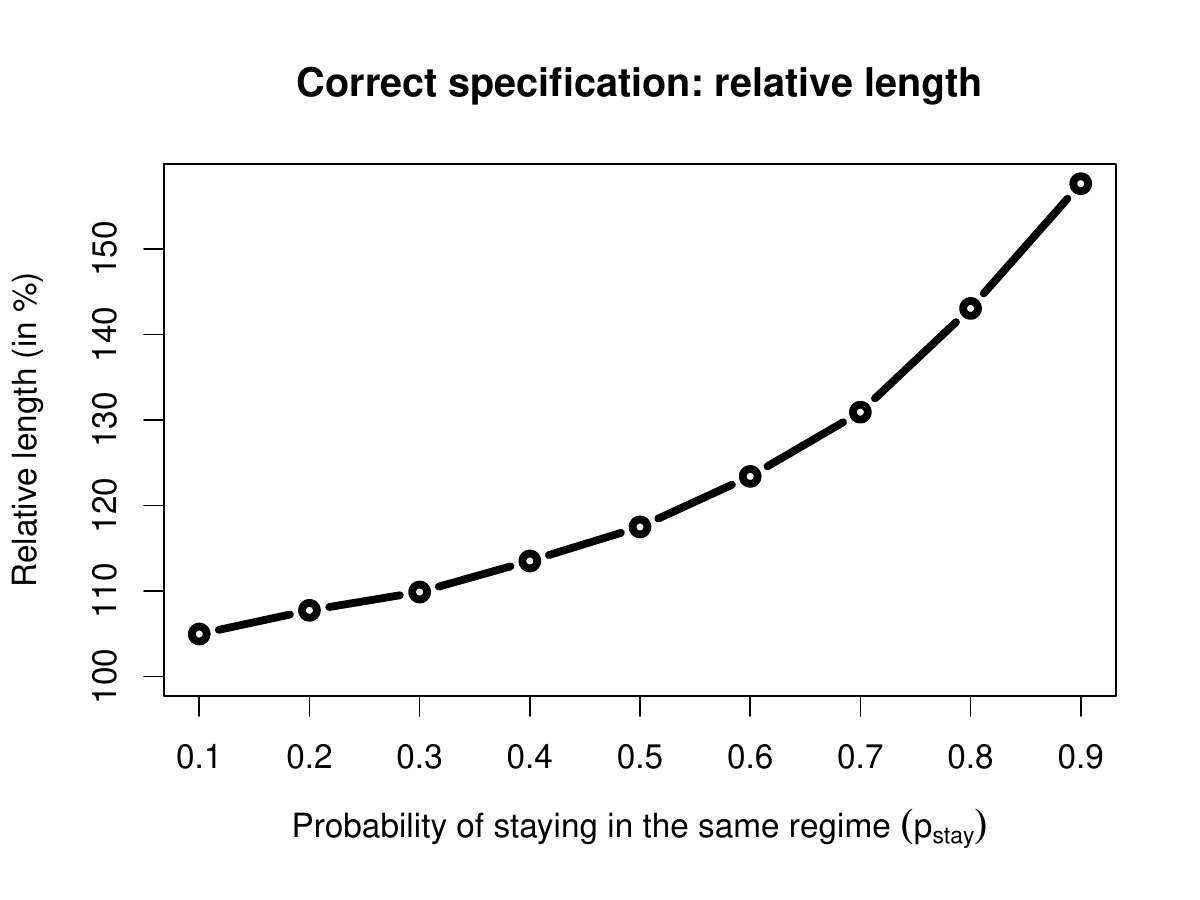}
		\includegraphics[width=0.4\textwidth,trim = {0 1.75cm 0 0cm}]{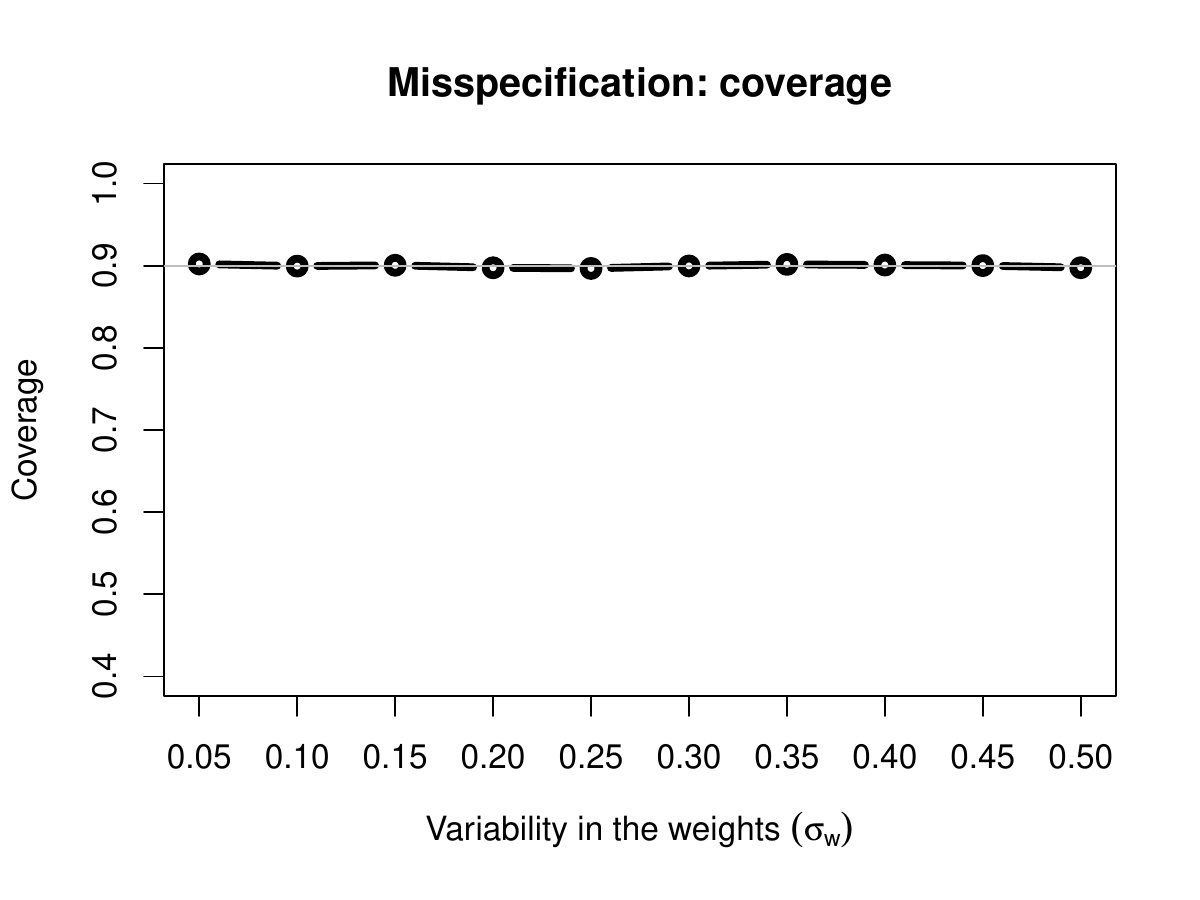}
		\includegraphics[width=0.4\textwidth,trim = {0 1.75cm 0 0cm}]{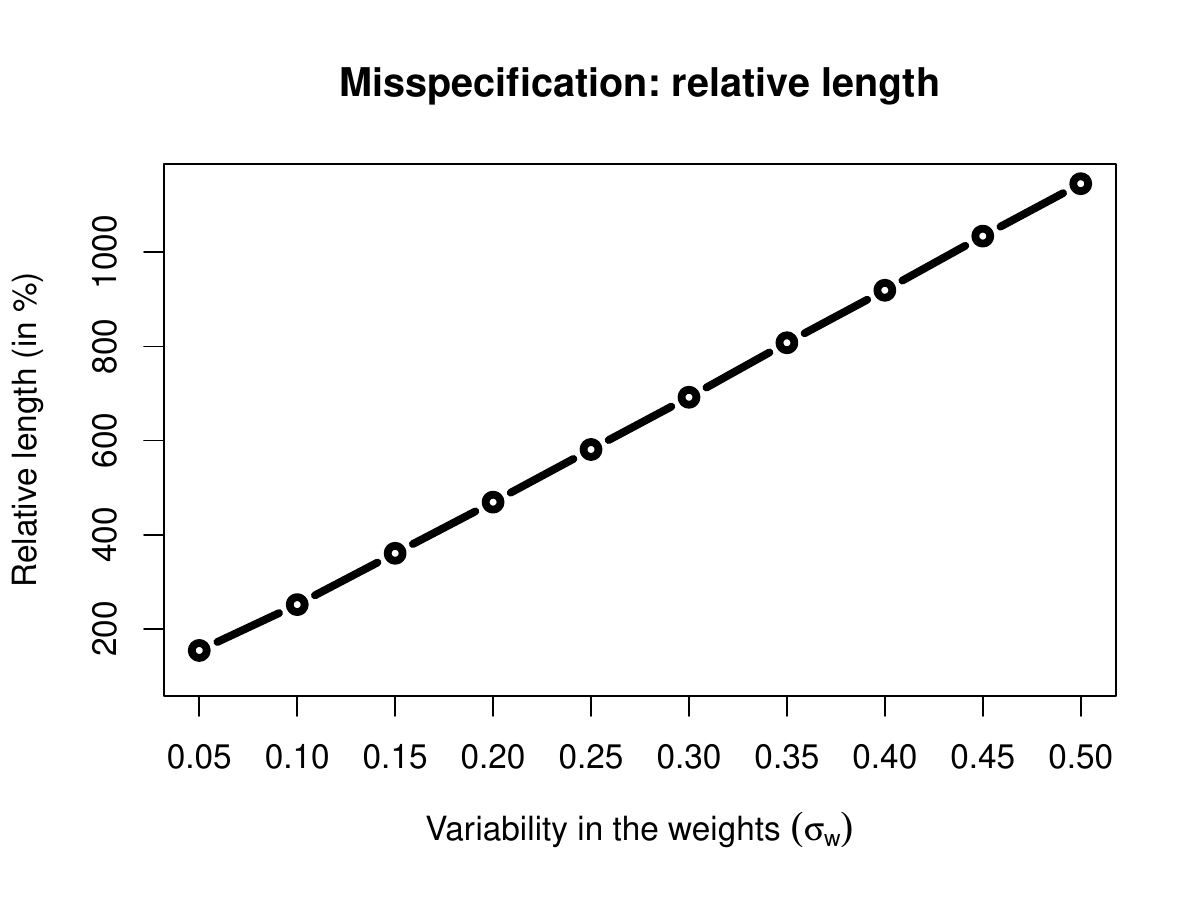}
	\end{center}
\scriptsize{\textit{Notes:} Simulations with 50,000 repetitions. $Y_{0t}(0)=\sum_{i=1}^Nw_{it}Y_{it}(0)+u_t$, $Y_{it}(0)\overset{iid}\sim N(i,1)$, $\{u_t\}$ is a Gaussian AR(1) process with coefficient $0.31$, and $(T_0,T_1,N)=(30,16,14)$ as in the empirical application. The specification of the weights is described in the text. Coverage and relative  length are reported for 90\% confidence intervals. Debiasing is based on $K=3$.}	
\end{figure}

\end{document}